	\let\WithReply\undefined
\let\saveopenmath\(
\keywords{timed systems, timing parameters, language preservation, trace preservation, L/U-PTA}
\tikzstyle{location}=[minimum size=12pt, circle, fill=blue!20, 
\tikzstyle{invariant}=[rectangle, draw=black, text=black, yshift=10]
\def\cl@chapter{}
\crefname{line}{\text{line}}{\text{lines}} 
  \newtheorem{theorem}{Theorem}[section]
  \newtheorem{lemma}[theorem]{Lemma}
  \newtheorem{definition}[theorem]{Definition}
  \newtheorem{remark}[theorem]{Remark}
  \newtheorem{example}[theorem]{Example}
  \newtheorem{proposition}[theorem]{Proposition}}{}
\let\c@definition\c@theorem
\let\c@lemma\c@theorem
\let\c@corollary\c@theorem
\let\c@remark\c@theorem
\let\c@example\c@theorem
\let\c@proposition\c@theorem
\begin{document}

\title[Language Preservation Problems in Parametric Timed Automata]{Language Preservation Problems in Parametric Timed Automata}
\titlecomment{{\lsuper*}This work is partially supported by the ANR national research program
        PACS (ANR-14-CE28-0002),
by European projects ERC EQualIS (308087) and FET Cassting (601148),
by the ANR-NRF French-Singaporean research program ProMiS (ANR-19-CE25-0015)
	and by
	ERATO HASUO Metamathematics for Systems Design Project (No.\ JPMJER1603), JST.
This work is an extension of~\cite{AM15}.
}

\author[\'E.~Andr\'e]{\'Etienne~Andr\'e}	
\address{Université Paris 13, LIPN, CNRS, UMR 7030, F-93430, Villetaneuse, France}
\address{JFLI, CNRS, Tokyo, Japan}
\address{National Institute of Informatics, Tokyo, Japan}
\address{Université de Lorraine, CNRS, Inria, LORIA, Nancy, France}
\urladdr{\url{https://lipn.univ-paris13.fr/~andre/}}  

\author[D.~Lime]{Didier~Lime}	
\address{École Centrale de Nantes, LS2N, CNRS, UMR 6004, Nantes, France}	
\urladdr{\url{http://pagesperso.ls2n.fr/~lime-d/}}  

\author[N.~Markey]{Nicolas~Markey}	
\address{IRISA, CNRS \& Inria \& Univ. Rennes, France}	
\urladdr{\url{http://people.irisa.fr/Nicolas.Markey/}}  


\maketitle

\ifcomments
	\textcolor{red}{\textbf{This is the version with comments. To disable comments, modify line~3 in the \LaTeX{} source.}}
\fi


\ifcomments
\fi


\begin{abstract}
	Parametric timed automata (PTA) are a powerful formalism to model and
        reason about concurrent systems with some unknown timing delays. In
        this paper, we address the (untimed) language- and trace-preservation
        problems: \emph{given a reference parameter valuation, does there exist another parameter valuation with the same untimed language, or with the same set of traces?}
        We~show that these problems are undecidable both for
        general PTA and for the restricted class of L/U-PTA, even for integer-valued parameters, or over bounded time.
        On~the other hand, we~exhibit decidable subclasses: 1-clock PTA, and 1-parameter deterministic L-PTA and U-PTA. 
        We~also consider robust versions of these problems, where
		we~additionally require that the language be preserved for all valuations between the reference valuation and the new valuation.
\end{abstract}


\maketitle

\section{Introduction}

\paragraph{Timed Automata.}
Timed Automata (TA hereafter) were introduced in the 1990's
\cite{AD90} as an extension of finite automata with \emph{clock
  variables}, which can be used to constrain the delays between
transitions. Despite this flexibility, TA~enjoy efficient algorithms
for checking reachability (and many other properties), which makes
them a well-suited model for reasoning about real-time systems.

In~TA, clock variables are compared to (integer) constants in order to
allow or disallow certain transitions. The behaviour of a TA may
heavily depend on the \emph{exact} values of the constants, and slight
changes in any constant may give rise to very different
behaviours. In~many cases however, it~may be desirable to optimise the
values of some of the constants of the automaton, in order to exhibit
better performances. The question can then be posed as follows:
\emph{given a TA and some of its integer constants, does there exist
  other values of these constants for which the TA has the exact set
  of (untimed) behaviours?} We~call this problem the
\emph{language-preservation problem}.

A special case of this problem occurs naturally in recent approaches
for dealing with \emph{robustness} of timed automata
\cite{DDMR08,Sankur-MFCS11,San13}. The question asked there is whether
the behaviour of a timed automaton is preserved when the clock
constraints are slightly (parametrically) enlarged.  In~most of those
cases, the existence of a parametric enlargement for which the
behaviours are the same as in the original TA has been proved
decidable.

For the general problem however, the decidability status remains open. To~the
best of our knowledge, the only approach to this problem is a procedure
(called the \emph{inverse method} \cite{ACEF09}) to compute a dense set of
parameter valuations around a reference valuation~$\pval_0$.

\paragraph{Parametric Timed Automata.}
In this paper, we address the language-\hskip0pt preservation problem using
\emph{Parametric Timed Automata}~(PTA) \cite{AHV93}. A~PTA is a TA in which
some of the numerical constants in clock constraints are replaced by symbolic constants (a.k.a.\ parameters), whose value is not known a~priori. The classical problem
(sometimes called the \emph{EF-emptiness problem}) in PTA asks whether a given target
location of a PTA is reachable for some valuation of the parameter(s). This
problem was proven undecidable in various settings: for integer parameter
valuations \cite{AHV93,BBLS15}, for bounded rational valuations \cite{Miller00}, or with only strict constraints (no equality nor closed inequality) \cite{Doyen07}.
The proofs of these results exist in many different flavours, with various bounds on the number of parameters and clocks needed in the reductions; in contrast, limiting the number of clocks (see \eg{} \cite{AHV93,BO14,BBLS15}) yields decidability (see \cite{Andre15} for a survey).

The only non-trivial syntactic subclass of PTA
with decidable 
EF-emptiness problem is the class of
L/U-PTA \cite{HRSV02}. These models have the following constraint: each
parameter may only be used either always as a lower bound in the clock
constraints, or always as an upper bound.
For those models, the problems of the emptiness, universality and finiteness (for integer-valued parameters) of the set of parameters under which a target location is
reachable, are decidable \cite{HRSV02,BlT09}. In~contrast, the AF-emptiness problem
(``\emph{does there exist a parameter valuation for which a given location is   eventually visited along any run?}'')\ is undecidable for L/U-PTA \cite{JLR14}.
The EG-emptiness problem (``\emph{does there exist a parameter valuation for which a maximal path remains permanently within a given set of locations?}'')\ exhibits a thin border between decidability and undecidability: the problem is decidable if and only if (rational-valued) parameters are chosen in a closed interval~\cite{ALime17}.
The full TCTL logic-emptiness (``\emph{does there exist a parameter valuation for which a given TCTL formula holds?}'')\ is undecidable for the simpler class of U-PTA~\cite{ALR18FORMATS}, where parameters can only be used as upper bounds in clock constraints.

\paragraph{Our Contributions.}
In~this paper, we first prove that the language-preservation problem
(and various related problems) is undecidable in most
cases (including for L/U-PTA, or in the time-bounded setting). While it might not look surprising given the numerous 
undecidability results about PTA, it~contrasts with the decidability
results proved so far for robustness of~TA. In~the parametrized
approaches to robustness (where the aim is to decide if the language
of a timed automaton is preserved under a parametrized perturbation)
\cite{DDMR08,Sankur-MFCS11,San13}, the use of the parameter is much
more constrained than what we allow in this paper; this is what makes
parametrized robustness analysis decidable.

We then devise a semi-algorithm that solves the language-
and trace-preservation 
problems (and actually synthesizes all parameter valuations yielding the same
untimed language (or~trace) as a given reference valuation), in the setting of
\emph{deterministic} PTA.
Finally, we~study the decidability of these problems
for subclasses of~PTA: we~prove decidability for PTA with a single clock,
and for two subclasses of L/U-PTA
with a single parameter.


\paragraph{Outline}
\cref{section:definitions} recalls the necessary preliminaries.
\cref{section:undecidable} proves the undecidability of the problems in
general. 
\cref{section:algo} introduces a correct semi-algorithm for the trace- and
language-preservation synthesis. 
\cref{section:particular} considers the (un)decidability for subclasses of PTA.
\cref{section:conclusion} concludes the paper.

\section{Definitions}\label{section:definitions}




\subsection{Constraints}

We fix a finite set~$\Clock = \{ \clock_1, \dots, \clock_\ClockCard \}$ set of
real-valued variables (called \emph{clocks} in the sequel). A~clock
valuation~$\clockval$ is a function $\clockval\colon \Clock \rightarrow \bbR+$. 
We denote by $\mathbf{0}_{\Clock}$ the clock valuation assigning $0$ to all clocks. We~define two operations on clock valuations: for $d\in\bbR+$ and a
clock valuation~$\clockval$, we~let $\clockval+d$ be the
valuation~$\clockval'$ such that $\clockval'(\clock) = \clockval(\clock)+d$
for all~$\clock \in\Clock$. Given a set~$R\subseteq \Clock$ and a
valuation~$\clockval$, we~let $\clockval[R\mapsto 0]$ be the clock
valuation~$\clockval'$ such that $\clockval'(\clock)=0$ if~$\clock\in R$, and
$\clockval'(\clock)=\clockval(\clock)$ otherwise.

We also fix a finite set~$\Param = \{ \param_1, \dots, \param_\ParamCard \} $
of rational-valued variables called \emph{parameters}.
A~parameter valuation~$\pval$ is a function $\pval\colon \Param
\rightarrow \bbQ+$.
In the sequel, we will have to handle clocks and parameters together.
A~valuation is a function $\val\colon \Clock\cup\Param\to\bbR+$ such that
$\val_{|\Clock}$ is a clock valuation and $\val_{|\Param}$ is a parameter
valuation.

An~\emph{atomic constraint} over $\Clock$ and~$\Param$ is an expression of the
form either $\clock \compOp \param + c$ or $\clock\compOp c$ or $\param \compOp c$,
where $\mathord{\compOp} \in \{\mathord<, \mathord\leq, \mathord=, \mathord\geq,
\mathord>\}$, $\clock\in\Clock$, $\param\in\Param$ and $c\in\bbZ$. The
symbols~$\KTrue$ and~$\KFalse$ are also special cases of atomic constraints.
Notice that our constraints are a bit more general than in the setting of~\cite{AHV93}, where
only atomic constraints of the form $\clock\compOp\param$ and $\clock\compOp c$
(and~$\KTrue$ and~$\KFalse$) were allowed. 
A~\emph{constraint} over $\Clock$ and~$\Param$ is a conjunction of atomic constraints.
An~\emph{(atomic) diagonal constraint} is a constraint of the form
$\clock-\clock'\compOp \param+c$ or $\clock-\clock'\compOp c$, where $\clock$ and
$\clock'$ are two clocks and $\compOp$, $\param$ and~$c$ are as in plain atomic
constraints. A~\emph{generalized constraint} over~$\Clock$ and~$\Param$ is a
conjunction of atomic constraints and atomic diagonal constraints.

\begin{remark}
  We mainly focus here on continuous time (where clock valuations take
  real values) and rational-valued parameters, as defined above.
  However, several of our results remain valid for discrete time (where
  clock valuations take integer values) and integer-valued parameters.
  We will mention it explicitly when such is the case.
\end{remark}

A valuation~$\val$ satisfies an atomic constraint~$\phi\colon \clock \compOp
\param+c$, which we denote $\val\models\phi$, whenever
$\val(\clock)\compOp\val(\param)+c$. The~definition for other constraints is similar.
All~valuations satisfy~$\KTrue$, and none of them satisfies~$\KFalse$.
A~valuation~$\val$ satisfies a
constraint~$\Phi$, denoted $\val\models\Phi$ if, and only~if, it satisfies all
the conjuncts of~$\Phi$.
%
A~constraint~$\Phi$ is said to depend on~$D\subseteq \Clock\cup\Param$ whenever
for any two valuations~$\val$ and~$\val'$ such that $\val(d)=\val'(d)$ for
all~$d\in D$, it~holds $\val\models\Phi$ if, and only~if, $\val'\models\Phi$.
A~parameter constraint is a constraint that depends only on~$\Param$.
%

Given a partial valuation~$\val$ and a constraint~$\Phi$,
we~write $\valuate{\Phi}{\val}$ for the constraint obtained by
replacing each~$z$ in the domain $\domain(\val)$ of $\val$  in~$\Phi$ with~$\val(z)$. The resulting
constraint depends on $(\Clock\cup\Param)\setminus \domain(\val)$.
%
%

\smallskip
We denote by $\project{\Phi}{V}$ the \emph{projection} of
constraint~$\Phi$ onto~$V\subseteq \Clock\cup \Param$, \ie{} the
constraint obtained by eliminating the variables not in~$V$.
Satisfaction of a projected constraint is defined as:
$\pval\models\project{\Phi}{V}$ if, and only~if, there exists a
valuation~$\val$ on $\Clock\cup\Param$
such that $\val\models\Phi$ and $\val_{|V} = \pval$.
In~particular, we~will be interested in the projection onto the
set~$\Param$ of parameters.  Such projections can be computed \eg{}
using Difference Bound Matrices~(DBM)~\cite{BY03},
or Fourier-Motzkin elimination.
%
We~also define the \emph{time elapsing} of~$\Phi$, denoted by
$\timelapse{\Phi}$, as the \emph{generalized} constraint over~$\Clock$
and~$\Param$ obtained from~$\Phi$ by delaying an arbitrary amount of
time: $\pval\models\timelapse{\Phi}$ if, and only~if, there exists a
valuation~$\val$ on $\Clock\cup\Param$ and a delay~$d\in\bbR+$ such
that $\val\models\Phi$ and $\pval_{|\Param}=\val_{|\Param}$ and
$\pval_{|\Clock}=\val_{|\Clock}+d$.
The~time-elapsing of a
constraint~$\Phi$ is a classical computation using polyhedra or parametric extensions of~DBM: it~can be
obtained by preserving all differences between any pair of clocks,
preserving lower bounds, relaxing upper bounds on atomic
(single-clock) constraints, and preserving all relations between
parameters (and~constants).
%
Given $\resets \subseteq \Clock$, we define the \emph{reset}
of~$\Phi$, denoted by $\reset{\Phi}{\resets}$, as the constraint over
$\Clock$ and $\Param$ obtained from~$\Phi$ by resetting all clocks
in~$\resets$. Its~satisfaction relation is defined as follows:
$\pval\models\reset{\Phi}{\resets}$ if, and only~if, there exists a
valuation~$\val$ on $\Clock\cup\Param$ such that $\val\models\Phi$ and
$\val_{|\resets}=\mathbf{0}_{\resets}$ and
$\val_{|(\Param\cup\Clock\setminus\resets)}=\pval_{|(\Param\cup\Clock\setminus\resets)}$.
This~is again easily computed using polyhedra, DBM or Fourier-Motzkin elimination.
%

\begin{example}\label{example:operations-on-constraints}
	Assume $\Clock = \{ \clock_1 , \clock_2 \}$ be a set of clocks and $\Param = \{ \param_1 , \param_2 \}$ be a set of parameters.
	Consider the constraint (involving diagonal constraints) $\Phi$ defined by
	\[\Phi\equiv
        (\clock_1 = \param_1)
			\land
		(\param_1 > \param_2)
			\land
		(\clock_2 = \clock_1 - 1)
			\land
		(\clock_2 = 3)
		\]
	(the fact that all variables are non-negative is left implicit here).
	        Then, we have
                \[
                \projectP{\Phi} \equiv (\param_1 > \param_2) \land (\param_1 = 4).
                \]
	In~addition, resetting clock~$x_2$ in~$\Phi$ gives:
	\[\reset{\Phi}{\{ \clock_2 \}} \equiv
        (\clock_1 = \param_1)
			\land
		(\param_1 > \param_2)
			\land
		(\clock_2 = 0)
			\land
		(\clock_1 = 4).
		\]
	        Finally, letting time elapse from valuations satisfying~$\Phi$ gives:
		\[\timelapse{\Phi}\equiv (\clock_1 \geq \param_1)
			\land
		(\param_1 > \param_2)
			\land
		(\clock_2 = \clock_1 - 1)
			\land
		(\clock_2 \geq 3)
			\land
		(\param_1 = 4).
		\]
\end{example}

\subsection{Syntax of Parametric Timed Automata}

Parametric timed automata are an extension of the class of timed
automata to the parametric case, where parameters can be used
within guards and invariants in place of constants~\cite{AHV93}.

\begin{definition}
  \label{def:PTA}
  A~\emph{parametric timed automaton} (PTA for short) is a tuple $\calA =
  \tuple{\Sigma, \Loc, \locinit, \Clock, \Param, \invariant, \steps}$, where:
$\Sigma$ is a finite set of actions;
$\Loc$ is a finite set of locations;
$\locinit \in \Loc$ is the initial location;
$\Clock$ is a finite set of clocks;
$\Param$ is a finite set of parameters;
$\invariant$ assigns to every $\loc\in \Loc$ a constraint $\invariant(\loc)$, called the
                  \emph{invariant} of~$\loc$;
$\steps$ is a finite set of edges
                  $(\loc,\guard,\action,\resets,\loc')$, 
                  where
                  $\loc,\loc'\in \Loc$ are the source and destination
                  locations, $\guard$ is a constraint (called
                  \emph{guard} of the transition), $\action\in\Sigma$, and
                  $\resets\subseteq \Clock$ is a set of clocks to be reset.
\end{definition}

A PTA is \emph{deterministic} if, for all $\loc \in \Loc$, for all $\action
\in \Sigma$,
there is at most one edge
$(\loc', \guard ,\action',\resets,\loc'') \in \steps$
with $\loc'=\loc$ and $\action'=\action$.
Note that this is a stronger assumption than the usual definition of
determinism for~TA, which only requires that, for a given action,
guards must be mutually exclusive.

A clock is said to be \emph{parametric} if it is compared with a
parameter in at least one guard or invariant.  Otherwise, it is
\emph{non-parametric}.\label{newtext:parametric-clock}

\begin{figure}[tb]
\centering

\newcommand{\coulact}[1]{#1}
\newcommand{\coulclock}[1]{#1}
\newcommand{\coulparam}[1]{#1}
\newcommand{\parami}[1]{\coulparam{p_{#1}}}
	\definecolor{cv1}{rgb}{1, 0, 0}
	\definecolor{cv2}{rgb}{0, 1, 0}
	\definecolor{cv3}{rgb}{0, 0, 1}

\begin{tikzpicture}[->, >=stealth', node distance=4cm, thin]
\tikzstyle{state}=[circle, minimum size=12pt, draw, inner sep=1.5pt]
\tikzstyle{every node}=[initial text=]


	\node[state, initial, fill=cv2!20,inner sep=0pt] (Q0) {$\genfrac{}{}{0pt}{0}{\loc_1}{\phantom{\scriptstyle y\leq p_2}}$};
	\node[state, fill=cv3!20, right of=Q0,inner sep=0pt] (Q1) {$\genfrac{}{}{0pt}{0}{\loc_2}{\coulclock{y} \leq \parami{2}}$};
	\node[state, fill=cv1!20, right of=Q1,inner sep=0pt] (Q2) {$\genfrac{}{}{0pt}{0}{\loc_3}{\coulclock{y} \leq \parami{3}}$};



	\path (Q0) edge [below] node{\begin{tabular}{c}
		\coulact{press?} \\
		$\coulclock{x} := 0$ \\
		$\coulclock{y} := 0$ \\
		\end{tabular}} (Q1);
	\path (Q1) edge [] node[xshift=5] {\begin{tabular}{c}
		$\coulclock{y}=\parami{2}$ \\[2mm]
		\coulact{cup!} \\
		\end{tabular}} (Q2);
	\path [loop, out=-60,in=-120,looseness=4] (Q1) edge node[below]
                {\begin{tabular}{c}
		$\coulclock{x}\geq\parami{1}$ \\
		\coulact{press?} \\
		$\coulclock{x}:=0$
	        \end{tabular}}  (Q1);
	\path [bend angle=40, bend right] (Q2) edge [] node{\begin{tabular}{c}
		$\coulclock{y}=\parami{3}$ \\[1mm]
		\coulact{coffee!} 
		\end{tabular}} (Q0);

\end{tikzpicture}

\caption{An example of a coffee machine}
\label{fig:exPTA}
\end{figure}

\begin{example}
	The PTA in \cref{fig:exPTA} has three locations, three parameters
	$\param_1$, $\param_2$,~$\param_3$ and two clocks~$x$ and~$y$.
	Both clocks are parametric.
	This~PTA is deterministic.
\end{example}

\subsection{Semantics of Parametric Timed Automata}

Given a PTA $\calA = \tuple{\Sigma, \Loc, \locinit, \Clock, \Param, \invariant, \steps}$, and a parameter valuation~\(\pval\),
$\valuate{\calA}{\pval}$~denotes the automaton obtained from~$\calA$
by substituting every occurrence of a parameter \(\param_i\) by the
constant \(\pval(\param_i)\) in the guards and invariants.  Then
$\valuate{\calA}{\pval}$ is a timed automaton~\cite{AD90}.
The~configurations of a timed automaton are the
pairs~$(\loc,\clockval)$ where $\loc$ is a location and $\clockval$ is
a clock valuation. Moving from one configuration to another one is
allowed depending on the transitions of the PTA, which gives rise to
an infinite-state transition system:
\begin{definition}
  Given a PTA $\calA = \tuple{\Sigma, \Loc, \locinit, \Clock, \Param, \invariant, \steps}$, and a parameter valuation~\(\pval\), the~semantics of
  $\valuate{\calA}{\pval}$ is given by the timed transition system $\tuple{\Q,
  \qinit, \Steps}$ where
  $\Q = \{ (\loc, \clockval) \in \Loc \times
                  (\bbR+)^\Clock \mid
                  \valuate{\valuate{\invariant(\loc)}{\pval}}{\clockval}
                  \text{ evaluates to true} \}$
                  is the set of all valid configurations,
		with initial configuration $\qinit = (\locinit, \mathbf{0}_\Clock) $,
and
$((\loc, \clockval), (d,e), (\loc', \clockval')) \in \Steps$
                  whenever $e$ is a transition
                  $(\loc,\guard,a,\resets,\loc')\in\steps$ such that
                  $\clockval+d\models \valuate\guard\pval$
                  $\clockval'=(\clockval+d)[\resets\mapsto 0]$. 
\end{definition}

A run of a TA is a maximal sequence of consecutive transitions of the timed
transition system associated with the~TA. For~the sake of readability, we
usually write runs as $s_0 \Fleche{d_0,e_0} s_1\Fleche {d_1,e_1} \cdots
\Fleche{d_{m-1},e_{m-1}} s_m\cdots$. With~\emph{maximal}, we~mean that a run
may only be finite if its last configuration has no outgoing transition.
%
The timed word associated to a run $s_0 \Fleche{d_0,e_0} s_1\Fleche {d_1,e_1}
\cdots \Fleche{d_{m-1},e_{m-1}} s_m\cdots$ is the (finite or infinite)
sequence~$(d_i,a_i)_i$ such that for all~$i$, $a_i$ is the action of
edge~$e_i$. The corresponding untimed word is the word~$(a_i)_i$. The timed
(resp.~untimed) language of a TA~$\A$, denoted by $\TLang(\A)$
(resp.~$\ULang(\A)$
\nm{je prefere utiliser ULang, quitte \`a le definir comme Lang}),
is the set of timed (resp.~untimed) words associated with
runs of this automaton. Similarly, the untimed trace associated with
the run $s_0 \Fleche{d_0,e_0} s_1\Fleche {d_1,e_1} \cdots
\Fleche{d_{m-1},e_{m-1}} s_m\cdots$ is the sequence $(l_i,a_i)_i$ s.t. $l_i$
is the location of~$s_i$ and $a_i$ is the action of edge~$e_i$. The~set of
untimed traces of~$\A$ is denoted by~$\UTraces(\A)$.

A~configuration~$s=(\loc, \clockval)$ is said to be reachable in
$\calA$ under valuation~$\pval$
if $s$ belongs to a run of
$\valuate{\calA}{\pval}$; 
a~location~$\loc$ is reachable if some configuration of the
form~$(\loc,\clockval)$~is reachable.

\subsection{Symbolic Semantics of Parametric Timed Automata}

Following, \eg{} \cite{HRSV02,ACEF09,JLR14}, we~now define a symbolic semantics for PTA:
\begin{definition}
	A symbolic state of a PTA~$\calA$ is a pair $(\loc, \C)$ where $\loc
        \in \Loc$ is a location, and $\C$ is a generalized constraint. 
\end{definition}
Given a parameter valuation~$\pval$, a~symbolic state $\state = (\loc, \C)$ is
\emph{$\pval$-compatible} if $\pval \models \project{\C}{\Param}$.
The computation of the symbolic state space relies on the $\Succ$ operation.
The initial symbolic state of~$\calA$ is
$\sinit = (\locinit, \timelapse{(\Clock = 0)}
\land \invariant(\locinit) )$.
Given a symbolic state $s = (\loc, \C)$ and a
transition~$e=(\loc,\guard,\action,\resets,\loc')$, we~let
\[ \Succ_e(s) = \left\{ (\loc', \C') \;\middle|\; \C' = \timelapse{\big(\reset{(\C \land \guard)}{\resets}\big )} \cap \invariant(\loc') \;\land\; C'\not=\emptyset\right\} \]
(notice that this is a singleton or the empty set).
For transitions~$e$ not originating from~$\loc$, we~let $\Succ_e(s)=\emptyset$.
We~then write $\Succ(s) = \bigcup_{e\in \steps} \Succ_e(s)$.
By~extension, given a set~$S$ of symbolic states, $\Succ(S) = \{ s' \mid \exists s \in S \text{ s.t.\ } s' \in \Succ(s)\}$.
Again, this gives rise to an infinite-state transition system, called the
\emph{parametric zone graph} later~on.
A~symbolic run of a PTA from some symbolic state~$s_0$ is a maximal
alternating sequence
$\state_0 \edge_0 \state_1 \edge_1 \cdots $
of symbolic states~$\state_i$ and edges~$\edge_i$ such that $s_{i+1} = \Succ_{e_i}(s_i)$ for all~$i$.
Two~runs are said \emph{equivalent} when they correspond to the same sequences
of edges (hence the same sequences of locations), but may visit different
symbolic states.
From now on, a symbolic run of a PTA~$\A$ refers to a run starting from the initial symbolic state of~$\A$.
By extension, a symbolic state of~$\A$ is a state belonging to a symbolic run of~$\A$.





\subsection{Problems}


%

In this paper, we~address the following two problems:
\begin{definition}
Given a PTA~$\calA$ and a parameter valuation~$\pval$,
\begin{itemize}
\item the \emph{language preservation problem} asks whether there exists another
  parameter valuation~$\pval'$ giving rise to the same untimed language (\ie{}
  such that $\Lang(\valuate{\calA}{\pval}) = \Lang(\valuate{\calA}{\pval'}$);
\item the \emph{trace preservation problem} asks whether there exists another
  parameter valuation~$\pval'$ giving rise
  to the same set of traces (\ie{} such that $\Traces(\valuate{\calA}{\pval}) =
  \Traces(\valuate{\calA}{\pval'}$)~\cite{ACEF09}.
\end{itemize}

The \emph{continuous} versions of those problems additionally require that the
language (resp.\ set of traces) is preserved under any other valuation
of the form $\lambda\cdot \pval + (1-\lambda)\cdot \pval'$, for
$\lambda\in[0,1]$ (with the classical definition of addition and scalar
multiplication). 
\end{definition}





\section{Undecidability of the Preservation Problems in General}\label{section:undecidable}


	


\subsection{Undecidability of the Language Preservation Problem}
\label{ss:language-preservation}

\begin{restatable}{theorem}{thmdisclang}
\label{thm-language}
  The language preservation problem for PTA with one parameter is undecidable
  (both over discrete 
  and continuous time, and for integer and rational parameter valuations).
\end{restatable}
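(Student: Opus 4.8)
The plan is to reduce from a known undecidable problem about parametric timed automata — the most natural candidate being the EF-emptiness problem (reachability) for PTA with one parameter, which is undecidable (over both discrete and continuous time, integer and rational valuations) by the results cited in the introduction. So I would start from an arbitrary PTA $\calA$ with a single parameter $\param$ and a distinguished target location $\locf$, and build a new PTA $\calB$ together with a reference valuation $\pval_0$ such that the language of $\valuate{\calB}{\pval_0}$ is preserved by some $\pval' \neq \pval_0$ if and only if $\locf$ is reachable in $\valuate{\calA}{\pval}$ for some $\pval$.

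\textbf{The construction.} The idea is to make the reference valuation $\pval_0$ one for which the behaviour is maximally degenerate — say $\pval_0$ forces the target-reaching part of the automaton to be unreachable or to contribute nothing to the language — so that \emph{any} valuation with a strictly different behaviour can only witness the fact that some other valuation does reach $\locf$. Concretely, I would take $\pval_0 = 0$ (or whatever value makes the relevant guards unsatisfiable) and design $\calB$ so that: (i) under $\pval_0$, $\calB$ has some fixed "baseline" untimed language $L_0$ coming from a gadget that is insensitive to the parameter; (ii) for any other $\pval' \neq \pval_0$, the language of $\valuate{\calB}{\pval'}$ strictly \emph{contains} a new word $a_{\locf}$ (a fresh action fireable only after reaching a copy of $\locf$) precisely when $\valuate{\calA}{\pval'}$ reaches $\locf$; and (iii) conversely, whenever $\locf$ is \emph{not} reachable under $\pval'$, the behaviour of the $\calA$-component is empty so the language collapses back to $L_0$. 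Thus a language-preserving $\pval' \neq \pval_0$ exists iff either $\locf$ is unreachable for all $\pval'$ (then \emph{no} $\pval'$ preserves... — careful here) — so the construction has to be arranged so that the $\pval_0$-language is the \emph{largest} one and is matched by $\pval'$ exactly when $\locf$ is unreachable under $\pval'$; equivalently, I negate: add the $\locf$-witnessing letter to the \emph{reference} run and make it disappear for other valuations unless they too reach $\locf$. Either polarity works; the point is to arrange that "$\exists \pval' \neq \pval_0$ with the same language" is logically equivalent to "$\exists \pval$ with $\locf$ reachable" (or its complement, combined with a separate trivial argument). The single-parameter restriction is preserved because we add no new parameters.

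\textbf{Main technical steps.} First, set up $\calB$ as a disjoint choice (via an initial branching on distinct fresh actions) between the baseline gadget and a copy of $\calA$ augmented with a self-loop labelled by a fresh action $\sharp$ at (a copy of) $\locf$; the baseline gadget is a small PTA whose language is parameter-independent and equals exactly the set of untimed words producible by the $\calA$-copy when $\locf$ is \emph{not} reached. Second, prove the two implications: if $\locf$ is reachable under some $\pval'$ then $\valuate{\calB}{\pval'}$ produces the word ending in $\sharp$, which $\valuate{\calB}{\pval_0}$ does not, so language preservation fails for that $\pval'$ but — and this is where I must be careful — I actually want the \emph{equivalence with} preservation, so I should instead arrange $\pval_0$ itself to reach $\locf$ whenever \emph{any} valuation does, e.g. by the standard trick of letting $\param$ range and picking the reference so the reachability set being nonempty is the interesting case; the cleanest route is the one used for analogous PTA-undecidability transfers: reduce the \emph{complement} and observe undecidability is preserved. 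Third, handle the continuous/discrete and integer/rational variants uniformly by noting the source problem is undecidable in all four settings and the gadgets use only constraints of the allowed form.

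\textbf{Expected main obstacle.} The delicate point is \emph{pinning down the reference valuation} and the exact polarity so that the bi-implication between language preservation and parameter-reachability is airtight — in particular ruling out "accidental" preserving valuations $\pval'$ that neither reach $\locf$ nor equal $\pval_0$ but happen to have the same language because the baseline gadget's language already saturates everything. This forces the baseline gadget to be engineered very precisely so that its (parameter-independent) language is \emph{strictly smaller} than any language obtainable when $\locf$ is reached, yet \emph{exactly equal} to the language of every non-$\locf$-reaching valuation, including $\pval_0$. Getting that "exactly equal for all non-reaching valuations" condition to hold when the underlying $\calA$ still has parameter-dependent timed behaviour on the runs that avoid $\locf$ — but whose \emph{untimed} language is what matters — is the real work; it typically requires collapsing all untimed behaviour of $\calA$ below $\locf$ into a fixed regular set by construction (e.g. by making every guard on the pre-$\locf$ part parameter-free, and only the final step to $\locf$ parameter-dependent). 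Once that normalisation of the source instance is in place, the equivalence follows by a straightforward run-by-run correspondence.
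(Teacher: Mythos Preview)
Your proposal has a genuine gap, and it is exactly the obstacle you yourself flag in the last paragraph---but your proposed fix does not work.

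You want to reduce from EF-emptiness for an \emph{arbitrary} one-parameter PTA~$\calA$. As you note, the problem is that for valuations~$\pval'$ that do \emph{not} reach~$\locf$, the untimed language of the $\calA$-copy is in general still parameter-dependent, so there is no single ``baseline'' language to match. Your fix is to ``normalise the source instance'' so that every guard on the pre-$\locf$ part is parameter-free and only the last step into~$\locf$ carries the parameter. But this transformation cannot preserve the reachability answer in general: if only a single edge depends on~$\param$, then EF-emptiness becomes decidable (compute the reachable region of the parameter-free prefix, then test satisfiability of one parametric guard). So either your normalisation changes the answer, or you have tacitly assumed a decidable subproblem; in both cases the reduction collapses. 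Reducing from EF-emptiness as a black box is the wrong move here.

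The paper does \emph{not} reduce from EF-emptiness. It encodes a two-counter machine directly, and the whole point of the encoding is to engineer precisely the uniformity you are missing: an extra clock~$z$ is incremented at every simulated step and is bounded by the parameter~$\param$, so that if the machine does not halt, then for \emph{every} value of~$\param$ the simulation eventually deadlocks. This is the key idea (and the paper explicitly says it is where the construction departs from the classical encodings of~\cite{AHV93,JLR14,BBLS15}). With that in hand, the wrapping is easy: at~$\param=0$ the automaton bypasses the simulation and loops forever on~$a$, giving language~$\{a^\omega\}$; for~$\param>0$ it enters the simulation, and either reaches~$\shalt$ (hence~$a^\omega$ again) or deadlocks (hence only finite words). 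Thus some~$\pval'\neq 0$ preserves the language iff the machine halts. Your plan was on the right track in looking for a degenerate reference valuation and a ``reach-or-collapse'' dichotomy, but the dichotomy has to be \emph{built into the encoding}, not recovered by post-processing an arbitrary PTA.
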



\begin{proof}
  The proof proceeds by a reduction from the halting problem for two-counter
  machines. We~begin with reducing this problem into the classical problem of
  reachability emptiness (``EF-emptiness'') in parametric timed automata,
  namely:
  ``\emph{is the set of valuations of the
    parameters for which the target location is reachable empty?}''
  We~then extend the construction to our original problem.


  \smallskip
  Fix a \emph{deterministic} two-counter machine~$\calM=\tuple{S,T}$:
  such a machine is a finite-state transition system equipped with
  two counters $c_1$ and~$c_2$, initially set to~zero.
  The~transitions of a two-counter machine can be of two different forms:
  \begin{itemize}
  \item from state $s_i$, increment $c_k$ and go to $s_j$. Such a
    transition is denoted by $(s_i;\incr{c_k};s_j)$ in the sequel;
  \item from state $s_i$, if $c_k=0$ then go to~$s_j$, else
    decrement $c_k$ and go to~$s_l$. Such a transition is denoted
    $(s_i; s_j; \decr{c_k}; s_l)$.
\end{itemize}
  In~particular, both counters may only take nonnegative values.

The machine starts in state $s_0$ and halts when it reaches a
particular state~$\shalt{}$. The~halting problem for two-counter
machines is undecidable~\cite{Minsky67}.

We encode the halting problem of two-counter machines into our problem
for PTA.  Given a two-counter machine~$\calM$, we~build a~PTA whose
runs encode the runs of~$\calM$.  Our~PTA uses four clocks: clock~$t$
will serve as a tick (it~will be reset exactly every $\param$ time
units, where $\param$~is the parameter), and we will have a
correspondence between a configuration of the timed automaton and a
configuration of the two-counter machine every time~$t$ is reset;
clocks~$x_1$ and~$x_2$ are used to store the values of counters~$c_1$
and~$c_2$ of~$\calM$, with the correspondence~$x_1=c_1$ and~$x_2=c_2$
when~$t=0$; finally, clock~$z$ is used to count the number of steps of
the two-counter machine that have been simulated during a computation;
this is where our construction differs from the classical ones
(\eg{}~\cite{AHV93,JLR14,BBLS15}), as~we~use the parameter~$\param$ to
bound the length (number of~steps) of the computations of~$\calM$.
Notice that~$\param$ is thus also an upper bound on the values of
both~$c_1$ and~$c_2$.
%

The parametric timed automaton~$\calA$ associated with~$\calM$ is
defined as follows:
\begin{itemize}
\item 
  its set of locations has three copies of the set~$S$ of
  states of~$\calM$: for each~$s\in S$, there is a \emph{main
    location} with the same name~$s$, and two \emph{intermediary
    locations} named~$\mybar s$, $\copyone s$; 
\item each location of~$\calA$ has invariants requiring all clocks to
  never exceed~$\param$; all~intermediary locations carry two self-loops,
  resetting clocks~$x_1$ and $x_2$ when they reach
  value~$\param$. Additionally, locations~$\mybar{s}$ have a self-loop
  resetting~$t$ when it reaches~$\param$, and locations~$\copyone{s}$
  have a self-loop resetting~$z$ when it reaches~$\param$.
  
%

  The rough intuition is as follows: the total time elapsed between
  two consecutive main locations will be~$\param$. If~in the meantime we
  reset each clock exactly when it~reaches value~$\param$ (which is
  the role of the self-loops), then the values encoded by the clocks
  is unchanged. By resetting some clock one time unit earlier or
  later, we~can encode an increment or decrement of the associated
  counter.

  More precisely, each transition in~$\calM$ gives rise to several
  transitions in~$\calA$:
    \begin{itemize}
    \item first, for each main location~$s$,
      there is a transition from~$\copyone s$ to~$s$ guarded with~$t=\param$
      and resetting~$t$, 
      and 
      a transition from~$s$ to~$\mybar s$,
      guarded with~$z=\param-1$ and resetting~$z$; this encodes
      incrementation of~$z$;
    \item then, for transitions of~$\calM$ of the form
      $(s_i;\incr{c_k};s_j)$, there is a transition from $\mybar{s_i}$ to
      $\copyone{s_j}$ guarded with $x_k=\param-1$ and resetting~$x_k$;

      For transitions of the form $(s_i;s_j;\decr{c_k};s_l)$, there
      are two transitions from~$\mybar{s_i}$: one~is guarded with $t=0
      \land x_k=0$, thereby testing that the counter encoded by~$x_k$
      equals~$0$; this transitions goes to
      location~$\copyone{s_j}$. The~second transition\footnote{The
        guard $t\not=1 \land x_k=1$ is not convex, which formally is
        not allowed in our models; but this is easily encoded by
        duplicating the transition.} is guarded with $t\not=1 \land
      x_k=1$; it~resets clock~$x_k$ and goes to~$\copyone{s_l}$.

    \end{itemize}
  \end{itemize}
  
	\begin{figure}[ht]
	\centering
    \begin{subfigure}[c]{0.49\textwidth}
      \centering
      \begin{tikzpicture}[-latex',xscale=1]
        
	\node[moyrond] (s) at(0,0) {}; \node at (s)  {$\state_i$};
	\node[moyrond] (sbarre) at(2,0) {}; \node at (sbarre) {$\mybar{\state_i}$};
	\node[moyrond] (s') at(4,0) {}; \node at (s') {$\copyone{\state_j}$};
	\node[moyrond] (sj) at(6.5,0) {}; \node at (sj)  {$\state_j$};
        \path[use as bounding box] (0,-1.9) -- (0,1.5);
                        
        \everymath{\scriptstyle}
	\path
	(sbarre) edge node {\begin{tabular}{c}
	    $x_k=\param-1$\\
	    $x_k := 0$
	\end{tabular}} (s')
	(s) edge node{\begin{tabular}{c}
	    $z = \param-1$\\
	    $z := 0$ 
	\end{tabular}} (sbarre)
        (s') edge node{\begin{tabular}{c}
	    $x_1< \param \et x_2<\param\et{}$\\[-1mm] 
	    $t = \param \et z<\param$\\ 
	    $t := 0$\\\null
        \end{tabular}} (sj)
        (sbarre) edge[loop,out=-70,in=-110,looseness=10]
        node[below] {$x_1,x_2,t$} (sbarre)
        (s') edge[loop,out=-70,in=-110,looseness=10]
        node[below] {$x_1,x_2,z$} (s')
	;
        \node[location, draw=none,fill=none,use as bounding box] at (4,-1) {};
      \end{tikzpicture}
      \caption{Incrementing $c_k$}
      \label{figure:reduction:increment}
    \end{subfigure}%
	\hfill
	\begin{subfigure}[c]{0.49\textwidth}
          \centering
	  \begin{tikzpicture}[-latex',xscale=1]

			\node[moyrond] (s) at(0,0) {}; \node at (s) {$\state_i$};
			\node[moyrond] (sbarre) at(2,0) {}; \node at (sbarre) {$\mybar\state_i$};
			\node[moyrond] (s0) at(4,.8) {}; \node at (s0) {$\copyone{\state_j}$};
			\node[moyrond] (ms0) at(6.5,.8) {}; \node at (ms0) {$\state_j$};
			\node[moyrond] (s1) at(4,-.8) {}; \node at (s1) {$\copyone{\state_l}$};
			\node[moyrond] (ms1) at(6.5,-.8) {}; \node at (ms1) {$\state_l$};
                        \path[use as bounding box] (0,-1.9) -- (0,1.5);

        \everymath{\scriptstyle}
	\path
	(sbarre) edge[out=90,in=180] node[above left=-7pt,pos=.9] {\begin{tabular}{c}
	    $t = 0 \et x_k = 0$
	\end{tabular}} (s0)
	(sbarre) edge[out=-90,in=180] node[below left=-7pt,pos=.9] {\begin{tabular}{c}
	    $t\not=1\et x_k = 1$\\
            $x_k := 0$ 
	\end{tabular}} (s1)
	(s) edge node {\begin{tabular}{c}
	    $z = \param-1$\\
	    $z := 0$
	\end{tabular}} (sbarre)
        (s0) edge node{\begin{tabular}{c}
	    $x_1< \param \et x_2<\param\et{}$\\ [-1mm]
	    $t = \param \et z<\param$\\ 
	    $t := 0$\\\null
        \end{tabular}} (ms0)
        (s1) edge node{\begin{tabular}{c}
	    $x_1< \param \et x_2<\param\et{}$\\ [-1mm]
	    $t = \param \et z<\param$\\ 
	    $t := 0$\\\null
        \end{tabular}} (ms1)
        (sbarre) edge[loop,out=-20,in=20,looseness=5]
        node[right] {$x_1,x_2,t$} (sbarre)
        (s0) edge[loop,out=70,in=110,looseness=10]
        node[above] {$x_1,x_2,z$} (s0)
        (s1) edge[loop,out=-70,in=-110,looseness=10]
        node[below] {$x_1,x_2,z$} (s1)
	;
	  \end{tikzpicture}
	  \caption{Decrementing $c_k$}
	  \label{figure:reduction:decrement}
	\end{subfigure}%

	\caption{Encoding a 2-counter machine. The lists of clocks on
          self-loops indicate which clocks are reset when they reach
          value~$\param$.}
	\label{figure:reduction}
	\end{figure}
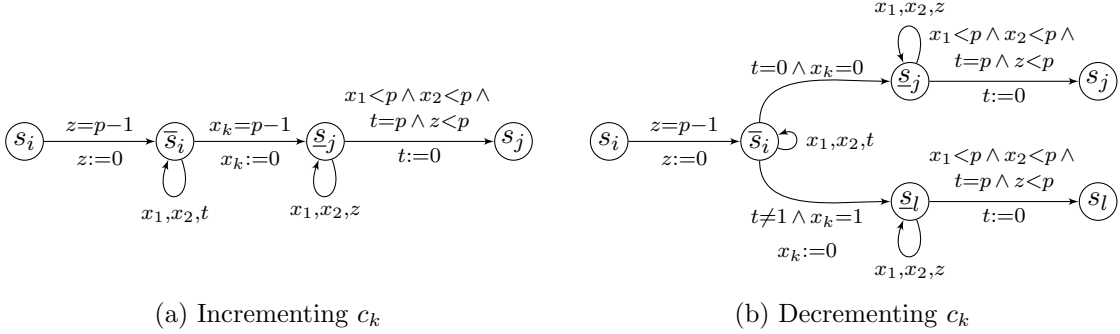

Correctness of this construction is expressed as follows:
\begin{restatable}{lemma}{lemmareduc}
  \label{lemma:reduction}
  The two-counter machine~$\calM$ has a halting computation
  from~$(s_0,({c_1=0},$ ${c_2=0}))$ if, 
  and only~if, there is a run in~$\pval(\calA)$ reaching the corresponding
  location~$\shalt{}$ from the initial
  configuration~$(s_0,(t=0,x_1=0,x_2=0,z=0))$ in $\valuate{\A}{\pval}$
  with $\pval(\param)>0$.

  Moreover, if~$\calM$ has no halting computations, then for any
  $\pval$, $\pval(\calA)$ eventually reaches a deadlock.
\end{restatable}


\begin{proof}
  We~prove that the modules for incrementing and decrementing counters
  correctly implement these operations, as long as~$z$ is small enough.
  Precisely:
  \begin{itemize}
  \item assume that the PTA is in configuration
    $(\state_i,(t=0,x_1=x_1^0,x_2=x_2^0,z=z^0))$ when entering the
    module encoding transition $(\state_i,\incr{c_1},\state_j)$
    incrementing~$c_1$.  Assume that $\max(x_1^0,x_2^0)\leq
    z^0<\param$ (and~$t=0$) when entering that module (which is true
    initially, and we will prove is preserved when entering the next
    module).
    Because clock~$z$ cannot be reset in~$\mybar{\state_i}$, at most
    $\param$ time unit can elapse in that location; similarly
    in~$\copyone{\state_j}$. Hence either the automaton
    reaches location~$\state_j$, or it ends up in a deadlock.
    If~the automaton reaches~$\state_j$, then the total
    time elapsed along this run will be either~$\param$ or~$2\cdot\param$,
    because clock~$t$ is initially~zero and it
    equals~$\param$ at the end of the run; it~may have been reset once in
    the meantime in~$\mybar{\state_i}$, when its value was exactly~$\param$.
    Each resetting self-loop amounts to decrementing
    the value of its associated clock by~$\param$, since it tests if
    the clock equals~$\param$ and resets it~to~zero. Hence the final
    value of~$x_2$ is unchanged. The~transition from~$\state_i$
    to~$\mybar{\state_i}$ amounts to decreasing clock~$z$ by~$\param-1$;
    all other transitions preserve the value of clock~$z$
    modulo~$\param$, so that in the end the value of~$z$ is augmented
    by~$1$. The same argument applies to~$x_1$. In~the end, if the
    module is eventually exited, the automaton reaches configuration
    $(\state_j,(t=0,x_1=x_1^0+1, x_2=x_2^0,z=z^0+1))$, as expected.
    
    Conversely, assuming that $\max(x_1^0,x_2^0)\leq z^0\leq\param-1$
    (and~$t=0$) when entering that module, then there is a path from
    $(\state_i,(t=0,x_1=x_1^0,x_2=x_2^0,z=z^0))$ to
    $(\state_j,(t=0,x_1=x_1^0+1, x_2=x_2^0,z=z^0+1))$.
    If~$z^0>\param-1$, then $\state_i$ can not
    be exited, and the automaton ends in a deadlock.
    


    The case of incrementation of~$c_2$ is symmetric.
    
  \item similarly, assume that the PTA is in configuration
    $(\state_i,(t=0,x_1=x_1^0,x_2=x_2^0,z=z^0))$ when entering the
    module decrementing~$c_1$. Then~counter~$c_1$ equals~zero if,
    and only~if, it~holds $x_1^0=0$ when~$t=0$. In~that case, the
    automaton can only proceed to~$\state_j$: if~that location is ever
    reached, again the total time elapsed will be an integer multiple
    of~$\param$, and by a similar analysis as above, we~get that the
    automaton will reach configuration $(\state_j,(t=0,x_1=x_1^0=0,
    x_2=x_2^0,z=z^0+1))$.

    On the other hand, if~counter~$c_1$ is not~zero, \ie, if $x_1$ is
    not~zero (nor~$\param$) when~$t=0$, then the automaton can only
    reach location~$\state_l$. With a similar argument as above,
    we~obtain that the automaton will then reach configuration
    $(\state_l,(t=0,x_1=x_1^0-1,x_2=x_2^0,z=z^0+1))$, as required.

    Conversely, if $\max(x_1^0,x_2^0)\leq z^0\leq\param-1$ (and~$t=0$)
    when entering this module, then in both cases ($c_1=0$ and
    $c_1>0$) there is a path to the corresponding exit configuration
    in that module.
    If on the other hand $z^0>\param-1$, then $\state_i$ can not be left.
    Finally, the case of decrementation of~$c_2$ is    symmetric.
  \end{itemize}

  From these results, we~obtain the fact that if there is a run to
  location~$\shalt$ (for some value of~$\param$), then it corresponds
  to a valid halting run of the two-counter machine; conversely, if
  the two-counter machine has a halting run of length~$n$, then for
  $\pval(\param)=n+1$, we~can build a run in~$\pval(\calA)$ reaching
  location~$\shalt$. Finally, if the two-counter machine has no
  halting computation, then eventually the value of clock~$z$ will
  exceed~$\param-1$ when entering a module, which will result in a deadlock.
\end{proof}

  \smallskip

	We~now explain how to adapt this construction to the language
        preservation problem. The idea is depicted on
        \cref{figure:language-preservation} (where all transitions are
        labeled with the same letter~$a$): when $\pval(\param)=0$, the
        automaton accepts the untimed language $\{a^\omega\}$. Notice
        that the guard $\param=0$ in the automaton can be encoded by
        requiring $t=0\et t=\param$. On the other hand, when
        $\pval(\param)>0$, we have to enter the main part of the
        automaton~$\calA$, and mimic the two-counter machine. From our
        construction above, if the run of the two-counter machine is
        halting run, then for some value~$\pval(\param)$,
        location~$\shalt$, and then~$s_\infty$, will be reached, and
        the untimed language will be the same as
        when~$\pval(\param)=0$.  Conversely, if the two-counter
        machine does not halt, then for any value of~$\pval(\param)$,
        the automaton will reach a deadlock, and it will not accept~$a^\omega$.

  Finally, notice that our reduction is readily adapted to the discrete-time
  setting, and\slash or to integer-valued parameters.
\end{proof}

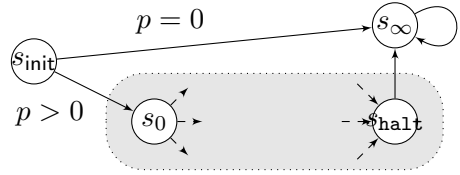
\begin{figure}[tb]
	\centering
	\begin{tikzpicture}
		\begin{scope}[scale=.8]
			\draw (0,0) node[circle,draw,minimum height=6mm] (init) {} node {$\sinit$};
			\draw[fill=black!10!white,rounded corners=5mm,dotted] (1.2,-1) |-
				(6.8,-1.8) |- (1.2,-.2) -- (1.2,-1);
			\draw (2,-1) node[fill=white,circle,draw,minimum height=6mm] (0) {} node {$s_0$};
			\draw (6,-1) node[fill=white,circle,draw,minimum height=6mm] (halt) {} node {$\shalt{}$};
			\draw (6,.6) node[circle,draw,minimum height=6mm] (infty) {} node {$s_\infty$};
			\draw[-latex'] (init) -- (infty) node[midway,above left] {$\param=0$};
			\draw[-latex'] (init) -- (0) node[midway,below left] {$\param>0$};
			\draw[dashed,-latex'] (0) -- +(45:8mm);
			\draw[dashed,-latex'] (0) -- +( 0:8mm);
			\draw[dashed,-latex'] (0) -- +(-45:8mm);
			\draw[dashed,latex'-] (halt) -- +(45:-9mm);
			\draw[dashed,latex'-] (halt) -- +( 0:-9mm);
			\draw[dashed,latex'-] (halt) -- +(-45:-9mm);
			\draw[-latex', rounded corners=3mm] (halt) -- (infty);
			\draw[-latex'] (infty) edge[-latex',loop,out=30,in=-30,looseness=7] (infty);
		\end{scope}
	\end{tikzpicture}
	\caption{Encoding the halting problem into the language-preservation problem}
	\label{figure:language-preservation}
\end{figure}

\begin{remark}\label{remark:p-1}
Our construction uses both $\param$ and $\param-1$ in the clock
constraints, as well as parametric constraints~$p=0$ and $p>0$.  This
was not allowed in~\cite{AHV93} (where three different parameters were
needed to compare the clocks with~$\param$, $\param-1$
and~$\param+1$). Our construction could be adapted to only allow
comparisons with $\param-1$ (hence to use only one parameter), while
keeping the number of clocks unchanged:
\begin{itemize}
\item the parametric constraints $p=0$ and $p>0$ could be respectively
  encoded as $(t=p) \et (t=0)$ and $(t<p) \et (t=0)$;
\item transitions guarded by
$x=\param$ (which always reset the corresponding clock~$x$) would then be
encoded by a first transition with~$x=\param-1$ resetting~$x$ and moving to a
copy of~$\calA$ where we \emph{remember} that the value of~$x$ should be
shifted up by~$\param-1$. All locations have invariant $x\leq 1$, and
transitions guarded with~$x=1$, resetting~$x$ and returning to the main copy
of~$\calA$. The same can be achieved for the other clocks, even if it means
duplicating~$\calA$ several times (twice for each clock).
\end{itemize}
\end{remark}





Let us now show that this 
undecidability result is robust \wrt{} some variations in the definition of the problem.

\begin{proposition}\label{proposition:inclusion}
  Given a PTA~$\A$ and a parameter valuation~$\pval$, the existence of
  a valuation~$\pval' \neq \pval$ such that the 
  language of $\valuate{\A}{\pval'}$ is a strict subset of
that of~$\valuate{\A}{\pval}$ is undecidable (similarly for
non-strict subset, and for strict and non-strict superset).
\end{proposition}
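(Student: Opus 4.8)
The plan is to reuse, essentially verbatim, the PTA $\calA$ built in the proof of \cref{thm-language} from a deterministic two-counter machine~$\calM$ (\cref{figure:language-preservation}), with its single parameter~$\param$, and to re-reduce the halting problem of~$\calM$ to each of the four inclusion questions. First recall the two regimes established there: with the reference valuation $\pval$ defined by $\pval(\param)=0$, the only maximal run is $\sinit \to s_\infty \to s_\infty \to \cdots$, so $\Lang(\valuate\calA\pval)=\{a^\omega\}$; and for $\pval'(\param)>0$, either $\calM$ halts in $n$ steps and then $\pval'(\param)=n+1$ again yields $\Lang(\valuate\calA{\pval'})=\{a^\omega\}$ by \cref{lemma:reduction}, or $\calM$ does not halt and then, for \emph{every} $\pval'(\param)>0$, the automaton eventually deadlocks (\cref{lemma:reduction}), so $\Lang(\valuate\calA{\pval'})$ is a nonempty set of \emph{finite} words that in particular does not contain~$a^\omega$. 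Since $\param$ is the only parameter and $\pval(\param)=0$, the side condition $\pval'\neq\pval$ is just $\pval'(\param)>0$.

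For the \textbf{non-strict superset} variant, no modification is needed: $\Lang(\valuate\calA{\pval'})\supseteq\{a^\omega\}=\Lang(\valuate\calA\pval)$ for some $\pval'\neq\pval$ if and only if $a^\omega\in\Lang(\valuate\calA{\pval'})$ for some $\pval'(\param)>0$, i.e.\ if and only if $\calM$ halts. For the \textbf{non-strict subset} variant, likewise: if $\calM$ halts then $\pval'(\param)=n+1$ gives $\{a^\omega\}\subseteq\{a^\omega\}$, and if $\calM$ does not halt then every $\pval'(\param)>0$ gives a language containing a finite word, hence not included in~$\{a^\omega\}$; so again the question is equivalent to halting.

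The two strict variants require enlarging either the reference language or the ``$\param>0$'' language by one fresh symbol~$b$. For the \textbf{strict superset}, I would add a new sink location $\ell$ (no outgoing edge) and a single edge from~$\sinit$ to~$\ell$ labelled~$b$ and guarded by the ``$\param>0$'' test ($t=0 \et t<\param$, as in the paper): then $\Lang(\valuate\calA\pval)$ is still $\{a^\omega\}$ for $\pval(\param)=0$, while for any $\pval'(\param)>0$ the word~$b$ is accepted, so if $\calM$ halts we get $\{a^\omega,b\}\supsetneq\{a^\omega\}$ for $\pval'(\param)=n+1$, whereas if $\calM$ does not halt no $\pval'(\param)>0$ yields a language containing~$a^\omega$. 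For the \textbf{strict subset} (the statement as phrased), a nonempty proper subset of $\{a^\omega\}$ would have to be empty, which never occurs, so instead I would enlarge the reference language: add the sink~$\ell$ and an edge from~$\sinit$ to~$\ell$ labelled~$b$ and guarded by the ``$\param=0$'' test ($t=0 \et t=\param$). Then $\Lang(\valuate\calA\pval)=\{a^\omega,b\}$ for $\pval(\param)=0$, while for $\pval'(\param)>0$ the $b$-edge is disabled; so if $\calM$ halts, $\pval'(\param)=n+1$ gives $\{a^\omega\}\subsetneq\{a^\omega,b\}$, and if $\calM$ does not halt, every $\pval'(\param)>0$ gives a language containing a finite word over~$\{a\}$, which is neither~$b$ nor~$a^\omega$, hence not included in~$\{a^\omega,b\}$. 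In all four cases the inclusion holds for some $\pval'\neq\pval$ exactly when $\calM$ halts, proving undecidability.

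The hard part is not the (minimal) gadget but the case analysis on $\Lang(\valuate\calA{\pval'})$ as $\pval'(\param)$ ranges over \emph{all} positive rationals: the reduction's correctness relies on \cref{lemma:reduction} in both directions --- in the halting direction, that every maximal run of $\valuate\calA{\pval'}$ with $\pval'(\param)=n+1$ reaches~$s_\infty$, so that no spurious finite word can spoil a superset claim; and in the non-halting direction, that $\valuate\calA{\pval'}$ always eventually deadlocks, so that its language always contains a finite word (spoiling any subset claim) and never contains~$a^\omega$ (spoiling any superset claim), uniformly and not merely for the intended value~$n+1$.
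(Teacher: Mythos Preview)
Your proof is correct and follows essentially the same approach as the paper: the non-strict cases reuse the construction of \cref{figure:language-preservation} unchanged, and the two strict cases add a fresh $b$-labelled sink transition from~$\sinit$ guarded by $\param=0$ (for~$\subsetneq$) or $\param>0$ (for~$\supsetneq$), exactly as the paper does. One small slip in your closing paragraph: spurious finite words in the halting case would spoil a \emph{subset} claim (they would push the language outside $\{a^\omega\}$ or $\{a^\omega,b\}$), not a superset claim---extra words never hurt a superset---but this does not affect the argument itself.
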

\begin{proof}
We show that all four problems are undecidable:
\begin{itemize}[align=left]
\item[($\subseteq$)] The result follows directly from the encoding in
  \cref{figure:language-preservation}: if the two-counter machine
  halts, then the untimed language is $a^\omega$ for some positive
  value of~$\param$, as well as for~$\param=0$. If it does not halt,
  then the untimed language is made of finite words only
  when~$\param>0$.

\item[($\subsetneq$)] Consider again the encoding in
  \cref{figure:language-preservation}: add a transition from the
  initial location to a new location guarded with $\param = 0$, and
  this time labeled with action~$b$ (recall that all other transitions
  are labeled with~$a$).

  With this new transition, the untimed language for $\param = 0$
  becomes $\{a^\omega , b\}$.
  Now, if the two-counter machine halts, for some positive value of
  the parameter, the untimed language of the automaton
  is~$\{a^\omega\}$, which is a strict subset of the language of the
  automaton for~$\param=0$.
  On the other hand, if the machine does not halt, then for any
  positive parameter valuation, the automaton reaches a deadlock,
  hence its untimed language is a (non-empty) set of finite words
  in~$a^+$ (recall that $a^+$ denotes the set of all words made of an
  arbitrary number of~$a$s strictly greater than~0), and it is not a
  subset of~$\{a^\omega,b\}$.
\item[($\supseteq$)] Same argument as for $\subseteq$.
\item[($\supsetneq$)] We use a reasoning dual to the~$\subsetneq$ case:
  add a transition from the initial location to a new location guarded
  with $\param > 0$, labeled with action~$b$.

  Then for~$\param=0$, the untimed language still is~$\{a^\omega\}$.
  Then if the two-counter machine halts, then for some positive value
  of~$\param$, the untimed language is $\{a^\omega,b\}$; if the
  machine does not halt, the~language contains a finite word in~$a^+$,
  whatever the (positive) value of~$\param$.\qedhere
\end{itemize}
\end{proof}



We considered so far a definition of the untimed language as the set
of untimed words associated with maximal runs, \ie{} runs that are
either infinite or blocking.
An~alternative definition of the untimed language could be the set of
untimed words associated with all finite runs (non-necessarily
maximal); note that this definition yields an untimed language that is
prefix-closed.
We~prove that all results above (\ie{} \cref{thm-language,proposition:inclusion}) extend to this alternative definition.
We~first consider the equality of language, and then the four
variations of the problem, with (strict) inclusion instead of equality
of the set of untimed words.

\begin{proposition}\label{proposition:prefix-closed}
	Given a PTA~$\A$ and a parameter valuation~$\pval$, the
        problems of the existence of a valuation~$\pval' \neq \pval$
        such that the set of non-necessarily maximal finite untimed
        words of $\valuate{\A}{\pval'}$ is equal to, strictly included
        in, included in or equal to, larger than or equal to, or
        strictly larger than that of~$\valuate{\A}{\pval}$,
        respectively, are undecidable.
\end{proposition}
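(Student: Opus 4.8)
The plan is to revisit the constructions already used for \cref{thm-language} and \cref{proposition:inclusion} and check that each one still works when the untimed language is understood as the prefix-closed set of words along arbitrary finite runs rather than along maximal runs. The key observation is that in the encoding of \cref{figure:language-preservation} the two regimes $\pval(\param)=0$ and $\pval(\param)>0$ produce languages that differ not just in their maximal words but already in their finite prefixes, so the distinction between the two notions of language is harmless here. Concretely, when $\pval(\param)=0$ the only run is the infinite loop on $s_\infty$, so the prefix-closed finite language is exactly $\{a^n \mid n \geq 0\}$; when $\pval(\param)>0$ and $\calM$ halts, location $\shalt$ and then $s_\infty$ are reached, and again the prefix-closed finite language is $\{a^n \mid n\geq 0\}$; but when $\pval(\param)>0$ and $\calM$ does not halt, every run deadlocks after finitely many steps inside the simulation of $\calM$, and since the gadget is deterministic there is a uniform bound $N$ (depending on $\pval$) on the length of any run, so the prefix-closed finite language is $\{a^n \mid 0 \leq n \leq N\} \subsetneq \{a^n \mid n \geq 0\}$. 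Hence equality of the prefix-closed finite languages of $\valuate{\A}{\pval}$ and $\valuate{\A}{\pval'}$ — for the fixed reference valuation $\pval$ with $\pval(\param)=0$ — holds for some $\pval'\neq\pval$ iff $\calM$ halts, which gives undecidability of the equality variant.

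For the four inclusion variants I would reuse verbatim the extra $b$-labelled transitions introduced in the proof of \cref{proposition:inclusion}. First I would note that adding a transition guarded by $\param=0$ and labelled $b$ turns the prefix-closed finite language at $\param=0$ into $\{a^n\mid n\geq 0\}\cup\{b\}$, while at any halting positive valuation it stays $\{a^n\mid n\geq 0\}$ (a strict subset), and at a non-halting positive valuation it is $\{a^n\mid 0\leq n\leq N\}$, which is neither a subset nor a superset of $\{a^n\mid n\geq0\}\cup\{b\}$ since it misses $b$; this handles $(\subseteq)$ and $(\subsetneq)$. Symmetrically, adding the $b$-transition guarded by $\param>0$ gives, at a halting positive valuation, the language $\{a^n\mid n\geq0\}\cup\{b\}$, which strictly contains the $\param=0$ language $\{a^n\mid n\geq0\}$, whereas a non-halting positive valuation yields $\{a^n\mid 0\leq n\leq N\}\cup\{b\}$, which is not a superset of $\{a^n\mid n\geq0\}$; this handles $(\supseteq)$ and $(\supsetneq)$. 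In each case the reference valuation is the one with $\param=0$, and the equivalence with the halting problem is exactly as in \cref{proposition:inclusion}, only now read off from finite prefixes.

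The one point that genuinely needs care — and the main (mild) obstacle — is to justify that for a non-halting machine the run of $\valuate{\A}{\pval}$ with $\pval(\param)>0$ is not merely deadlocking but \emph{bounded in length}, so that the prefix-closed finite language is a proper subset of $\{a^n\mid n\geq 0\}$ rather than accidentally equal to it. This follows from \cref{lemma:reduction}: along any run, the value of clock $z$ strictly increases (modulo the $\param$-resets) by one per simulated step of $\calM$, it is never reset except when it reaches $\param$, and once it would exceed $\param-1$ on entering a module the run deadlocks; combined with the invariants bounding all clocks by $\param$ and the determinism of the gadget, this caps the number of $a$-transitions by a value $N=N(\pval)$. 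One should also remark that the $s_\infty$ self-loop in \cref{figure:language-preservation} is only reachable through $\shalt$, so a deadlocking run never produces $a^\omega$ and never escapes the finite prefix $\{a^n\mid n\leq N\}$. With these observations recorded, the proof reduces to the bookkeeping above, and as before it adapts immediately to discrete time and to integer-valued parameters.
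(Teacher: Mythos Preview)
Your treatment of the equality case, and of $(\supseteq)$ and $(\supsetneq)$, is correct and essentially matches the paper. The error is in $(\subseteq)$ and $(\subsetneq)$.

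You assert that, for a non-halting machine and $\pval'(\param)>0$, the prefix-closed language $\{a^n \mid 0\leq n\leq N\}$ is ``neither a subset nor a superset'' of the reference language $\{a^n\mid n\geq 0\}\cup\{b\}$. But it \emph{is} a subset: every $a^k$ with $k\leq N$ already belongs to $\{a^n\mid n\geq 0\}$. Hence with your construction \emph{every} positive valuation---halting or not---produces a strict subset of the reference language, and the reduction no longer separates halting from non-halting instances. This is exactly where the prefix-closed semantics breaks the argument from \cref{proposition:inclusion}: there, the deadlocked finite word $a^k$ fails to lie in $\{a^\omega,b\}$, but here $a^k$ does lie in $\{a^n\mid n\geq 0\}\cup\{b\}$. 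So reusing the gadget ``verbatim'' is not enough.

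The paper fixes this with a genuinely different gadget for $(\subseteq)$: rather than a $b$-transition guarded by $\param=0$, it adds a $b$-labelled transition from each main location~$\state_i$, guarded by $t=0 \land z=\param$. This transition becomes enabled only after $\param$ simulated steps of~$\calM$. If $\calM$ does not halt, then for every positive~$\param$ the simulation survives $\param$ steps before deadlocking, so some word containing~$b$ is produced and the language is \emph{not} included in~$a^*$. If $\calM$ halts, choosing $\param$ larger than the length of the halting computation ensures $\shalt$ (and then~$s_\infty$) is reached before $z=\param$, so $b$ never fires and the two languages coincide. The $(\subsetneq)$ case is then obtained on top of this modified automaton by additionally adding an extra word at $\param=0$, not on top of the original one.
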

\begin{proof}
  We~begin with handling untimed-language  equality,
  again relying on the encoding in \cref{figure:language-preservation}.
  For $\param = 0$, the untimed language is~$a^+$.  For $\param > 0$
  if the two-counter machine does not halt, recall that any run will
  eventually reach a deadlock (for any positive value of~$\param$),
  yielding $a^{\leq N} = \{a^k \mid 0\leq k\leq N\}$ (for some
  strictly positive $N \in \grandn$ depending on the value
  of~$\param$) as the untimed language.
  On~the other~hand, if~the machine halts, then for some value
  of~$\param$, the automaton has an infinite run, and the untimed
  language is~$a^+$.
%

  This gives that there exists a parameter valuation $\param > 0$ with
  the same set of non-necessarily maximal finite untimed words as for
  $\param = 0$ if, and only~if, the two-counter machine halts.

  We now prove the results for inclusion relations:
  \begin{itemize}[align=left]
  \item[($\supseteq$)] The argument for language equality above
    also applies in this case.
    
  \item[($\subseteq$)] We modify the construction to prove this case:
    from each main location~$\state_i$ of the automaton, we~add a
    transition labeled with~$b$ and guarded by $t=0\et z=\param$: this
    transition can only be taken after $\param$ steps of the
    two-counter machine have been simulated. Then if the two-counter
    machine does not halt, for any positive value of~$\param$,
    the~language contains at least one word that contains a~$b$, hence
    it cannot be included in the language for~$\param=0$; on the other
    hand, it the two-counter machine halts, then for some positive
    value of~$\param$ both languages are the same (hence the inclusion
    holds).

  \item[($\subsetneq$)] As for the same case in the proof of
    Prop.~\ref{proposition:inclusion}, it~suffices to modify the
    automaton in order to add an extra word (\eg~$b$) when~$\param=0$.


		
  \item [($\supseteq$)] Similarly, it~suffices to add one word to the
    language for any positive valuation of~$\param$.\qedhere
%
%
%
	\end{itemize}
\end{proof}

\subsection{Undecidability of the Trace Preservation Problem.}
\label{sec-trace}






In this section, we provide 
two
proofs of the following result:
\begin{restatable}{theorem}{thmdisctrace}
\label{theorem:trace-preservation-undec}
  The trace-preservation problem for PTA with one parameter is
  undecidable.
\end{restatable}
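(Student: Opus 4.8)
The statement to prove is the undecidability of the trace-preservation problem for PTA with a single parameter, and the paper announces "two proofs". The most natural first proof is simply to inherit the machinery already built for \cref{thm-language}: the reduction from the halting problem of two-counter machines through the PTA~$\calA$ of \cref{figure:reduction} and the correctness given by \cref{lemma:reduction}, wrapped in the gadget of \cref{figure:language-preservation}. So the first thing I would do is observe that in the construction of \cref{figure:language-preservation}, when $\pval(\param)=0$ the only run is the one going through the $\param=0$ edge into $s_\infty$ and looping there, yielding the single trace $(\sinit,a)(s_\infty,a)(s_\infty,a)\cdots$; and when $\pval(\param)>0$ the automaton is forced through $s_0$ into the simulation. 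If the two-counter machine halts, then for $\pval(\param)=n+1$ (with $n$ the length of the halting run) the unique run reaches $\shalt$ and then $s_\infty$, and — since the machine is deterministic and the PTA~$\calA$ is deterministic — the set of traces is exactly one finite-prefix-extended-to-$s_\infty^\omega$ trace. The subtlety is that traces record locations, so the trace for $\pval(\param)=n+1$ passes through $s_0,\bar s_0,\dots$ and is manifestly \emph{not} the trace $(\sinit,a)(s_\infty,a)^\omega$ obtained for $\param=0$.

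This means the verbatim language gadget does not work for traces, so I would adjust it: collapse the location names so that the "fast path" ($\param=0$) and the "slow path" ($\param>0$, simulation then $\shalt$) produce the \emph{same} sequence of locations. Concretely, I would route the $\param=0$ transition into the \emph{same} target location that the simulation eventually reaches — i.e. make $s_\infty$ reachable directly from $\sinit$ under $\param=0$ and also reachable from $\shalt$ under $\param>0$, and then pad: when $\param=0$, force the run to spend exactly the "right" number of steps in locations bearing the same names as in the simulation path. But padding to an a-priori-unknown length is impossible, so instead the cleaner route (and presumably the paper's second proof) is to build a \emph{fresh} reduction tailored to traces, where the target location is entered under \emph{both} $\param=0$ and the halting scenario, so that the distinguishing information is a single location pair $(\loc_{\mathrm{win}},a)$ appearing (or not) in the trace set; then "same trace set" $\iff$ "$\loc_{\mathrm{win}}$ reachable for the new valuation" $\iff$ machine halts. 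So the plan for proof~(1): re-examine the EF-emptiness reduction of \cref{lemma:reduction} and add a single "witness" location $\loc_{\mathrm{halt}}$ reached from $\shalt$ with a fresh action, plus an alternative edge from $\sinit$ to $\loc_{\mathrm{halt}}$ guarded by $\param=0$ (encoded as $t=0\et t=\param$) with that \emph{same} fresh action; arrange all other traces to coincide for $\param=0$ and for any $\param>0$ by having the simulation part be unreachable for $\param=0$ and, when $\param>0$, having the simulation be deterministic so its trace set is a single chain — then the trace set for $\param'>0$ equals that for $\param=0$ precisely when the chain terminates at $\loc_{\mathrm{halt}}$, i.e. the machine halts.

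For proof~(2), I would give a more self-contained reduction that does not lean on \cref{lemma:reduction} at all, perhaps reducing directly from the EF-emptiness problem for PTA (known undecidable for one parameter by the remark after \cref{remark:p-1}, or by \cite{AHV93} adapted) in a black-box way: given a one-parameter PTA $\calB$ with target $\loc_f$, build $\calA$ with a new initial location $\sinit$, an edge $\sinit \to \loc_{\sinit^{\calB}}$ labeled $a$ guarded $\param\ge 0$ (always true), an edge from every copy of $\loc_f$ to a sink $\loc_\top$ labeled $b$, and an edge $\sinit\to\loc_\top$ labeled $b$ guarded by a constraint false for the reference valuation $\pval_0$ but true exactly when $\param$ equals some other fixed value — arranged so that $\Traces(\valuate{\calA}{\pval_0})$ omits $(\loc_\top,b)$, while $\Traces(\valuate{\calA}{\pval'})$ contains it for some $\pval'$ iff $\loc_f$ is reachable in $\valuate{\calB}{\pval'}$ for some $\pval'$. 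The main obstacle in both cases is the same and it is the location-sensitivity of traces: unlike untimed words, I cannot freely rename or merge locations, so I must design the automaton so that the \emph{only} location-difference between "good" and "bad" valuations is a single designated sink, while every other reachable location and every branching coincides. Making that work while still faithfully simulating the counter machine (which necessarily traverses many locations for $\param>0$ that are simply absent for $\param=0$) is the delicate point; the resolution is to ensure those extra locations are reached \emph{only} when $\param>0$ and that, in that case, they form a single deterministic chain, so the "extra" trace information is exactly "does the chain reach $\loc_{\mathrm{halt}}$", which is undecidable. I expect the write-up to spend most of its effort verifying this trace-set equality in both directions, invoking determinism of $\calM$ and of $\calA$ to keep the trace sets singletons.
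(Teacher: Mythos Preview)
You correctly put your finger on the obstacle: traces record the sequence of locations, so the language-preservation gadget of \cref{figure:language-preservation} does not transfer as-is. But your proposed repair does not close the gap. In your construction, for $\pval(\param)=0$ the (unique) trace goes directly from~$\sinit$ to your witness location~$\loc_{\mathrm{halt}}$; for $\pval(\param)>0$ the (unique) trace traverses the entire simulation $s_0,\mybar{s_0},\copyone{s_{j_1}},s_{j_1},\ldots,\shalt$ before possibly reaching~$\loc_{\mathrm{halt}}$. These are \emph{different} traces --- they visit different locations --- so the two trace sets are never equal, whether or not the machine halts. Determinism buys you that each trace set is a singleton, but it does not make those singletons coincide. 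The same objection applies to your black-box EF-emptiness reduction: the traces through~$\calB$ under~$\pval_0$ and under~$\pval'$ visit different location sequences in general, so matching the final sink~$\loc_\top$ is not enough.

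The paper's fix is of a different nature: rather than trying to route two paths to the same endpoint, it makes the location sequence \emph{independent of the two-counter machine's computation}. The first proof collapses the whole PTA to a single location (at the cost of diagonal constraints and $k\cdot|\Loc|$ extra non-parametric clocks encoding ``which location we are in''); with one location, the trace is just the untimed word, and \cref{thm-language} applies verbatim. The second proof keeps a fixed bounded set of locations (eight) that the automaton cycles through on every simulated step, and stores the two-counter machine's \emph{state} in clock values (one clock per state, the current one being the clock with value~$\leq\param$) rather than in the PTA's location. In both cases the sequence of locations visited is the same for every~$\pval$ and every machine run, so trace equality reduces to language equality. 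That decoupling of location sequence from computation is the missing idea in your proposal.
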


%
We~propose two different proofs of this result:
\begin{enumerate}
\item the first proof (\cref{ss:one-location}) is by a generic
  transformation of (parametric) timed automata without zero-delay
  cycle into one-location timed automata; the~transformation~involves diagonal
  constraints, uses an unbounded number of clocks,
  but does not increase the number of parametric clocks;
\item the second proof (\cref{ss:bounded-locations}) does not involve
  diagonal constraints. It~involves eight locations, but
  with an unbounded number of transitions and an unbounded number of
  parametric clocks.
\end{enumerate}

\subsubsection{Encoding timed automata into one-location timed automata.}\label{ss:one-location}

Our first proof relies on the encoding of TA (with the restriction that no
sequence of more than $k$ transitions may occur in zero delay, for some~$k$;
equivalently, those timed automata may not contain zero-delay cycles) into an
equivalent TA with a single location; this reduction uses $k \times |\Loc|$
additional clocks (where $|\Loc|$ denotes the number of locations of~$\A$) and
requires diagonal constraints, \ie{} constraints comparing clocks with each
other (of the form $\clock_1 - \clock_2 \compOp c$).

This result extends to PTA (provided that $k$ does not depend on the
value of the parameters), and the additional clocks are
non-parametric.
Using this reduction, the undecidability of the language preservation
(\Cref{thm-language}) trivially extends to trace preservation.
Let us first show the generic result for~TA.

\begin{restatable}{proposition}{proponeloc}
\label{proposition:TA-collapse-locations}
  Let $\calA$ be a TA in which any run starts with a positive delay,
  and such that for some~$k$, no sequence of more than $k$
  transitions can occur in zero delay. Then there exists an equivalent
  TA~$\calA'$ with only one location and $k \times |\calA|+2$
  additional clocks, such that the timed languages of~$\calA$
  and~$\calA'$ are the same.
\end{restatable}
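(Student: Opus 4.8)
The plan is to encode the location of $\calA$ into the clock values of $\calA'$ by means of diagonal constraints, so that a single location of $\calA'$ can faithfully simulate all of $\calA$. First I would introduce, for each location $\loc$ of $\calA$ and each ``phase'' $j \in \{0,1,\dots,k-1\}$ counting how many zero-delay transitions have been chained since the last positive delay, a fresh clock $\clock_{\loc,j}$; together with two bookkeeping clocks $u$ and $w$ this gives the $k\cdot|\calA|+2$ additional clocks. The invariant of the single location will be $\KTrue$ (or whatever common invariant is needed), and the idea is that the automaton is ``in location $\loc$ in phase $j$'' exactly when $\clock_{\loc,j}$ has been reset most recently among these markers; more precisely I would maintain a designated clock $u$ that is reset on every discrete step, and record the current location-and-phase by the identity of whichever marker clock $\clock_{\loc,j}$ currently equals $u$. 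Then a guard of $\calA$ on an edge from $\loc$ to $\loc'$ is translated, on the corresponding self-loop of $\calA'$, into the conjunction of (i) the original guard (with its clocks untouched — the clocks of $\calA$ are carried over verbatim), (ii) the diagonal test $\clock_{\loc,j} - u = 0$ selecting the source location/phase, and (iii) the resets: the original resets of the edge, plus resetting $u$, plus resetting $\clock_{\loc',j'}$ where $j'=j+1$ if the edge is taken in zero delay and $j'=0$ after a positive delay.

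The subtlety is distinguishing ``zero-delay'' steps from ``positive-delay'' steps on a single location with no way to observe elapsed time directly. Here I would use the second bookkeeping clock $w$, reset only on positive-delay steps together with the phase counter, and split each translated edge into two variants: a ``zero-delay variant'' guarded additionally by $w - u \le 0$ (i.e. no time has passed since the last marker) that increments the phase, and a ``positive-delay variant'' guarded by $w - u > 0$ (equivalently $u > 0$ at the moment the edge is taken, recalling the hypothesis that every run starts with a positive delay, so the invariant ``$u$ measures time since the last discrete step'' is well-founded) that resets the phase to $0$ and resets $w$. Because no more than $k$ transitions occur in zero delay, the phase counter never needs to exceed $k-1$, so finitely many marker clocks suffice; any attempt to take a zero-delay step from phase $k-1$ has no matching edge, exactly as the corresponding behaviour is absent in $\calA$.

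Having set up the translation, the proof obligation is a bisimulation-style argument: I would define a relation between configurations $(\loc,\clockval)$ of $\calA$ and configurations $(\star,\clockval')$ of $\calA'$ — where $\star$ is the unique location — stating that $\clockval'$ agrees with $\clockval$ on the original clocks, that the marker clock $\clock_{\loc,j}$ equals $u$ for the current phase $j$, and that all other marker clocks are ``stale'' (strictly larger than $u$, hence never spuriously selected). Then I would check, by a routine case analysis on the two kinds of moves (delay transitions and the two variants of discrete transitions), that this relation is preserved in both directions and that matched transitions carry the same action and the same delay; maximality of runs is preserved because an outgoing transition exists in $\calA'$ iff one exists in $\calA$. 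Equality of the timed languages $\TLang(\calA)=\TLang(\calA')$ follows immediately. The main obstacle I anticipate is bookkeeping correctness of the zero-delay/positive-delay split — making sure the diagonal guards on $u$ and $w$ exactly capture ``time has elapsed since the previous marker reset'' without off-by-one errors at the very first step, which is precisely why the hypothesis that every run starts with a positive delay is needed; the rest is mechanical. Finally, I would remark that nothing in the construction touches the parameters: guards involving parameters are copied verbatim, the added marker and bookkeeping clocks are compared only with each other and with the constant $0$, so they are non-parametric, and $k$ is assumed independent of the parameter valuation — giving the claimed extension to PTA and, via \Cref{thm-language}, undecidability of trace preservation.
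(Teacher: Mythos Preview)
Your approach is essentially the paper's: encode the current location via a family of marker clocks indexed by location and ``phase'' (depth within the current zero-delay burst), together with a reference clock~$u$ reset at every step. However, there is a genuine gap in your guard design.

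You propose to select the source location/phase by the single diagonal test $\clock_{\loc,j}-u=0$, and you state the invariant that all other markers are strictly larger than~$u$. That invariant is \emph{not} maintained during a zero-delay burst. After a positive-delay step into~$\loc_0$ (resetting $u$ and~$\clock_{\loc_0,0}$) followed by zero-delay steps to~$\loc_1,\loc_2,\dots$ (resetting $u$ and $\clock_{\loc_1,1},\clock_{\loc_2,2},\dots$), \emph{all} of $\clock_{\loc_0,0},\clock_{\loc_1,1},\clock_{\loc_2,2}$ equal $u=0$. The~same equalities persist after a subsequent positive delay. Hence your guard $\clock_{\loc,j}-u=0$ would license a transition out of $(\loc_0,0)$ or $(\loc_1,1)$ even though the automaton is really at $(\loc_2,2)$, and the bisimulation you sketch fails in the $\calA'\to\calA$ direction. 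The paper's fix is to conjoin a freshness test on the \emph{next} layer: for the $i$-th zero-delay step it requires $\clock_0=0 \wedge \clock_{\loc}^{i-1}=0 \wedge \bigwedge_{\loc''}\clock_{\loc''}^{i}>0$, and for a positive-delay step it requires $\clock_0>0$ together with the disjunction over~$i$ of $\clock_{\loc}^{i}-\clock_0=0 \wedge \bigwedge_{\loc''}\clock_{\loc''}^{i+1}-\clock_0>0$. This extra conjunct is what pins down the \emph{maximal} used phase and makes the selection unique.

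Two smaller points. First, your zero/positive-delay test via $w-u$ is confused: with your resets, $w$ and $u$ are always reset together at positive-delay steps and only $u$ at zero-delay steps, but during a zero-delay burst no time elapses, so $w-u$ stays~$0$ throughout and never distinguishes the two cases; the test you actually want (and that the paper uses) is simply $u=0$ versus $u>0$. Second, initially all clocks are~$0$, so every marker satisfies $\clock_{\loc,j}-u=0$; you allude to the positive-initial-delay hypothesis but do not say how to exploit it. The paper spends the second bookkeeping clock on this: a never-reset clock~$x_1$ is used to recognise the very first transition via the guard $x_0>0\wedge x_0=x_1$, with dedicated self-loops encoding the transitions out of the initial location.
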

\begin{proof}
\looseness=-1
  We~begin with the intuition behind our construction: 
  each location~$\ell$ of the automaton~$\calA$ is encoded using
  an extra clock~$\clock_\ell$, with the following property:
  when location~$\ell$ is entered in the original automaton,
  the~associated clock~$\clock_\ell$ is reset in the one-location automaton.
  An~extra clock~$x_0$ is reset along each transition.
%
  This~way, when the automaton is visiting~$\ell$, it~holds
  $\clock_\ell - \clock_0 = 0$. However, the converse does not hold,
  because several transitions may be taken in zero delay.

  To overcome this difficulty, we~use $k+1$ copies of~$\clock_\ell$,
  numbered $x^0_\ell$ to~$x^{k}_{\ell}$. In~the encoding, each
  transition $t=(\ell,g,a,R,\ell')$ is encoded as several self-loops
  on the single location of~$\calA'$: for each $1\leq i\leq k$, one
  self-loop encodes the effect of taking transition~$t$ as the $i$-th
  transition in a sequence of zero-delay transitions; additionally,
  one transition encodes the effect of taking~$t$ right after a
  positive delay. Formally:
  \begin{itemize}

  \item for each $1\leq i\leq k$, one self-loop
    is guarded with the
    conjunction of~$g$ and
    \begin{equation*}
   \clock_0=0 \et 
    \left[ x^{i-1}_{\ell}=0 \et \bigwedge_{\ell''\in L}
      x^{i}_{\ell''}>0 \right];
    \end{equation*}
    The first part of the latter constraint imposes that $\clock_0=0$,
    hence no delay may have elapsed since the previous transition. The
    second part of the constraint characterizes that a sequence of
    exactly $i$ zero-delay transitions has been taken, and has reached
    location~$\ell$. In~order to encode the effect of transition~$t$,
    the self-loop carrying this guard is labeled with~$a$, and resets
    the clocks in~$R$ and $x^{i}_{\ell'}$ (no~need to reset~$x_0$ as
    it is already~zero).
    
  \item the last self-loop corresponds to transition~$t$ right
    after a positive delay; it~is guarded with the conjunction of the
    guard~$g$ and of the constraint
    \begin{equation*}
    \clock_0>0 \et
    \bigvee_{i\geq 0} \left[ x^i_{\ell}-\clock_0=0 \et \bigwedge_{\ell''\in L}
      x^{i+1}_{\ell''}-\clock_0>0 \right];
    \end{equation*}
    Indeed, after a sequence of $i$ zero-delay transitions (preceded by a
    non-zero-delay transition), it~holds $x^{i+1}_\ell-\clock_0>0$ for
    all~$\ell$, and only the location~$\ell$ reached at the end of the sequence
    satisfies $x^i_\ell=\clock_0$.
    This self-loop is labeled with~$a$, and resets the clocks in~$R$ as well as
    $x_0$ and~$x^0_{\ell'}$.

  \end{itemize}
  We~require, for the time being, that initially all clocks have
  positive values, except for~$x_0$ and~$x_{\ell_0}^0$. In~that case,
  there is a one-to-one correspondence between runs in~$\calA$ (never
  involving more than $k$ consecutive transitions in zero delay) and
  those in~$\calA'$, so that both automata accept the same timed language.
  
  Setting a special initial configuration is required for our encoding
  to be correct.  The~extra requirement that any run in~$\calA$ has to
  begin with a positive delay allows us to circumvent this problem:
  we~add an extra clock~$x_1$, which will never be reset (hence
  we~have $x_0=x_1$ only for the first transition); all transitions
  $(\ell_0,g,a,R,\ell')$ from the initial location of the original
  automaton then give rise to a self-loop in the one-location
  automaton, guarded with $g\et x_0>0\et x_0=x_1$, labeled with~$a$
  and resetting $x_0$, $x_{\ell'}^0$, and the clocks in~$R$.
\end{proof}


The above transformation can obviously be applied to PTA,
with the property that the
timed language is preserved for any valuation of the parameters.
\Cref{proposition:TA-collapse-locations} can be extended to~PTA as follows:


\begin{proposition}\label{proposition:PTA-collapse-locations}
  Let $\calA$ be a PTA for which there exists an integer~$k$ such that,
  for any parameter value,
  all runs start with a positive
  delay, and  no sequence of more than
        $k$ transitions occurs in zero delay.
	Then there exists an equivalent PTA~$\calA'$ with only one
  location and $k \times |\calA|+2$ additional clocks
  such that for any parameter valuation~$\pval$, the~timed languages of
  $\valuate{\calA}{\pval}$ and $\valuate{\calA'}{\pval}$ coincide. 
\end{proposition}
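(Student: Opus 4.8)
The plan is to observe that the construction in the proof of Proposition~\ref{proposition:TA-collapse-locations} is entirely \emph{syntactic} and parameter-agnostic: every guard of the one-location automaton $\calA'$ is obtained by conjoining a guard~$g$ of~$\calA$ with purely non-parametric constraints on the fresh clocks $x_0$, $x_1$ and the copies $x^i_\ell$ (equalities and strict inequalities of the form $x^i_\ell = 0$, $x^i_\ell > 0$, $x^i_\ell - x_0 = 0$, $x^i_\ell - x_0 > 0$, together with $x_0 = 0$ or $x_0 > 0$ or $x_0 = x_1$). Since the parameters of~$\calA$ occur only inside the inherited guards~$g$, and since substituting a parameter valuation~$\pval$ commutes with this syntactic transformation, we have $\valuate{\calA'}{\pval} = (\valuate{\calA}{\pval})'$, where the right-hand side denotes the one-location TA produced by applying the construction of Proposition~\ref{proposition:TA-collapse-locations} to the (plain) timed automaton $\valuate{\calA}{\pval}$.

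The key steps, in order, are as follows. First I would apply the construction of Proposition~\ref{proposition:TA-collapse-locations} verbatim to~$\calA$ viewed as a parametric automaton, defining $\calA'$ with its single location, its $k\times|\calA|+2$ extra clocks, and the self-loops described there, keeping the parametric guards~$g$ symbolic. Second, I would fix an arbitrary parameter valuation~$\pval$ and check, by inspecting the edge-by-edge definition, that $\valuate{\calA'}{\pval}$ is \emph{literally} the automaton one gets by first valuating ($\calA \mapsto \valuate{\calA}{\pval}$) and then collapsing locations: the only thing substitution touches is the symbolic part~$g$ of each guard, and the fresh-clock conjuncts are untouched because they contain no parameters. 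Third, I would invoke the hypothesis that the bound~$k$ is uniform (independent of~$\pval$) and that for every~$\pval$ all runs of $\valuate{\calA}{\pval}$ start with a positive delay; this is exactly the hypothesis needed so that Proposition~\ref{proposition:TA-collapse-locations} applies to the TA $\valuate{\calA}{\pval}$ for each~$\pval$, giving $\TLang(\valuate{\calA}{\pval}) = \TLang((\valuate{\calA}{\pval})')$. Combining with the identity from step two yields $\TLang(\valuate{\calA}{\pval}) = \TLang(\valuate{\calA'}{\pval})$ for all~$\pval$, which is the claim.

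There is essentially no hard mathematical content here — the work is entirely in Proposition~\ref{proposition:TA-collapse-locations}, and this proposition is just its ``parametrised'' corollary. The one point that deserves a sentence of care, and which I would flag as the only mild obstacle, is the exact commutation $\valuate{\calA'}{\pval} = (\valuate{\calA}{\pval})'$: one must make sure the construction of Proposition~\ref{proposition:TA-collapse-locations} depends on~$\calA$ only through its locations, its edges' actions, their reset sets, and the \emph{shape} of their guards (not through any parameter-dependent data such as which locations are reachable for a given~$\pval$), so that performing it before or after valuation gives the same object. Inspecting the construction confirms this: the set of self-loops, their labels, their reset sets, and the fresh-clock conjuncts are all determined by the syntax of~$\calA$ alone, so the diagram commutes and the proof is complete. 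In particular the extra clocks remain non-parametric, as already noted for Proposition~\ref{proposition:TA-collapse-locations}.
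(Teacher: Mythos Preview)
Your proposal is correct and matches the paper's approach exactly: the paper does not give a separate proof of this proposition, merely remarking that ``the above transformation can obviously be applied to PTA, with the property that the timed language is preserved for any valuation of the parameters.'' You have simply spelled out the content of that ``obviously''---namely that the construction is purely syntactic in the guards and uses only non-parametric constraints on the fresh clocks, so valuation commutes with the transformation and Proposition~\ref{proposition:TA-collapse-locations} applies pointwise for each~$\pval$ thanks to the uniform bound~$k$.
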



We~slightly modify the PTA built in the proof of
Theorem~\ref{thm-language} (see
Fig.~\ref{figure:language-preservation}) so that we can apply the
transformation above: for this, we~add a new location before~$\sinit$
(so as to enforce a positive initial delay), and we~constrain the
self-loop on~$s_\infty$ so that some time elapses between consecutive
occurrences.

Now, the resulting one-location PTA uses only one letter, so that its
untimed language corresponds to its set of untimed traces. This proves
our result.


\smallskip
As a remark, let us show that in the general case, deciding whether a
given PTA contains no reachable zero-delay cycles, for some valuation
of the parameters, is undecidable.

\begin{theorem}\label{theorem:zerocycle}
    The existence of a parameter valuation $\pval$ in a PTA $\calA$ such that $\valuate{\calA}{\pval}$ contains no reachable zero-delay cycle is undecidable.
\end{theorem}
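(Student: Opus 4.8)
The plan is to reduce the halting problem for two-counter machines, reusing the PTA~$\calA$ constructed in the proof of \cref{thm-language} (\cref{lemma:reduction}). Recall that, for $\pval(\param)>0$, $\valuate{\calA}{\pval}$ simulates $\calM$ for a number of steps bounded by its budget: the clock~$z$ counts steps, its encoded value is incremented by one at each simulated step, and the simulation deadlocks exactly when this value would exceed $\param-1$. Moreover $\calM$ halts in $n$ steps iff $\valuate{\calA}{n+1}$ has a run reaching~$\shalt$, whereas if $\calM$ does not halt then every valuation yields a deadlock. I~would turn the budget-overshoot deadlock into a \emph{reachable zero-delay cycle}, so that ``$\valuate{\calA'}{\pval}$ has no reachable zero-delay cycle'' becomes equivalent to ``the simulation under $\pval$ reaches $\shalt$''; this establishes that $\calM$ halts iff there exists $\pval$ such that $\valuate{\calA'}{\pval}$ has no reachable zero-delay cycle, which gives undecidability.

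Concretely, I~would build $\calA'$ from $\calA$ as follows. First make $\shalt$ a sink, so that a halting run terminates cleanly. Add a fresh location $\mathrm{Fail}$ bearing a single self-loop with guard $\KTrue$ and empty reset set: fired with delay~$0$ it leaves the configuration unchanged, hence it is a zero-delay cycle, reachable iff $\mathrm{Fail}$ is reachable. Then connect $\mathrm{Fail}$ to the overshoot configurations \emph{only}: from each main location~$s$, an edge to $\mathrm{Fail}$ guarded by $t=0\et z>\param-1$ (the overshoot seen as arriving at a main location with the step counter already past its budget), and from each location~$\copyone{s}$ an edge to $\mathrm{Fail}$ guarded by $t=\param\et z=\param$ (the overshoot seen as being unable to enter the next main location because the guard $z<\param$ fails). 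Finally add an edge $\sinit\to\mathrm{Fail}$ guarded by $t=0\et t=\param$, enabled exactly when $\pval(\param)=0$. All new edges carry an arbitrary action.

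For correctness: for $\pval(\param)>0$ the simulation itself contains no zero-delay cycle, since every self-loop resets a clock only when it reaches $\param>0$ and every module forces~$t$ to grow from~$0$ to~$\param$, so returning to any previously visited configuration costs a positive delay; thus the only candidate zero-delay cycle in $\valuate{\calA'}{\pval}$ is the one at $\mathrm{Fail}$. Next, the new edges fire exactly at a genuine overshoot of the valid run: along any successful or merely over-waiting computation the encoded value of~$z$ equals the number of simulated steps and stays $\le\param-1<\param$, and an over-wait reaches a main location only with $t>0$; hence none of the guards $t=0\et z>\param-1$ or $t=\param\et z=\param$ is ever met unless the step counter truly overshoots. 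Consequently, if $\calM$ halts in~$n$ steps then $\valuate{\calA'}{n+1}$ reaches $\shalt$ with encoded $z\le n=\param-1$, so $\mathrm{Fail}$ is unreachable and there is no reachable zero-delay cycle; and if $\calM$ does not halt then for every~$\pval$ the valid run overshoots (or, for $\pval(\param)=0$, the $\sinit$-edge is taken), reaching $\mathrm{Fail}$ and hence a reachable zero-delay cycle. This yields the desired equivalence.

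The~main obstacle is the second point above: identifying the precise configuration at which the valid simulation overshoots its budget, and verifying that no successful or over-waiting computation ever enables a $\mathrm{Fail}$-edge. Because of the interplay between the self-loops resetting~$z$ at~$\param$ and the total elapse of $\param$ or $2\param$ per step (as analysed in \cref{lemma:reduction}), the overshoot surfaces slightly differently for integer and for non-integer valuations, which is why I~use two separate $\mathrm{Fail}$-edges. A~single edge fired one instant too early would create a spurious reachable zero-delay cycle already in the halting case and break the reduction, so this step requires a careful re-reading of the timing, distinguishing a genuine overshoot (characterised by $t=0$, resp.\ $t=\param$, together with the value of~$z$) from a harmless over-wait, which always carries $t>0$ at a main location.
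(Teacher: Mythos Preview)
Your approach is correct and proves the theorem as literally stated, but it differs substantially from the paper's.

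The paper's argument is much shorter: take the EF-emptiness encoding of \cref{thm-language}, prepend an initial gadget enforcing $\param\geq 2$ (a~transition with guard $x=\param\wedge x\geq 2$ resetting all clocks) so that every simulation module provably takes at least one time unit and the simulation part has no zero-delay cycle, and add a self-loop with guard~$\KTrue$ on~$\shalt$. The only candidate zero-delay cycle is then at~$\shalt$, hence there exists~$\pval$ with a reachable zero-delay cycle iff~$\shalt$ is reachable for some~$\pval$, iff the machine halts.

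Observe, however, that this literally establishes undecidability of ``$\exists\pval$ such that $\valuate{\calA}{\pval}$ \emph{has} a reachable zero-delay cycle'', not of ``$\exists\pval$ such that $\valuate{\calA}{\pval}$ has \emph{no} reachable zero-delay cycle'' as the theorem reads: in the paper's construction, any sufficiently small $\pval\geq 2$ yields no reachable zero-delay cycle regardless of whether the machine halts, so that question is not separated by this reduction. The paper covers the stated formulation only through the remark following the proof, by lifting the same idea to the EF-\emph{universality} reduction of~\cite[Theorem~7]{ALR16ICFEM} (undecidability of ``for all $\pval$ there is a reachable zero-delay cycle'' being equivalent, by complementation, to undecidability of the stated existential-``no'' question). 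Your construction instead attacks the stated problem head-on: the unique zero-delay cycle sits at a $\mathrm{Fail}$ sink reached precisely on budget overshoot (or when $\param=0$), so halting yields a valuation $\pval=n+1$ that completes the simulation without ever enabling a $\mathrm{Fail}$-edge, while non-halting forces every valuation into~$\mathrm{Fail}$. The price is the case analysis you rightly flag as the main obstacle---two $\mathrm{Fail}$-edges to cover the overshoot at a main location versus at~$\copyone s$ (equivalently, non-integer versus integer~$\pval$), plus the check that no over-waiting run spuriously enables them when the machine halts---all of which is sound but considerably heavier than the paper's two-line modification together with its appeal to EF-universality.
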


\begin{proof}
    Consider the two-counter machine encoding for the EF-emptiness
    problem in the proof of Theorem~\ref{thm-language}. It relies on
    the fact that the values of the counters are encoded modulo
    $\param$, and that we can always find a big enough value of
    $\param$ to correctly encode an halting execution. We can
    therefore exclude values $0$ and $1$ from the possible values of
    $\param$ without changing anything in the proof. To do so we need
    only change the initial location to a fresh one and a transition
    form the new initial location to the former with guard $\clock =
    \param \wedge \clock\geq 2$, that resets all the clocks.

    Now, when $\param\geq 2$, it is easily seen that all gadgets take
    at least $1$ time unit to be traversed.
%
    So, if we add a self-loop with guard \emph{true} to the location
    encoding the halting state, then there exists a parameter
    valuation $v$ such that there is a reachable zero-delay cycle in
    $\valuate{\A}{\pval}$ if, and only~if,
    the two-counter machine halts.
\end{proof}

Knowing whether a PTA contains a zero-delay cycle for all parameter
valuations is also undecidable: the construction above can be lifted
to the undecidability proof for EF-universality found in
\cite[Theorem~7]{ALR16ICFEM}.



\subsubsection{Proof with bounded number of locations}
\label{ss:bounded-locations}

We propose a second proof, where we avoid the use of diagonal
constraints, at the expense of using unboundedly many parametric
clocks. This~proof follows the reduction of the proof of
Theorem~\ref{thm-language}, but with only eight locations: one location
is used to initialize the computation, and the other seven locations
are then visited iteratively, in order to first update the information
about the counters and then about the state of the two-counter
machine. The location of the machine is then stored using as many
clocks as the number of locations of the machine: the~clock with least
value (less than or equal to~$\param$) corresponds to the current location.

From a deterministic two-counter machine~$\calM$ with $n$ states, we
build a PTA with $n+4$ (parametric) clocks: $n$~clocks~$q_1$ to~$q_n$ are used
to store the current location of~$\calM$ (the~only clock with value less than
or equal to the value of the parameter~$\param$ corresponds to the current
state of~$\calM$), two clocks~$\clock_1$ and~$\clock_2$ store the values of
the two counters, clock~$t$ measures periods of $\param$ time~units, and
an~extra clock~$r$ stores temporary information along the~run.
Intuitively, the~PTA cycles between two main locations~$s_1$
and~$t_1$, each round in the cycle encoding the application of a
transition of the two-counter machine (see~\cref{fig-7loc}):
it~goes from~$s_1$ to $t_1$ for updating the values of the counters,
and from~$t_1$ back to~$s_2$ for updating the clock encoding the new
location of~$\calM$.

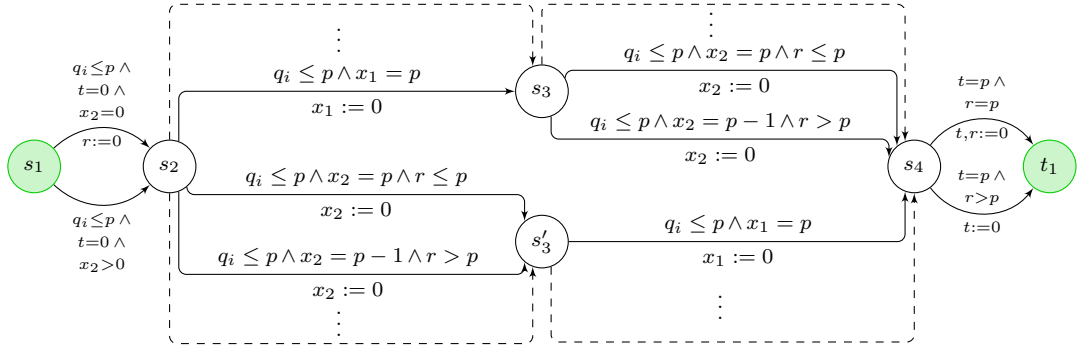
\begin{figure}[htb]
\centering
\scriptsize
\begin{tikzpicture}[xscale=.9]
\path[use as bounding box] (3,-2.4) -- (15.5,2.5);
\draw (2,0) node[rond,vert] (b) {} node {$s_1$};
\draw (4,0) node[rond] (c) {} node {$s_2$};
\draw (9.5,1) node[rond] (d) {} node {$s_3$};
\draw (9.5,-1) node[rond] (d') {} node {$s'_3$};
\draw (15,0) node[rond] (e) {} node {$s_4$};
\draw (17,0) node[rond,vert] (f) {} node {$t_1$};
\draw[-latex'] (b) edge[bend right=40] node[midway,below] {$\substack{q_i\leq
    p \et\\[1mm] t=0\et\\[1mm] \clock_2>0}$} (c); 
\draw[-latex'] (b) edge[bend left=40] node[midway,above] {$\substack{q_i\leq p
    \et \\[1mm] t=0\et \\[1mm] \clock_2=0}$} node[below] {$\substack{r:=0}$} (c);

\draw[-latex',rounded corners=1mm] (c.70) |-
node[above,pos=.75] {$q_i\leq
  p \et \clock_1=p$} node[below,pos=.75] {$\clock_1:=0$}  (d.-180);
\draw[-latex',rounded corners=1mm] (d'.0) -- +(2.5,0) node[above] {$q_i\leq
  p \et \clock_1=p$} node[below] {$\clock_1:=0$} -| (e.-110);

\draw[-latex',rounded corners=1mm] (c.-50) |- +(2.5,-.1) node[above,pos=1] {$q_i\leq
  p \et \clock_2=p \et r\leq p$} node[below] {$\clock_2:=0$} -| (d'.130);
\draw[-latex',rounded corners=1mm] (d.30) |- +(2.5,.1) node[above] {$q_i\leq
  p \et \clock_2=p \et r\leq p$} node[below] {$\clock_2:=0$} -| (e.130);

\draw[-latex',rounded corners=1mm] (c.-70) |- +(2.5,-1.1) node[above]
     {$q_i\leq p \et \clock_2=p-1 \et r>p$} node[below] {$\clock_2:=0$} -|
     (d'.-130); 
\draw[-latex',rounded corners=1mm] (d.-70) |- +(2.5,-.3) node[above]
     {$q_i\leq p \et \clock_2=p-1 \et r>p$} node[below] {$\clock_2:=0$} -|
     (e.160); 

\draw[-latex',rounded corners=1mm,dashed] (c.-90) |- +(2.5,-2) node[above]
     {$\vdots$} -| (d'.-110); 
\draw[-latex',rounded corners=1mm,dashed] (d.90) |- +(2.5,.8) node[below=-2mm]
     {$\vdots$} -| (e.110); 

\draw[-latex',rounded corners=1mm,dashed] (c.90) |- +(2.5,1.8) node[below]
     {$\vdots$} -| (d.110); 
\draw[-latex',rounded corners=1mm,dashed] (d'.-70) |- +(2.5,-1) node[above=2mm]
     {$\vdots$} -| (e.-90);






\draw[-latex'] (e) edge[bend left=50] node[above] {$\substack{t=p\et\\[1mm] r=p}$}
  node[below] {$\substack{t,r:=0}$} (f);
\draw[-latex'] (e) edge[bend left=-50] node[above] {$\substack{t=p\et\\[1mm] r>p}$}
  node[below] {$\substack{t:=0}$} (f);
\end{tikzpicture}
\caption{Encoding a two-counter machine (only one transition
  from~$q_i$, testing and possibly decrementing clock~$\clock_2$, has
  been encoded;
%
  the other transitions would add more transitions between $s_2$
  and~$s_4$).  The~module has two branches, going either through~$s_3$
  or~$s'_3$, depending on the relative values of~$x_1$
  and~$x_2$. Clock~$r$ is used to keep track whether the counter
  is~zero.}\label{fig-7loc}
\end{figure}

More precisely, after spending $\param+1$ time units in the initial
location of the PTA (not displayed on \cref{fig-7loc}), we~take a
transition resetting clocks~$q_1$, $\clock_1$, $\clock_2$, $r$
and~$t$. This sets the initial configuration for starting the
simulation of the two-counter machine. The~PTA then cycles between two
modules. The first module (depicted on \cref{fig-7loc}) is used to
update the values of the clocks encoding the counters: depending on
the instruction to perform (which only depends on the state of the
two-counter machine, as it is deterministic), it~first tests
(between~$s_1$ and~$s_2$) whether the clock ($\clock_1$ or~$\clock_2$)
encoding the counter to be updated by the transition
is~zero, and resets clock~$r$ if needed (in~order to remind that piece of information).
It~then
updates the clock
depending on the transition
of~$\calM$. It~also has to reset the other, non-updated clock when
it~reaches~$\param$; this may occur before or after the reset of the
clock being updated, hence the two branches in the module.
Finally, the module has a transition to its last location~$t_1$, available when~$t=\param$.

From~location~$t_1$, a second module updates the values
of~clocks~$q_i$ depending on the transition to be performed
in~$\calM$. It~suffices to reset the clock~$q_j$ corresponding to the
new location (while~$t=0$, and using the value of clocks~$q_i$ and~$r$
to get the next location to be visited; notice that this gives sufficient
information since we assume that our two-counter machine is
deterministic). The~automaton then returns to~$s_1$ after
letting~$\param$ time unit elapse, and resetting~$t$ and $q_j$ whose
values equal~$\param$.

This is a direct encoding of a two-counter machine as a~PTA. It~can easily be
adapted to follow the reduction scheme of Theorem~\ref{thm-language}, which
entails our result.
Notice that by adding two extra clocks and two
intermediary locations, we~can get rid of comparisons with~$p-1$ and~$p+1$, in
order to use only constraints of the form~$x\sim p$ (see Remark~\ref{remark:p-1}).

\subsection{Undecidability of the Robust Language-Preservation Problem}

The robust language-preservation problem extends the discrete one by
additionally requiring that the language is preserved on a ``line'' of
valuations originating from the reference valuation. This is not the
case of our previous proofs, which require a single parameter
valuation for the reduction to be correct.
In~this section, we~depart from the ``discrete'' setting of the
previous section, and use rational-valued parameters and the full power of
real-valued clocks.


\begin{restatable}{theorem}{thmroblang}
  \label{theorem:robust-language-preservation:undecidable:bPTA}
  The robust language-preservation problem for PTA with one (possibly bounded)
  parameter is undecidable.
\end{restatable}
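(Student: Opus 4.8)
The plan is to adapt the two-counter-machine reduction underlying \Cref{thm-language}, but to make it \emph{robust}: instead of a single parameter valuation witnessing the preservation of the language, we must arrange that preservation holds along an entire segment $\{\lambda \pval + (1-\lambda)\pval' \mid \lambda\in[0,1]\}$. The key idea is to exploit the freedom in rational-valued parameters: rather than using $\param$ to directly bound the number of steps of $\calM$ by an \emph{integer}, I would rescale time so that one "tick" of the simulation lasts $1$ time unit and the parameter $\param$ plays the role of a small perturbation around some reference value. Concretely, I would redo the construction of \cref{figure:reduction} with a reference valuation $\pval_0$ (say $\param_0 = 0$, giving the trivial language $\{a^\omega\}$ exactly as in \cref{figure:language-preservation}), and build a second "target" gadget that, for a nontrivial interval of parameter values, faithfully simulates $\calM$ up to a length growing as the parameter grows. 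The halting of $\calM$ then becomes equivalent to the existence of \emph{some} interval of parameter values — not merely a point — over which the simulation reaches $\shalt$, hence over which the language stays $\{a^\omega\}$.

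The steps, in order, would be: (1)~Fix a deterministic two-counter machine $\calM$ and the reference valuation $\pval_0$ with $\param_0=0$, for which the automaton loops in $s_\infty$ and accepts $\{a^\omega\}$ (as in \cref{figure:language-preservation}, using $t=0 \et t=\param$ to encode $\param=0$). (2)~Design a simulation module in which each simulated step of $\calM$ consumes exactly $1$ time unit, clocks $x_1,x_2$ encode the counters via their fractional offsets and $t$ is the tick clock reset every $1$ time unit, while a dedicated clock $z$ is compared against $\param$ to enforce that the simulation runs for roughly $\lfloor 1/\param\rfloor$ or $\lfloor \param\rfloor$ steps — i.e., the number of faithfully simulated steps is monotone in (a suitable function of) $\param$, and an entire subinterval of parameter values yields the same "at least $n$ steps" behaviour. (3)~Argue the correctness of the module exactly as in \Cref{lemma:reduction}: for parameter values in the chosen interval, if $\calM$ halts in $n$ steps then $\shalt$ and then $s_\infty$ are reached and the language is $\{a^\omega\}$; if $\calM$ does not halt then, for every parameter value, $z$ eventually violates its bound, producing a deadlock and a finite word in $a^+$. (4)~Combine with the robustness requirement: since the "good" behaviour holds on a full interval around a nontrivial valuation (and, by construction, connects to $\pval_0$ through values all exhibiting $\{a^\omega\}$), the robust language-preservation problem has a positive answer if and only if $\calM$ halts. (5)~Finally, note the construction uses a single parameter, and if desired one can confine $\param$ to a bounded interval $(0,1]$ — which is why the statement includes the "possibly bounded" clause.

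The main obstacle I anticipate is step~(2): in the original proof the parameter is essentially used as an \emph{integer} bound on the step count, and the reduction is correct at a single value $\pval(\param)=n+1$; making the simulation robust to a \emph{range} of parameter values requires redesigning the gadgets so that the encoding of "increment", "decrement" and "zero-test" still works when $\param$ ranges over an interval rather than hitting one exact value. The delicate points are (a)~ensuring the counter-offset bookkeeping on the self-loops (resetting a clock when it "reaches $\param$") degrades gracefully — or rather, continues to be exact — across the interval, which likely forces me to decouple the counter encoding from $\param$ entirely and let $\param$ govern \emph{only} the length clock $z$; and (b)~showing that along the whole segment between $\pval_0$ and the witnessing valuation the language is genuinely unchanged, not just at the endpoints, which may require interposing an auxiliary location whose guard is satisfied for all relevant $\param$ so that the trivial $\{a^\omega\}$ behaviour persists continuously. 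Once the counter encoding is made parameter-independent and only the halting length is parameter-controlled, the rest is a routine adaptation of \Cref{lemma:reduction} and the language-preservation wrapper of \cref{figure:language-preservation}.
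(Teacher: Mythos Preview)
Your plan correctly identifies the central difficulty: the reduction behind \Cref{thm-language} is correct only at isolated parameter values, and the robust version needs correctness over a whole segment. But step~(2) has a genuine gap. You propose to ``decouple the counter encoding from $\param$ entirely and let $\param$ govern only the length clock~$z$''; this cannot be made to work as stated. If the counter encoding is parameter-independent and each step takes one time unit, then either the clocks $x_1,x_2$ grow without bound (and the increment/decrement gadgets, which rely on resetting a clock when it hits a fixed value, break), or they are bounded by a fixed constant (and then can encode only bounded counters). Your phrase ``fractional offsets'' does not resolve this: there is no parameter-free way to pack an unbounded integer into a bounded clock and still test it exactly for zero.

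The paper does \emph{not} decouple. It changes the encoding so that the parameter controls the \emph{granularity} rather than an integer bound: with one tick per time unit and $x_i = 1 - p\cdot c_i$ at $t=0$, incrementing $c_i$ becomes ``reset $x_i$ when it reaches $1+p$'' (\cref{fig-incr-robust}), and the zero-test is $x_i=1$ at $t=0$. These gadgets are exact for \emph{every} $p\in(0,1)$ simultaneously, and the simulation is faithful whenever $p \leq 1/P$, with $P$ the maximal counter value along the run. Thus if $\calM$ halts, the halting location is reachable for the whole interval $(0,1/P]$, giving robustness for free. For the non-halting case (your step~(3)), the paper adds a third counter incremented every other step of the resulting three-counter machine; for any fixed $p>0$ this counter eventually exceeds $1/p$, at which point the increment gadget (which needs $x_i=1+p$ while $t\leq 1$) deadlocks, so the run is finite. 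With the wrapper of \cref{figure:language-preservation} and reference value $p=0$, one gets: $\calM$ halts iff there exists $p_0>0$ such that for all $p\in[0,p_0]$ the language is $\{a^\omega\}$. Boundedness is immediate since $p$ lives in $(0,1)$.
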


\begin{proof}
  We~begin by recalling from~\cite{ALR16ICFEM} a reduction
  of the halting problem for counter machines to the EF-emptiness
  problem for 1-parameter~PTA. The~proof is~then adapted to the
  language-preservation problem in the same way as for the proof of
  \cref{theorem:trace-preservation-undec}.

  The encoding of the two-counter machine is as follows: it~uses one
  rational-valued parameter~$p$, one clock~$t$ to tick every time unit, and
  two parametric clocks~$x_1$ and~$x_2$
  for storing the values of the counters~$c_1$ and~$c_2$, with
  $x_i=1-p\cdot c_i$ whenever $t=0$. 

  An initial transition is used to initialize the values of~$x_1$ and~$x_2$
  to~$1$, while it sets~$t$ to~zero. It~also checks that the value of~$p$ is
  in~$(0,1)$. Zero-tests are easily encoded by checking whether~$x_i=1$
  while~$t=0$. Incrementation is achieved by resetting clock~$x_i$ when
  it~reaches $1+p$, while the other clocks are reset when they reach~$1$
  (see~\cref{fig-incr-robust}).
  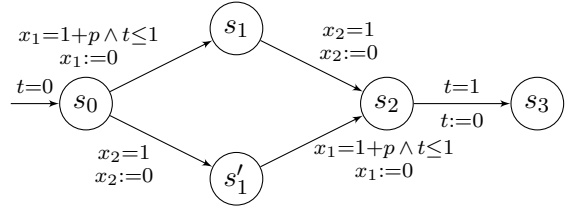
\begin{figure}[tb]
    \centering
    \begin{tikzpicture}
      \draw (0,0) node[rond] (a) {} node {$s_0$};
      \draw (2,1) node[rond] (b) {} node {$s_1$};
      \draw (2,-1) node[rond] (c) {} node {$s'_1$};
      \draw (4,0) node[rond] (d) {} node {$s_2$};
      \draw (6,0) node[rond] (e) {} node {$s_3$};
      \draw[-latex'] (a) -- (b) node[midway,above left=-6pt]
        {$\genfrac{}{}{0pt}{1}{x_1=1+p\et t\leq 1}{x_1:=0}$};
      \draw[-latex'] (a) -- (c) node[midway,below left=-3pt]
        {$\genfrac{}{}{0pt}{1}{x_2=1}{x_2:=0}$};
      \draw[-latex'] (c) -- (d) node[midway,below right=-6pt]
        {$\genfrac{}{}{0pt}{1}{x_1=1+p\et t\leq 1}{x_1:=0}$};
      \draw[-latex'] (b) -- (d) node[midway,above right=-3pt] 
        {$\genfrac{}{}{0pt}{1}{x_2=1}{x_2:=0}$};
      \draw[-latex'] (d) -- (e) node[midway,above] {$\scriptstyle t=1$} 
        node[midway,below] {$\scriptstyle t:=0$};
      \draw[latex'-] (a) -- +(-1,0) node[midway,above] {$\scriptstyle t=0$};
      \end{tikzpicture}
    \caption{Encoding incrementation with a rational
      parameter}\label{fig-incr-robust} 
  \end{figure}
  This way, exactly one time unit elapses in this module, and
  clock~$x_i$~is decreased by~$p$, which corresponds to
  incrementing~$c_i$. In \cref{fig-incr-robust}, the upper branch
  corresponds to the case when $c_1+1\leq c_2$ and the lower branch to
  the case when $c_2\leq c_1+1$. When both values are equal, both
  branches can be taken, with the same effect. Decrementing is handled
  similarly. Finally, notice that the use of the constraint~$x_i=1+p$
  can be easily avoided, at the expense of an extra clock.

  One~easily proves that if a (deterministic) two-counter
  machine~$\calM$ halts, then by
  writing~$P$ for the maximal counter value reached during its finite
  computation, the PTA above has a path to the halting location as soon as
  $0<p\leq 1/P$. Conversely, assume that the machine does not halt, and fix a
  parameter value~$0<p<1$. If some counter of the machine eventually
  exceeds~$1/p$, then at that moment in the corresponding execution in the
  associated PTA, the value of~$t$ when $x_i=1+p$ will be larger than~$1$, and
  the automaton will be in a deadlock. If~the counters remain bounded
  below~$1/p$, then the execution of the two-counter machine will be simulated
  correctly, and the halting state will not be reached.

  We~now adapt this construction to our language preservation problem. We~have
  to forbid the infinite non-halting run mentioned above. For this, we~add a
  third counter, which will be incremented every other step of the resulting
  three-counter machine, in the very same way as in the proof of
  \cref{theorem:trace-preservation-undec}.
  We~then have the property
  that if~$\calM$ does not halt, the~simulation in the associated PTA will be finite, for all non-zero parameter valuation.
  Adding locations~$s_\init$ and~$s_{\infty}$ as in
  \cref{figure:language-preservation}, we~get the result that the
  two-counter machine~$\calM$ halts if, and only~if, there is a parameter
  value~$v_0(p)>0$ such that all values $v(p)$ between~$0$ and~$v_0(p)$ give
  rise to timed automata $v(\calA)$ accepting the same language as for $v_0(p)=0$.

  Finally, we~notice that this reduction works even if we impose a positive
  upper bound on~$p$ (typically~1). 
\end{proof}

\subsection{Undecidability of the Robust Trace Preservation Problem}



Combining \cref{theorem:robust-language-preservation:undecidable:bPTA}
and the arguments of \cref{sec-trace}, we~get:
\begin{restatable}{theorem}{thmrobtr}
\label{theorem:robust-trace-preservation:undecidable:bPTA}
The robust trace-preservation problem is undecidable for PTA with one
(possibly bounded) parameter.
\end{restatable}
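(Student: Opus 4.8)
The plan is to combine the rational-parameter reduction used in the proof of \cref{theorem:robust-language-preservation:undecidable:bPTA} with the location-collapsing transformation of \cref{sec-trace}, so that language preservation along a line of valuations becomes trace preservation along that same line. Concretely, I start from the PTA constructed in the proof of \cref{theorem:robust-language-preservation:undecidable:bPTA}: a one-parameter PTA $\calA$, with $p$ ranging over (a bounded interval of) the rationals, together with the extra locations $\sinit$ and $s_\infty$ and the third ``clock counter'' that guarantees that every non-halting simulation is finite. Recall that this $\calA$ has the property that $\calM$ halts if, and only if, there is a value $v_0(p)>0$ such that for all $\lambda\in[0,1]$ the valuation $\lambda\cdot v_0 + (1-\lambda)\cdot 0$ yields a timed automaton accepting the same untimed language as $v_0(p)=0$.

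Next I would massage $\calA$ so that \cref{proposition:PTA-collapse-locations} applies: add a fresh location before $\sinit$ to force a strictly positive initial delay, and constrain the self-loop on $s_\infty$ so that a positive amount of time elapses between consecutive occurrences (exactly as done after the statement of \cref{proposition:PTA-collapse-locations}). One must check that the constant $k$ bounding the length of zero-delay transition sequences is independent of the value of $p$ — this holds because in the reduction each simulated step of $\calM$ consumes exactly one time unit, so only a bounded number of edges can fire in zero delay. Applying \cref{proposition:PTA-collapse-locations} then produces an equivalent one-location PTA $\calA'$ whose timed language (hence untimed language) equals that of $\calA$ for every parameter valuation, and in particular along the whole segment $[0,v_0(p)]$.

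Finally, since $\calA'$ has a single location, its untimed language and its set of untimed traces carry exactly the same information: a trace is just a sequence $(\ell,a_i)_i$ with $\ell$ the unique location, so $\UTraces(\valuate{\calA'}{\pval}) = \UTraces(\valuate{\calA'}{\pval'})$ if and only if $\ULang(\valuate{\calA'}{\pval}) = \ULang(\valuate{\calA'}{\pval'})$. Therefore the robust language-preservation instance for $\calA$ is equivalent to the robust trace-preservation instance for $\calA'$, and undecidability of the former (\cref{theorem:robust-language-preservation:undecidable:bPTA}) transfers. The bound on $p$ is preserved throughout, giving the ``possibly bounded'' clause. Alternatively one could instead invoke the bounded-number-of-locations encoding of \cref{ss:bounded-locations} adapted to the rational-parameter scheme, which avoids diagonal constraints; either route works.

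The main obstacle I anticipate is purely bookkeeping rather than conceptual: verifying that the hypotheses of \cref{proposition:PTA-collapse-locations} genuinely hold for the modified $\calA$ — namely that $k$ is uniform in $p$ (including on the added $\sinit$/$s_\infty$ gadget and the third-counter mechanism), and that forcing a positive initial delay and a positive delay on the $s_\infty$ self-loop does not disturb the equivalence ``$\calM$ halts $\iff$ the language is preserved along the line''. Once that is checked, the reduction from robust language preservation to robust trace preservation is immediate via the single-location observation.
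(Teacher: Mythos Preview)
Your proposal is correct and follows essentially the same approach as the paper: combine the rational-parameter construction of \cref{theorem:robust-language-preservation:undecidable:bPTA} with the location-collapsing transformation of \cref{ss:one-location} (after the same massaging of $\sinit$ and $s_\infty$), and observe that for a one-location automaton trace preservation and language preservation coincide; the paper likewise notes that the alternative bounded-locations encoding of \cref{ss:bounded-locations} can be used instead.
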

Both proofs developed in \cref{sec-trace} can be applied here: 
\begin{itemize}
\item the first proof, using diagonal constraints (\cref{ss:one-location}), applies as the PTA built
  above does not contain zero-delay cycles;
\item the second proof (\cref{ss:bounded-locations}) also applies, by using one clock~$s_i$ per
  location~$\ell_i$ of the two-counter machine with the encoding that the
  clock corresponding to the current location~$\ell_i$ is the only clock~$s_i$
  with value less than or equal to~$1$.
	Notice that we keep a bounded number of locations in that case. 
\end{itemize}


\subsection{Undecidability over Bounded Time}
\label{ss:time-bounded}


Let us now consider decision problems over bounded time, \ie{} when 
the property must additionally be satisfied within $T$ time units, for a given constant $T\geq 0$, and the system can thus be studied only inside that time frame.
We first prove that the EF-emptiness problem is undecidable for PTA with three clocks and two rational-valued parameters, over bounded (dense) time.
This result was already mentioned in~\cite{Jovanovic13}; however, we provide here a full (and different) proof.
Most importantly, this result will then be used to prove the undecidability of the problems considered earlier in this section in the time-bounded setting too.

\begin{theorem}\label{theorem:robust-language-preservation:undecidable:bPTA:time-bounded}
	The EF-emptiness problem is undecidable over bounded time for
        PTA with three clocks and two parameters.
\end{theorem}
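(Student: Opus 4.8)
The plan is to reduce the halting problem for two-counter machines to the EF-emptiness problem over bounded time, adapting the "ticking clock" encoding of \cref{thm-language} so that the entire simulation fits inside a fixed time horizon~$T$ (we may take $T=1$). The key idea is to simulate the $n$-th step of the machine during a time interval of length roughly $2^{-n}$, so that the whole computation, however long, consumes less than one time unit. Concretely, I would use one parameter~$p$ playing the role of a "step-length scale" and a second parameter~$a$ (or a fixed rational) controlling the geometric contraction ratio~$1/2$; the three clocks are: one clock~$t$ acting as the tick for the current step (reset at the end of each simulated step), and two clocks~$x_1,x_2$ encoding the counter values, with the convention that when $t=0$ at the start of step~$n$, the value of~$x_i$ is an affine function of~$c_i$ scaled by the current step length. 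Since step lengths shrink geometrically, I must re-scale the counter encodings at each step; this re-scaling is where the parameters do the real work.

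First I would set up the gadgets. Each increment/decrement/zero-test module works as in \cref{thm-language}: within a single step, $t$ runs from~$0$ up to the current step length~$\delta_n$, the counter clocks are reset at appropriate sub-points to encode $\pm 1$ on the counter, and the relevant comparison ($x_i=0$ when $t=0$) implements the zero-test. The crucial modification is that at the end of step~$n$ we do not just reset~$t$ to begin step~$n{+}1$ with the same length, but we \emph{halve} the step length, which amounts to halving the significance of one unit in the counter encoding. Using two parameters lets me encode both the absolute initial scale and the contraction: one parameter fixes $\delta_0$ (and hence, via a guard like $t=p$, bounds how long the first step lasts), and arithmetic with the second parameter realizes the halving at each round (e.g.\ a guard comparing a clock to $t/2$, encoded by doubling via an auxiliary passage, or directly by a constraint of the form $2x = t$ obtained through the diagonal-free tricks already used in the paper). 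The invariant $t\le p$ everywhere, combined with the geometric shrinking, guarantees that the supremum of all reachable time instants is bounded by $2p < T$ for suitable~$p$.

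Then I would prove the correspondence lemma, analogous to \cref{lemma:reduction}: the two-counter machine halts if and only if there is a parameter valuation $(p,a)$ (with $p$ small enough, say $p\le T/2$) under which the PTA reaches the halting location, and moreover the reached time is strictly below~$T$; conversely, if the machine does not halt, then for every valuation either the run is infinite but Zeno (all steps crammed into $[0,2p)$, never reaching $\shalt$) or it deadlocks. The time-boundedness is automatic from the construction because each step strictly shrinks, so finitely much time is ever consumed; I must only check that a \emph{halting} run of length~$\ell$ is simulable, which holds as long as the counter clocks never need to exceed their current step length — this is exactly the reason for choosing the geometric schedule with ratio $1/2$ rather than something slower. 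Finally, I would remark that three clocks and two parameters suffice precisely because one parameter is consumed by the absolute scaling/time bound and the second by the re-scaling arithmetic, while the counter encoding reuses the standard two-clock trick.

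The main obstacle I anticipate is implementing the geometric \emph{re-scaling} of the counter encodings at each step using only clock resets and linear guards with two parameters, without diagonal constraints and without blowing up the clock count: halving "one unit of counter" each round means the affine relation between $x_i$ and $c_i$ changes every step, and one must verify that the guards needed (comparisons like $x_i = t/2$, or $x_i = \delta_n$ where $\delta_n$ itself is only implicitly represented by~$t$) are all expressible as atomic constraints of the allowed form $\clock\compOp\param+c$ or $\clock\compOp c$. I expect this to require a careful choice of which quantity the parameter~$p$ names (e.g.\ letting $t$ itself carry the current scale so that "halving" is a guard $2x_i = t$ realized by running an auxiliary clock at the same rate and testing equality after the value has been duplicated), together with an inductive invariant stating the exact affine encoding valid at the start of step~$n$, which is then shown preserved by each module — this bookkeeping, rather than any deep new idea, is the technical heart of the proof.
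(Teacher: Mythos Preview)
Your proposal has a genuine gap: the geometric re-scaling you rely on cannot be expressed in the PTA syntax. Guards are fixed atomic constraints of the form $\clock\compOp\param+c$ or $\clock\compOp c$ with $c$ an integer constant; they are chosen once at automaton construction time and cannot depend on the step index~$n$. Your scheme needs to compare clocks to quantities like $\delta_n = p\cdot 2^{-n}$, or to enforce relations such as ``$2x_i = t$'', and neither is available: there is no multiplication of a clock by a constant, no comparison of a clock to another clock (the theorem is stated for PTA without diagonal constraints), and no way to make the constant~$c$ in $\clock\compOp\param+c$ vary from round to round. You yourself flag this as the main obstacle, but the workarounds you sketch (duplicating a value via an auxiliary clock, testing $2x_i=t$) all fall outside the allowed constraint language. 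In short, the very mechanism that is supposed to compress the computation into bounded time is not implementable in the model.

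The paper sidesteps this entirely with a much simpler idea: keep the step length \emph{constant}, but make it a parameter. Concretely, take the rational-parameter encoding of \cref{theorem:robust-language-preservation:undecidable:bPTA} (where $x_i = 1 - p\cdot c_i$ when $t=0$ and each gadget consumes exactly one time unit), rename $p$ to~$p_2$, and replace every occurrence of the constant~$1$ by a fresh parameter~$p_1$. Now each gadget consumes $p_1$ (or $p_1+p_2$) time units, and the encoding becomes $x_i = p_1 - p_2\cdot c_i$. If the machine halts in $\runlength$ steps with maximal counter value~$c$, one simply chooses $p_1,p_2$ small enough that $\runlength(p_1+p_2)\le 1$ and $p_2\le p_1/c$; if the machine does not halt, then for every positive~$p_1$ the simulation exhausts the time budget after at most $1/p_1$ steps without reaching $\shalt$. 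No re-scaling, no Zeno trick, and all guards are of the required form. The point you missed is that for EF-emptiness one only needs \emph{some} valuation to work when the machine halts; since halting computations are finite, a fixed tiny step length suffices, and there is no need to accommodate arbitrarily long computations inside the time bound.
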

\begin{proof}
	We reduce from the halting of a two-counter machine.
	Let~us reuse the encoding of the proof of \cref{theorem:robust-language-preservation:undecidable:bPTA}, and modify it as follows:
	\begin{itemize}
		\item The system is studied over 1 time unit (\ie{} $T=1$);
		\item We rename $\param$ into~$\param_2$;
		\item We replace any occurrence of ``1'' with a new parameter~$\param_1$; intuitively, this parameter will be small enough (compared to~1) to encode the length of the execution of the machine; in addition, $\param_1$ must be sufficiently large when compared to~$\param_2$, so that $\param_2$ can encode the maximum value of the counters.
		With our variables replacing, when $t = 0$, we now have the encoding $x_i = p_1 - p_2 \cdot c_i$.
		For any positive valuation of $p_2$, the maximum value of the counter that our encoding can support therefore becomes $p_1 / p_2$.\ea{added to answer reviewer 1}\label{newtext:encoding:TB}
	\end{itemize}
	We give the modified increment gadget in \cref{figure:time-bounded:increment} and the decrement gadget in \cref{figure:time-bounded:decrement}.
	The increment gadget requires $\param_1$ time units to be traversed, and the decrement gadget requires $\param_1 + \param_2$ time units.
	The zero-test gadget (which requires $0$ time unit in \cref{theorem:robust-language-preservation:undecidable:bPTA}) is modified in an appropriate manner to require $\param_1$ time units (see \cref{figure:time-bounded:0-test}).
	Now, since any gadget requires at least $\param_1$ time units, it is clear that, for any value of $\param_1 > 0$, the number of operations that the machine can perform is finite, since the system executes over $1$~time unit.
	
	The initial gadget constrains $\param_1$ to be strictly positive, and ensures that $x_1 = x_2 = 1$ while $t = 0$.
    In the gadget for incrementation, the upper branch corresponds to
    the case when $c_1 + 1 \leq c_2$ and the lower branch to $c_2 \leq
    c_1 + 1$. Similarly, in the decrementation gadget, the upper
    branch corresponds to $c_1 \leq c_2 + 1$ and the lower branch to
    $c_2 + 1\leq c_1$. Finally, in the zero-test gadget, the~upper
    branch corresponds to $c_1\leq c_2$ and the lower branch to
    $c_2\leq c_1$. In~all these cases when both values are equal,
    both branches can be taken, with the same effect.

	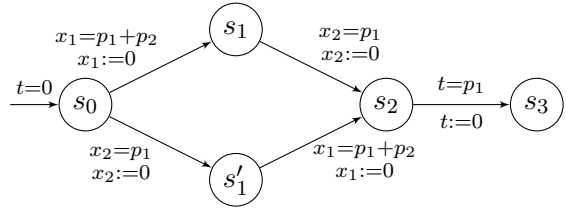
\begin{figure} [t]
	\centering
	\begin{tikzpicture}
		\draw (0,0) node[rond] (a) {} node {$s_0$};
		\draw (2,1) node[rond] (b) {} node {$s_1$};
		\draw (2,-1) node[rond] (c) {} node {$s'_1$};
		\draw (4,0) node[rond] (d) {} node {$s_2$};
		\draw (6,0) node[rond] (e) {} node {$s_3$};
		\draw[-latex'] (a) -- (b) node[midway,above left=-6pt]
			{$\genfrac{}{}{0pt}{1}{x_1=\param_1+\param_2
            }{x_1:=0}$};
		\draw[-latex'] (a) -- (c) node[midway,below left=-3pt]
			{$\genfrac{}{}{0pt}{1}{x_2=\param_1}{x_2:=0}$};
		\draw[-latex'] (c) -- (d) node[midway,below right=-6pt]
			{$\genfrac{}{}{0pt}{1}{x_1=\param_1+\param_2
            }{x_1:=0}$};
		\draw[-latex'] (b) -- (d) node[midway,above right=-3pt] 
			{$\genfrac{}{}{0pt}{1}{x_2=\param_1}{x_2:=0}$};
		\draw[-latex'] (d) -- (e) node[midway,above] {$\scriptstyle t=\param_1$} 
			node[midway,below] {$\scriptstyle t:=0$};
		\draw[latex'-] (a) -- +(-1,0) node[midway,above] {$\scriptstyle t=0$};
	\end{tikzpicture}

	\caption{Encoding incrementation with rational parameters over bounded time}
	\label{figure:time-bounded:increment}
	\end{figure}
	
	\begin{figure} [t]
	\centering
	\begin{tikzpicture}
		\draw (0,0) node[rond] (a) {} node {$s_0$};
		\draw (2,1) node[rond] (b) {} node {$s_1$};
		\draw (2,-1) node[rond] (c) {} node {$s'_1$};
		\draw (4,0) node[rond] (d) {} node {$s_2$};
		\draw (6,0) node[rond] (e) {} node {$s_3$};
		\draw[-latex'] (a) -- (b) node[midway,above left=-6pt]
			{$\genfrac{}{}{0pt}{1}{x_1=\param_1
            }{x_1:=0}$};
		\draw[-latex'] (a) -- (c) node[midway,below left=-3pt]
			{$\genfrac{}{}{0pt}{1}{x_2=\param_1+\param_2}{x_2:=0}$};
		\draw[-latex'] (c) -- (d) node[midway,below right=-6pt]
			{$\genfrac{}{}{0pt}{1}{x_1=\param_1
            }{x_1:=0}$};
		\draw[-latex'] (b) -- (d) node[midway,above right=-3pt] 
			{$\genfrac{}{}{0pt}{1}{x_2=\param_1+\param_2}{x_2:=0}$};
		\draw[-latex'] (d) -- (e) node[midway,above] {$\scriptstyle t = \param_1 + \param_2$} 
			node[midway,below] {$\scriptstyle t:=0$};
		\draw[latex'-] (a) -- +(-1,0) node[midway,above] {$\scriptstyle t=0$};
	\end{tikzpicture}

	\caption{Encoding decrementation with rational parameters over bounded time}
	\label{figure:time-bounded:decrement}
	\end{figure}
	
	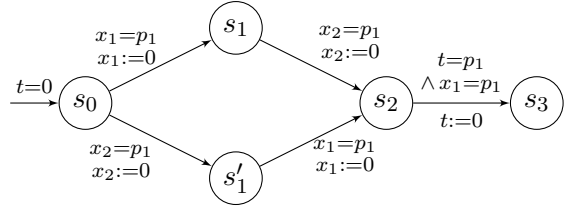
\begin{figure}[tb]
		\centering
		\begin{tikzpicture}
		\draw (0,0) node[rond] (a) {} node {$s_0$};
		\draw (2,1) node[rond] (b) {} node {$s_1$};
		\draw (2,-1) node[rond] (c) {} node {$s'_1$};
		\draw (4,0) node[rond] (d) {} node {$s_2$};
		\draw (6,0) node[rond] (e) {} node {$s_3$};
		\draw[-latex'] (a) -- (b) node[midway,above left=-6pt]
			{$\genfrac{}{}{0pt}{1}{x_1=p_1}{x_1:=0}$};
		\draw[-latex'] (a) -- (c) node[midway,below left=-3pt]
			{$\genfrac{}{}{0pt}{1}{x_2=p_1}{x_2:=0}$};
		\draw[-latex'] (c) -- (d) node[midway,below right=-6pt]
			{$\genfrac{}{}{0pt}{1}{x_1=p_1}{x_1:=0}$};
		\draw[-latex'] (b) -- (d) node[midway,above right=-3pt] 
			{$\genfrac{}{}{0pt}{1}{x_2=p_1}{x_2:=0}$};
		\draw[-latex'] (d) -- (e) node[midway,above] {$\genfrac{}{}{0pt}{1}{t=p_1}{\et x_1=p_1}$}
			node[midway,below] {$\scriptstyle t:=0$};
		\draw[latex'-] (a) -- +(-1,0) node[midway,above] {$\scriptstyle t=0$};
		\end{tikzpicture}
		\caption{Encoding 0-test over bounded-time}\label{figure:time-bounded:0-test} 
	\end{figure}

  Let us prove that there exists a run reaching \shalt{} in at most 1 time unit if, and only~if, the two-counter machine halts.
	\begin{enumerate}
		\item Assume that the machine does not halt.
            If $\param_1\leq 0$ then the initial gadget cannot be traversed. Now consider $\param_1>0$.
			In this case, whatever the value of the parameters, after a maximum number of steps (at most $\frac{1}{\param_1}$), one full time unit will elapse without the system reaching \shalt{}.
			In addition, if the value of~$\param_2$ is not small enough to encode the maximum value of the counters over these 
            steps, an increment gadget will block, again without reaching \shalt{}.
		Hence if the two-counter machine does not halt, \shalt{} cannot be reached within 1 time unit.
		
	      \item Assume that the machine halts: in this case, if
                $c$ is the maximum value of both $C_1$ and $C_2$ over
                the (necessarily finite) halting execution of the
                machine, and if $\runlength$ is the length of this
                execution, and if $c>0$, then for valuations such that
                $\param_2 \leq \frac{\param_1}{c}$ and sufficiently
                small valuations of $\param_1$ and $\param_2$ (at most
                $\param_1 + \param_2 \leq \frac{1}{\runlength}$ as any
                gadget takes at most $\param_1+\param_2$ time units),
                then there exists one run that correctly simulates the
                machine, and eventually reaches \shalt{}.
		This set of valuations is non-empty: for example
		if $c>0$, then we can choose
                $\param_2 = \frac{1}{(c+1)\runlength}$
                and $\param_1 = \frac{c}{(c + 1) \runlength}$
		(since in the worst case, the sequence duration
                is $\runlength(\param_1 + \param_2)$);
                if $c=0$ then $\param_1 = \param_2 = \frac{1}{\runlength}$ (to allow for up to $\frac{\runlength}{2}$ decrements or $\runlength$ zero-tests).
	Hence, if the two-counter machine halts, there exist parameter valuations for which a run reaches \shalt{} within $1$ time unit.\qedhere
	\end{enumerate}
\end{proof}


\begin{theorem}
	The robust language-preservation problem for PTA with two (possibly bounded) parameters is undecidable over bounded time.
\end{theorem}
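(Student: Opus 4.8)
The plan is to combine the time-bounded reduction of \cref{theorem:robust-language-preservation:undecidable:bPTA:time-bounded} with the wrapping construction used to prove \cref{theorem:robust-language-preservation:undecidable:bPTA} (depicted on \cref{figure:language-preservation}). Starting from a deterministic two-counter machine~$\calM$, I reuse the PTA~$\calA_0$ built in \cref{theorem:robust-language-preservation:undecidable:bPTA:time-bounded}: it has three clocks, two parameters~$\param_1$ and~$\param_2$, is studied over $T=1$ time unit, realises the encoding $x_i=\param_1-\param_2\cdot c_i$ whenever $t=0$, and each of its gadgets takes between~$\param_1$ and~$\param_1+\param_2$ time units. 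From~$\calA_0$ I build~$\calA$ by labelling every edge with a single letter~$a$, and adding two fresh locations $\sinit$ (the new initial location) and~$s_\infty$, a transition from~$\sinit$ to the former initial location of~$\calA_0$ guarded by $\param_1>0$, a transition from~$\sinit$ to~$s_\infty$ guarded by $\param_1=0\et\param_2=0$, a transition from~$\shalt{}$ to~$s_\infty$, and a $\KTrue$-guarded self-loop on~$s_\infty$, all again carrying the letter~$a$. Note that, unlike in the proof of \cref{theorem:robust-language-preservation:undecidable:bPTA}, no auxiliary ``third counter'' is needed: over bounded time with $\param_1>0$, every run performs at most $\lfloor 1/\param_1\rfloor$ gadget traversals and is therefore finite, so there is no spurious infinite faithful simulation to rule out.

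I take as reference valuation~$\pval_0$ the one with $\pval_0(\param_1)=\pval_0(\param_2)=0$. Under~$\pval_0$ the only transition out of~$\sinit$ leads to~$s_\infty$, whose self-loop is then taken forever, so $\ULang(\valuate{\calA}{\pval_0})=\{a^\omega\}$. I claim that~$\calM$ halts if, and only if, there is a valuation $\pval'\neq\pval_0$ with $\ULang(\valuate{\calA}{\lambda\cdot\pval'})=\{a^\omega\}$ for all $\lambda\in[0,1]$ --- i.e.\ the answer to the robust language-preservation problem for $(\calA,\pval_0)$ (recall that $(1-\lambda)\pval_0+\lambda\pval'=\lambda\pval'$). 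For the ``if'' direction, assume~$\calM$ halts, let~$c$ be the maximal value reached by its counters and~$\runlength$ the length of its halting run. By \cref{theorem:robust-language-preservation:undecidable:bPTA:time-bounded}, any valuation~$v$ with $v(\param_1)>0$, with $v(\param_2)\le v(\param_1)/c$ (vacuous when $c=0$), and with $\runlength\cdot(v(\param_1)+v(\param_2))\le 1$ makes~$\calA_0$ faithfully simulate~$\calM$ and reach~$\shalt{}$; hence $\valuate{\calA}{v}$ reaches~$s_\infty$ and $\ULang(\valuate{\calA}{v})=\{a^\omega\}$. Pick such a witness, e.g.\ $\pval'=\bigl(\tfrac{c}{(c+1)\runlength},\tfrac{1}{(c+1)\runlength}\bigr)$ if $c>0$ and $\pval'=\bigl(\tfrac1\runlength,\tfrac1\runlength\bigr)$ if $c=0$. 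The key point is that the set of ``good'' valuations is a cone with apex the origin, closed under scaling down: the constraint $v(\param_2)\le v(\param_1)/c$ is scale-invariant, and $\runlength\cdot(v(\param_1)+v(\param_2))\le 1$ survives multiplying both coordinates by any $\lambda\le 1$. Thus $\lambda\cdot\pval'$ is a good valuation for every $\lambda\in(0,1]$, so $\ULang(\valuate{\calA}{\lambda\pval'})=\{a^\omega\}$, while $\ULang(\valuate{\calA}{0\cdot\pval'})=\ULang(\valuate{\calA}{\pval_0})=\{a^\omega\}$; the language is preserved all along the segment $[\pval_0,\pval']$.

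For the ``only if'' direction, assume~$\calM$ does not halt and let $\pval'\neq\pval_0$ be arbitrary; it suffices to show $\ULang(\valuate{\calA}{\pval'})\neq\{a^\omega\}$, since failure of preservation at the endpoint already implies failure of robust preservation. If $\pval'(\param_1)>0$, every run of $\valuate{\calA}{\pval'}$ is finite (at most $\lfloor 1/\pval'(\param_1)\rfloor$ gadgets fit in one time unit, and~$\shalt{}$ is unreachable since~$\calM$ does not halt), so $\ULang(\valuate{\calA}{\pval'})$ contains only finite words. If $\pval'(\param_1)=0$, then $\pval'(\param_2)>0$ because $\pval'\neq\pval_0$, and both transitions out of~$\sinit$ are disabled, so $\valuate{\calA}{\pval'}$ deadlocks in~$\sinit$ and $\ULang(\valuate{\calA}{\pval'})=\{\varepsilon\}$. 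In both cases $\ULang(\valuate{\calA}{\pval'})\neq\{a^\omega\}$, which concludes. Finally, every witness used above lies in $[0,1]^2$, so the reduction still works when the two parameters are forced into a bounded interval, giving the result for bounded parameters as well. The step I expect to require the most care is the cone argument: one must check that shrinking both parameters toward the origin preserves both the validity of the counter encoding (it is the ratio $\param_1/\param_2$, not the magnitudes, that matters) and the fact that every gadget stays short enough for the whole simulation to fit within one time unit; dually, one must make sure the guard $\param_1=0\et\param_2=0$ is tight enough that no valuation other than~$\pval_0$ accidentally recovers the language~$\{a^\omega\}$ when~$\calM$ does not halt.
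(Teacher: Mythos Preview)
Your proposal is correct and follows the same approach the paper sketches in its one-line proof, namely plugging the two-parameter time-bounded encoding of \cref{theorem:robust-language-preservation:undecidable:bPTA:time-bounded} into the wrapping construction of \cref{theorem:robust-language-preservation:undecidable:bPTA}. Your write-up is considerably more detailed than the paper's; in particular, your observation that the auxiliary third counter is no longer needed (because the $1$-time-unit horizon with $\param_1>0$ already forces finiteness of runs) and your explicit cone argument for downward scaling of the witness valuation are both points the paper leaves implicit.
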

\begin{proof}
	By reusing the encoding of \cref{theorem:robust-language-preservation:undecidable:bPTA:time-bounded} in the reasoning of the proof of \cref{theorem:robust-language-preservation:undecidable:bPTA}.
\end{proof}

Following a similar reasoning, we can also show the undecidability of all the other problems considered in this section in the time-bounded setting.

\begin{remark}\label{remark:timebounded-HA}
	Our undecidability results can be put into perspective with the decidability results for the larger class of hybrid automata of~\cite{BDGORW13}.
	In~\cite{BDGORW13}, time-bounded reachability is proved decidable for a subclass of hybrid automata with monotonic (either non-negative or non-positive) rates: parametric timed automata can fit into this framework: clocks and parameters all have non-negative rates (1~for clocks, and 0~for parameters), with the exception of the initialization phase: in~that phase, we~let time elapse until the parameters (growing at rate~1) reach their value,
        and then set their rates to~0;
        we~then reset all clocks and start the \emph{real} execution of the automaton.
	However, to compare clocks and parameters together in a hybrid automaton, one needs diagonal constraints---which~are not allowed in~\cite{BDGORW13}.
	As we showed that our undecidability results (notably \cref{theorem:robust-language-preservation:undecidable:bPTA:time-bounded}) hold over bounded-time with only two parameters, one can revisit the result of~\cite{BDGORW13} as follows: allowing only two variables (our parameters) in diagonal constraints, with only two locations with a non-zero rate (the~initialization locations for these parameters) makes the decidable problem of~\cite{BDGORW13} undecidable.
\end{remark}

\section{A Semi-Algorithm for the Trace Preservation Synthesis}\label{section:algo}

In this section, we propose a semi-algorithm that solves the following
\emph{parameter-synthesis} problem: ``\emph{given a PTA~$\A$ and a parameter
  valuation~$\pval$, synthesize parameter valuations that yield the same
  language (or~trace~set) as~$\pval$}''.

The inverse method proposed in~\cite{ACEF09} outputs a parameter constraint
that is a correct but incomplete answer to the trace-preservation 
problem. Below, we rewrite this algorithm so that, whenever it terminates,
it~outputs a correct answer for any~PTA, and a complete answer for
deterministic PTA.

\subsection{The Algorithm \TPS}

We give $\TPS(\A, \pval)$ in \cref{algo:IM}. $\TPS$~maintains two
constraints: $\Kgood$~is the intersection of the parameter constraints
associated with the $\pval$-compatible symbolic states met, whereas
$\Kbad$ is the union\footnote{This union of constraints can be seen
(and implemented) as a finite list of convex constraints.} of the
parameter constraints associated with all $\pval$-incompatible
symbolic states. $\TPS$ also maintains two sets of symbolic states, \viz{} the
set $\States$ of all symbolic states met, and the set $\Snew$ of
symbolic states met at
the latest iteration of the \textbf{while} loop. $\TPS$~is a
breadth-first search algorithm, that iteratively explores the symbolic
state space. Whenever a new symbolic state is met, its $\pval$-compatibility is
checked~(\cref{algo:IM:check}). If it is $\pval$-compatible, its
projection onto the parameters is added
to~$\Kgood$~(\cref{algo:IM:Kgood}). Otherwise, its~projection onto the
parameters is added to~$\Kbad$~(\cref{algo:IM:Kbad}), and the symbolic state is
discarded from~$\Snew$~(\cref{algo:IM:discard}), \ie{} its successors
will not be explored. When no new symbolic states can be explored, \ie{} the
set $\Snew$ is either empty or contains only symbolic states explored
earlier~(\cref{algo:IM:fixpoint}), the intersection of
$\pval$-compatible parametric constraints and the negation of the
$\pval$-incompatible parametric constraints is returned
(\cref{algo:IM:return}). Otherwise, the algorithm explores one step
further in depth~(\cref{algo:IM:i++}).
$\TPS$ is implemented in the \imitator{} software~\cite{AFKS12}.

\begin{algorithm}[t]
	\SetKwInOut{Input}{input}
	\SetKwInOut{Output}{output}

	\Input{PTA~$\A$, parameter valuation $\pval$}
	\Output{Constraint $\K$ over the parameters}

	\BlankLine
	$
		\Kgood \algoAssign \KTrue \,; \ \ 
		\Kbad \algoAssign \KFalse \,; \ \ 
		\Snew \algoAssign \{ \sinit \} \,; \ \ 
		\States \algoAssign \varnothing
	$

	\While{$\BTrue$}{
		\ForEach{symbolic state $(\loc, \C) \in \Snew$}
		{
			\lIf{$\pval \models \projectP{\C}$\nllabel{algo:IM:check}}{
                          \nllabel{algo:IM:Kgood}$\Kgood \algoAssign \Kgood \land \projectP{\C}$
			}

			\quad \lElse{
				$\Kbad \algoAssign \Kbad \lor \projectP{\C}$\nllabel{algo:IM:Kbad}
				;\ \
				$\Snew \algoAssign \Snew \setminus \{ (\loc, \C) \} $\nllabel{algo:IM:discard}
			}

		} 
		
		\lIf{$\Snew \subseteq \States $ \nllabel{algo:IM:fixpoint} } 
			{\Return{$\Kgood \land \neg \Kbad $} \nllabel{algo:IM:return} }

		$
			\States \algoAssign \States \cup \Snew \, ; \ \ 
			\Snew \algoAssign \Succ(\Snew)
		$ \nllabel{algo:IM:i++}

	} 

	\caption{$\TPS(\A, \pval)$}
	\label{algo:IM}
\end{algorithm}
%

\subsection{Soundness of \TPS}

\cref{theorem:TPS:correctness} states that, in~case $\TPS(\A, \pval)$
terminates, its result is correct.

\begin{restatable}
  {theorem}{TheoremCorrectnessTPS}
\label{theorem:TPS:correctness}
    Let $\A$ be a PTA, let $\pval$ be a parameter valuation.
	Assume $\TPS(\A, \pval)$ terminates with constraint $\K$.
	Then
$\pval \models \K$, and
for all $\pval' \models \K$, $\Traces(\valuate{\A}{\pval'}) = \Traces(\valuate{\A}{\pval})$.
\end{restatable}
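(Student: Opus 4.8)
The plan is to establish the two claims separately, both by induction on the breadth-first exploration performed by \TPS. For the first claim, $\pval \models \K$: observe that throughout the execution, $\Kgood$ is a conjunction of projections $\projectP{\C}$ taken only over those symbolic states $(\loc,\C)$ that passed the test on \cref{algo:IM:check}, \ie{} that are $\pval$-compatible; hence $\pval \models \projectP{\C}$ for each such conjunct, and so $\pval \models \Kgood$ at termination. Dually, $\Kbad$ is a disjunction of projections $\projectP{\C}$ over states that failed the test, \ie{} that are $\pval$-incompatible, so $\pval \not\models \projectP{\C}$ for each such disjunct, giving $\pval \not\models \Kbad$. Therefore $\pval \models \Kgood \land \neg \Kbad = \K$.

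For the second claim, fix $\pval' \models \K$; I must show $\Traces(\valuate{\A}{\pval}) = \Traces(\valuate{\A}{\pval'})$. The key is to relate concrete runs to symbolic runs via the standard correspondence between the parametric zone graph and the concrete semantics: a symbolic run $\state_0 \edge_0 \state_1 \edge_1 \cdots$ with $\state_i = (\loc_i, \C_i)$ gives rise to a concrete run in $\valuate{\A}{\pvalaux}$ following the same edges if and only if $\pvalaux \models \projectP{\C_i}$ for all $i$ along the run (this is where determinism of $\A$ matters: for a deterministic PTA, each trace corresponds to exactly one sequence of edges, hence to exactly one symbolic run, so the trace set is determined by which symbolic runs are "compatible"). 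So it suffices to show that for every symbolic state $(\loc,\C)$ explored and kept by \TPS{} (\ie{} reachable in the pruned zone graph), $\pval \models \projectP{\C} \iff \pval' \models \projectP{\C}$, and moreover that the pruned zone graph explored by \TPS{} captures exactly the traces available under $\pval$ (and, via the equivalence, under $\pval'$).

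The argument runs by induction on the depth at which a symbolic state is reached. For a $\pval$-compatible kept state, $\projectP{\C}$ was conjoined into $\Kgood$, so $\pval' \models \K$ implies $\pval' \models \projectP{\C}$; for a $\pval$-incompatible discarded state, $\projectP{\C}$ was disjoined into $\Kbad$, so $\pval' \models \neg\Kbad$ implies $\pval' \not\models \projectP{\C}$. Thus $\pval$ and $\pval'$ agree on compatibility of every symbolic state that \TPS{} ever generates (kept states and the immediate discarded children, which together form $\Succ$ of all kept states). Consequently the subgraph of the parametric zone graph consisting of $\pval$-compatible reachable states coincides with that of $\pval'$-compatible reachable states, edge for edge. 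Translating back through the concrete/symbolic correspondence — every finite or infinite concrete run of $\valuate{\A}{\pval}$ projects to a symbolic run through $\pval$-compatible states and conversely every such symbolic run is realizable concretely, and likewise for $\pval'$, with maximality (blocking vs.\ infinite) also being a property of this subgraph — yields $\Traces(\valuate{\A}{\pval}) = \Traces(\valuate{\A}{\pval'})$.

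**Main obstacle.** The delicate point is the concrete/symbolic correspondence in the direction needed here: one must be careful that a symbolic run through states all of whose parameter projections contain $\pval$ genuinely yields a \emph{concrete} run of $\valuate{\A}{\pval}$ with the same trace, including matching the notion of \emph{maximal} run (a symbolic state with no successor in the pruned graph must correspond to a concrete configuration that is actually blocking under $\pval$, not merely one whose successors were pruned as $\pval$-incompatible — but pruned successors are exactly the $\pval$-incompatible ones, so under $\pval$ they are genuinely unreachable, which is consistent). I also need that $\TPS$ explores \emph{all} $\pval$-compatible reachable symbolic states before the fixpoint test succeeds — this follows from the breadth-first structure and the fact that only $\pval$-incompatible states are removed from $\Snew$. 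Finally, the use of determinism must be pinned down precisely: without it, distinct symbolic runs could carry the same trace, and $\K$ would still be sound but the equality of trace sets would need only the "$\supseteq$" direction argued above together with the symmetric inclusion, which the compatibility-equivalence still delivers; determinism is what the authors invoke for completeness rather than soundness, so for this soundness theorem the argument goes through regardless, and I would phrase it so.
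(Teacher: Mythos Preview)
Your proposal is correct and follows essentially the same approach as the paper: the first claim is argued identically, and the second is obtained via the concrete/symbolic correspondence (the paper's \cref{prop:run-equivalence,prop:run-nonparam-param}) together with the observation that $\pval$ and $\pval'$ agree on compatibility of every symbolic state the algorithm generates. The one point the paper makes explicit that you leave implicit is the monotonicity of parameter projections along symbolic runs (\cref{lemma:k-stricter}), used to rule out $\pval'$-runs descending past a discarded state; your closing remark that determinism is irrelevant for this soundness result is exactly right and matches the paper's separation of soundness from completeness.
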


Let us prove \cref{theorem:TPS:correctness} in the following.
We first recall below a useful result stating that the projection onto the parameters of a constraint can only become more strict along a run.

\begin{lemma}\label{lemma:k-stricter}
	Let $\A$ be a PTA, and let $\run$ be a run of~$\A$ reaching $(\loc, \C)$.
	Then, for any successor $(\loc', \C')$ of $(\loc, \C)$, we have $\projectP{\C'} \subseteq \projectP{\C}$.
\end{lemma}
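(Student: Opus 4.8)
The plan is to show the statement by a direct analysis of the $\Succ$ operation on symbolic states, tracing how the parametric constraint evolves along a single transition step. Fix a run $\run$ of $\A$ reaching $(\loc,\C)$, and let $(\loc',\C')$ be any successor, so that for some edge $e = (\loc,\guard,\action,\resets,\loc')$ we have $\C' = \timelapse{\big(\reset{(\C \land \guard)}{\resets}\big)} \cap \invariant(\loc')$, with $\C' \neq \emptyset$. The key observation is that each of the operations involved — conjunction with a guard, reset, time-elapsing, and conjunction with an invariant — can only shrink the projection onto the parameters, or leave it unchanged; none of them can enlarge it. So the plan is to verify this for each operation in turn and then compose.

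First I would record the monotonicity of the projection operator itself: if $\C_1 \subseteq \C_2$ (as sets of valuations on $\Clock\cup\Param$), then $\projectP{\C_1} \subseteq \projectP{\C_2}$, which is immediate from the definition of projection via existential quantification over the clock variables. Next I would handle each ingredient of $\Succ_e$. Conjunction is trivially shrinking: $\C \land \guard \subseteq \C$, hence $\projectP{\C \land \guard} \subseteq \projectP{\C}$, and likewise intersecting with $\invariant(\loc')$ only shrinks. For the reset $\reset{\cdot}{\resets}$ and time-elapsing $\timelapse{\cdot}$, the crucial point is that neither operation touches the parameter coordinates: by their very definitions, $\pval \models \reset{\Phi}{\resets}$ (resp. $\pval \models \timelapse{\Phi}$) iff there is a valuation $\val \models \Phi$ agreeing with $\pval$ on $\Param$ and related to $\pval$ on the clocks in the prescribed way. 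In particular, if $\pval\models\reset{\Phi}{\resets}$ then the witnessing $\val$ satisfies $\val_{|\Param} = \pval_{|\Param}$ and $\val\models\Phi$, so $\val_{|\Param}\models\projectP{\Phi}$; thus $\projectP{\reset{\Phi}{\resets}} \subseteq \projectP{\Phi}$, and symmetrically $\projectP{\timelapse{\Phi}} \subseteq \projectP{\Phi}$. (In fact these two inclusions are equalities, but the inclusion is all we need.)

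Chaining these facts gives $\projectP{\C'} \subseteq \projectP{\timelapse{\big(\reset{(\C\land\guard)}{\resets}\big)}} \subseteq \projectP{\reset{(\C\land\guard)}{\resets}} \subseteq \projectP{\C\land\guard} \subseteq \projectP{\C}$, which is exactly the claim. The statement about the run $\run$ reaching $(\loc,\C)$ is then a red herring for the proof of the lemma itself — the inclusion holds for any symbolic state and any of its successors — but it is presumably stated this way because that is how the lemma will be applied later (to iterate along the whole run). I do not anticipate a genuine obstacle here; the only mild subtlety is being careful that $\timelapse{\cdot}$ produces a \emph{generalized} constraint (it may introduce diagonal atoms), but since the projection onto $\Param$ eliminates all clock variables anyway, the presence of diagonal constraints is irrelevant to the argument. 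The main thing to get right is simply quoting the definitions of $\project{\cdot}{\Param}$, $\reset{\cdot}{\resets}$, and $\timelapse{\cdot}$ faithfully and observing in each case that the parameter coordinate of the witness valuation is preserved.
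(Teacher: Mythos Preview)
Your proposal is correct and follows essentially the same approach as the paper: both arguments unwind the definition of $\Succ_e$ and observe that the parameter coordinates of any witness valuation are preserved through each of its constituent operations. The paper compresses this into a single step (pick $\pval\models\projectP{\C'}$, pull back a witness $u'\models\C'$ to some $u\models\C$ with the same parameter part via the definition of $\Succ$), whereas you spell out the chain operation by operation; the content is the same.
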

\begin{proof}
	Let $(\loc, \C)$ be a symbolic state.
	Let $e=(\loc,\guard,\action,\resets,\loc')$ be a transition.
	Let $(\loc', \C')$ be the successor of~$(\loc, \C)$ via~$e$.
	Recall that, from the definition of the symbolic semantics, $\C' = \timelapse{\big(\reset{(\C \land \guard)}{\resets}\big )} \cap \invariant(\loc')$.
	
	Let $\pval \models \projectP{\C'}$.
	Then there exists $u'\models C'$ and a clock valuation $\clockval'$ such that $u'_{|\Param}=\pval$ and $u'_{|\Clock}=\clockval'$.
	From the definition of the $\Succ$ operator, there exists a clock valuation $\clockval$, $u \models \C$ and $d \geq 0$ such that $\clockval' = \clockval[R] + d$, $u_{|\Param} = \pval$ and $u_{|\Clock}= \clockval$.
	So $\pval \models \projectP{\C}$.
%
\end{proof}
\ea{légèrement hand-waiving, ici. Si vous avez des arguments précis sur les première et dernière implications, n'hésitez pas.}
\dl{et pourquoi pas simplement (à noter que comme ce sont des contraintes partout l'inclusion est plutôt une implication non?): Let $v\models \projectP{C'}$. Then there exists $u'\models C'$ and a clock valuation $w'$ such that $u'_{|\Param}=v$ and $u'_{|\Clock}=w'$. From the definiton of the $\Succ$ operator, there exists a clock valuation $w$, $u\models C$ and $d\geq 0$ such that $w'=w[R] + d$, $u_{|\Param}=v$ and $u_{|\Clock}=w$. So $v\models \projectP{C}$.}\ea{le souci est que le $\Succ$ est le $\Succ$ symbolique, et qu'on n'a pas (ici) lié les symboliques aux concrets ; par contre, avec l'aide de \cref{prop:run-equivalence,prop:run-nonparam-param} ci-dessous, y'a ptêt moyen. On prend une valeur dans la contrainte, on montre qu'il y a une valeur concrète, on remonte en arrière avec le succ concret, et on dit que du coup cette valeur existe dans un run équivalent. Mais le problème est que ça pourrait être un \emph{autre} run symbolique équivalent (puisque cette notion n'a pas été très bien définie, ou alors faudrait montrer l'unicité…) Qu'en penses-tu ?}
\dl{j'ai l'impression de n'avoir vraiment utilisé que la def des opérateurs future, reset, and co tels que présentés p. 4, non ?}
\ea{OK}

We now recall below two results from \cite{HRSV02}.

\begin{proposition}\label{prop:run-equivalence}
	Let $\A$ be a PTA, and let $\run$ be a run of~$\A$ reaching
	$(\loc, \C)$.  Let $\pval$ be a parameter valuation.  There
	exists an equivalent run in~$\valuate{\A}{\pval}$ if, and
	only~if, $\pval \models \projectP{\C}$.
\end{proposition}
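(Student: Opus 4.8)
The plan is to establish, by induction on the length of a symbolic run, a precise correspondence---valid for each \emph{fixed} parameter valuation---between symbolic states and the concrete configurations reachable along the matching sequence of edges, and then to derive the proposition by projecting this correspondence onto~$\Param$. Concretely, I would prove the following strengthened claim: for every symbolic run $\state_0 \edge_0 \state_1 \edge_1 \cdots \edge_{n-1}\state_n$ of~$\A$ with $\state_n=(\loc,\C)$, and every valuation $\val$ on $\Clock\cup\Param$, one has $\val\models\C$ if and only if there is a prefix of a run $(\locinit,\mathbf 0_\Clock) \Fleche{d_0,\edge_0} (\loc_1,\clockval_1) \Fleche{d_1,\edge_1}\cdots\Fleche{d_{n-1},\edge_{n-1}} (\loc,\clockval_n)$ in $\valuate{\A}{\val_{|\Param}}$ together with a delay $d_n\geq 0$ such that $\val_{|\Clock}=\clockval_n+d_n$ and $\val_{|\Clock}$ satisfies $\valuate{\invariant(\loc)}{\val_{|\Param}}$. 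The trailing delay $d_n$ is needed because $\sinit$ and every application of $\Succ$ terminate with a time-elapsing step, so a reachable symbolic state records every configuration obtained after letting an arbitrary amount of time pass in the current location.

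For the base case $n=0$ we have $\state_0=\sinit=(\locinit,\timelapse{(\Clock=0)}\land\invariant(\locinit))$, and the equivalence is nothing but the definition of $\timelapse{\cdot}$ applied to the valuation $\mathbf 0_\Clock$, intersected with the invariant of~$\locinit$. For the inductive step, assume the claim for $\state_n=(\loc,\C)$ and let $\state_{n+1}=(\loc',\C')=\Succ_{\edge_n}(\state_n)$ with $\edge_n=(\loc,\guard,\action,\resets,\loc')$, so that $\C'=\timelapse{\big(\reset{(\C\land\guard)}{\resets}\big)}\cap\invariant(\loc')$. In the forward direction, a valuation $\val'\models\C'$ is, by the definitions of $\timelapse{\cdot}$ and $\reset{\cdot}{\cdot}$, obtained from some $\val\models\C\land\guard$ with $\val_{|\Param}=\val'_{|\Param}$ by resetting the clocks of~$\resets$ and then letting a delay $d'\geq0$ elapse, with $\val'$ satisfying $\invariant(\loc')$; the induction hypothesis gives a run prefix reaching $(\loc,\clockval_n)$ with $\val_{|\Clock}=\clockval_n+d_n$, and since $\val\models\guard$ and the parameter parts agree, the instantiated guard $\valuate{\guard}{\val'_{|\Param}}$ holds at $\clockval_n+d_n$, so $\edge_n$ can be taken from $(\loc,\clockval_n)$ with delay~$d_n$ and $d'$ further time elapsed, landing exactly on the configuration prescribed for $\val'$. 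The converse direction reverses this computation, removing the last transition of the concrete run prefix and invoking the induction hypothesis on what remains.

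The proposition then follows by projecting onto $\Param$: by definition of $\projectP{\cdot}$, we have $\pval\models\projectP{\C}$ iff there is a clock valuation $\clockval$ such that the valuation pairing $\pval$ and $\clockval$ satisfies $\C$, which by the claim holds iff there is a run prefix in $\valuate{\A}{\pval}$ whose sequence of edges coincides with that of $\run$ up to $(\loc,\C)$. Any such prefix extends to a maximal run of $\valuate{\A}{\pval}$ equivalent to $\run$ (same sequence of edges), and conversely a run of $\valuate{\A}{\pval}$ equivalent to $\run$ contains a prefix reaching $\loc$ along exactly these edges, which by the claim forces $\pval\models\projectP{\C}$.

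The only genuinely delicate point is bookkeeping: lining up the trailing time-elapse that is absorbed into every symbolic state with the delays $d_i$ attached to concrete transitions, and checking that invariants---required here only at configurations, not continuously during delays---are treated consistently with $\Succ$ intersecting with the invariant only after the time-elapsing step. Everything else is a mechanical unfolding of the definitions of $\timelapse{\cdot}$, $\reset{\cdot}{\cdot}$, $\Succ$ and $\project{\cdot}{\Param}$.
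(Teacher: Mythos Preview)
Your argument is correct and is the standard inductive proof that the symbolic semantics faithfully abstracts the concrete one: the strengthened invariant you isolate (each symbolic constraint characterises exactly the valuations reached along the matching edge sequence, up to a trailing delay within the current invariant) is precisely what is needed, and your treatment of the base case, the unfolding of $\Succ$ in the step, and the final projection onto~$\Param$ are all sound. The caveat you flag about invariants is handled by convexity of constraints in this paper, so there is no hidden gap.

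The only comparison to draw is that the paper does not give its own proof here at all: it simply invokes \cite[Propositions~3.17 and~3.18]{HRSV02}. Your write-up is essentially a self-contained reconstruction of that cited result, so you have supplied strictly more than the paper does. One minor remark: in the last paragraph you speak of extending a prefix to a \emph{maximal} run equivalent to~$\run$; since the proposition (and its later uses in Lemma~\ref{lemma:TPS:correctness}) really concerns run \emph{prefixes} reaching a given symbolic state, that extension step is unnecessary and could be dropped without loss.
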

\begin{proof}
    From \cite[Propositions 3.17 and 3.18]{HRSV02}.
\end{proof}

\begin{proposition}\label{prop:run-nonparam-param}
	Let $\A$ be a PTA, let $\pval$ be a parameter valuation. 
	Let $\run$ be a run of~$\valuate{\A}{\pval}$ reaching $(\loc, \clockval)$.
	Then there exists an equivalent symbolic run in~$\A$ reaching $(\loc, \C)$, with $\pval \models \projectP{\C}$.
\end{proposition}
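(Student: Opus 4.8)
The plan is to prove \cref{prop:run-nonparam-param} by induction on the length of the concrete run $\run$ of $\valuate{\A}{\pval}$, building the corresponding symbolic run of $\A$ step by step and maintaining the invariant that, at each symbolic state $(\loc,\C)$ reached, we have $\pval \models \projectP{\C}$ and moreover the concrete configuration $(\loc,\clockval)$ reached in $\valuate{\A}{\pval}$ satisfies $\val \models \C$, where $\val$ is the valuation extending $\clockval$ by $\pval$ on the parameters. This joint invariant (the concrete configuration lies inside the symbolic constraint, \emph{for this particular} $\pval$) is what makes the induction go through: it is stronger than merely $\pval\models\projectP{\C}$, but it implies it, and it is exactly what is preserved by the $\Succ$ operator.

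For the base case, the concrete run starts in $(\locinit,\mathbf{0}_\Clock)$ after possibly an initial delay, and the symbolic run starts in $\sinit=(\locinit,\timelapse{(\Clock=0)}\land\invariant(\locinit))$; since the concrete initial configuration (with any legal initial delay) together with $\pval$ satisfies this constraint by definition of the TA semantics and of time-elapsing, the invariant holds. For the inductive step, suppose the concrete run has reached $(\loc,\clockval)$ with $\val\models\C$ for the current symbolic state $(\loc,\C)$, and the next concrete transition is $((\loc,\clockval),(d,e),(\loc',\clockval'))$ via edge $e=(\loc,\guard,\action,\resets,\loc')$, so that $\clockval+d\models\valuate\guard\pval$ and $\clockval'=(\clockval+d)[\resets\mapsto 0]$ and $\clockval'$ satisfies $\valuate{\invariant(\loc')}{\pval}$. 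I would take the symbolic successor $(\loc',\C')$ with $\C'=\timelapse{(\reset{(\C\land\guard)}{\resets})}\cap\invariant(\loc')$ and check that the valuation $\val'$ extending $\clockval'$ by $\pval$ satisfies $\C'$: the valuation $\val$ witnesses $\val\models\C$ and $\val\models\guard$ (since $\clockval$ still lies in $\C$ and the guard check uses $\clockval$—careful here, the guard is tested on $\clockval+d$, so I must track the delay), then applying reset and time-elapsing of amount $d$ lands exactly on $\val'$, and $\val'\models\invariant(\loc')$ by the concrete semantics. Hence $\val'\models\C'$, so in particular $C'\neq\emptyset$ and the symbolic successor is non-empty, and $\pval\models\projectP{\C'}$, closing the induction. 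If $\run$ is infinite, the symbolic run obtained is the union of the finite prefixes and is by construction equivalent (same edge sequence, same locations).

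The one subtlety I would be careful about is the precise bookkeeping of the delay $d$ versus the point at which the guard is evaluated: in the concrete semantics the guard is checked at $\clockval+d$ whereas $\C$ describes the configuration before the delay, so in the symbolic step the order is ``intersect with guard, reset, then let time elapse''—one must verify that the witness valuation is routed through these operations in the matching order, i.e. that $\clockval+d$ (not $\clockval$) is the valuation that satisfies $\C\land\guard$ after an appropriate time-elapse is folded in. Since $\C$ is already closed under time-elapsing (it is always of the form $\timelapse{(\cdots)}\cap\invariant(\loc)$, and invariants are downward-ish closed only in the relevant sense), $\clockval+d$ still lies in $\C$, and this is the place where \cref{lemma:k-stricter}'s style of reasoning and the explicit definitions of $\timelapse{\cdot}$, $\reset{\cdot}{\cdot}$ from \cref{section:definitions} are used. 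Alternatively, and more economically, I would simply cite \cite[Propositions 3.17 and 3.18]{HRSV02}, exactly as done for \cref{prop:run-equivalence}, since this statement is the standard soundness/completeness of the parametric zone graph with respect to the concrete semantics; the inductive argument above is then just a sketch of why that citation applies verbatim in our slightly more general constraint setting.

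\begin{proof}
  This is \cite[Propositions 3.17 and 3.18]{HRSV02}; we sketch the argument for completeness.
  Proceed by induction on the length of $\run$, maintaining the invariant that the symbolic run of $\A$ constructed so far reaches a state $(\loc,\C)$ such that the valuation $\val$ on $\Clock\cup\Param$ with $\val_{|\Clock}=\clockval$ and $\val_{|\Param}=\pval$ satisfies $\C$ (where $(\loc,\clockval)$ is the corresponding configuration of $\valuate{\A}{\pval}$). This invariant implies $\pval\models\projectP{\C}$.
  For the base case, the initial configuration $(\locinit,\mathbf{0}_\Clock)$ together with $\pval$ satisfies $\timelapse{(\Clock=0)}\land\invariant(\locinit)=\C_{\mathit{init}}$ by definition of the semantics. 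For the inductive step, let the next concrete move be $((\loc,\clockval),(d,e),(\loc',\clockval'))$ via $e=(\loc,\guard,\action,\resets,\loc')$, so $\clockval+d\models\valuate{\guard}{\pval}$, $\clockval'=(\clockval+d)[\resets\mapsto 0]$ and $\clockval'\models\valuate{\invariant(\loc')}{\pval}$. Take $(\loc',\C')=\Succ_e((\loc,\C))$, i.e.\ $\C'=\timelapse{(\reset{(\C\land\guard)}{\resets})}\cap\invariant(\loc')$. Since $\C$ contains $\val$ and is closed under time elapsing, the valuation extending $\clockval+d$ by $\pval$ satisfies $\C\land\guard$; applying the reset on $\resets$ and then a time elapse of $d$ yields a valuation satisfying $\timelapse{(\reset{(\C\land\guard)}{\resets})}$ whose clock part is $\clockval'$ and parameter part is $\pval$; as $\clockval'\models\valuate{\invariant(\loc')}{\pval}$, this valuation satisfies $\C'$. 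Hence $\C'\neq\emptyset$, the symbolic successor exists, the invariant is preserved, and in particular $\pval\models\projectP{\C'}$. The resulting symbolic run follows the same sequence of edges and locations, so it is equivalent to $\run$; if $\run$ is infinite, take the union of the finite prefixes.
\end{proof}
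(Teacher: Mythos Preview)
Your proposal is correct and subsumes the paper's own proof, which consists solely of the citation ``From \cite[Proposition~3.18]{HRSV02}''---exactly the reference you invoke. You go further by sketching the underlying induction, which the paper omits; this has the merit of showing explicitly that the argument carries over to the slightly more general constraint language used here.

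Two small slips in your sketch are worth fixing. First, in the inductive step, after resetting you already have clock part $\clockval'=(\clockval+d)[\resets\mapsto 0]$, so the subsequent time elapse needed to land on $\clockval'$ is~$0$, not~$d$. Second, the claim that ``$\C$ is closed under time elapsing'' is not literally true: $\C$ has the form $\timelapse{(\cdots)}\cap\invariant(\loc)$, and only the first conjunct is upward-closed under delays. What you actually need is that the valuation extending $\clockval+d$ by~$\pval$ still satisfies $\invariant(\loc)$, which follows from the concrete semantics (the configuration must remain in~$\Q$ while delaying). You hint at this subtlety in your preamble, but the formal proof should state it rather than appeal to closure of~$\C$.
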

\begin{proof}
	From \cite[Proposition 3.18]{HRSV02}.
\end{proof}

Before proving  \cref{theorem:TPS:correctness},
we need some intermediate results.

\begin{lemma}\label{lemma:TPS:pval-models-K}
	Let $\A$ be a PTA, let $\pval$ be a parameter valuation.
	Assume $\TPS(\A, \pval)$ terminates with constraint $\K$.
	Then $\pval \models \K$.
\end{lemma}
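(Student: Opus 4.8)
The plan is to show that the invariant ``$\pval \models \Kgood$ and $\pval \not\models \Kbad$'' is maintained throughout the execution of $\TPS(\A,\pval)$, which immediately gives $\pval \models \Kgood \land \neg\Kbad = \K$ upon termination. First I would observe that initially $\Kgood = \KTrue$ and $\Kbad = \KFalse$, so the invariant holds trivially at the start. Then I would argue that each update preserves it. The key point is the guard on \cref{algo:IM:check}: the projection $\projectP{\C}$ is conjoined to $\Kgood$ (\cref{algo:IM:Kgood}) only when the test $\pval \models \projectP{\C}$ succeeds, so $\Kgood$ remains a conjunction of parameter constraints all of which are satisfied by $\pval$; hence $\pval \models \Kgood$ is preserved. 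Dually, $\projectP{\C}$ is added (as a disjunct) to $\Kbad$ (\cref{algo:IM:Kbad}) only in the \textbf{else} branch, i.e.\ exactly when $\pval \not\models \projectP{\C}$; since $\Kbad$ is a disjunction of parameter constraints none of which is satisfied by $\pval$, we get $\pval \not\models \Kbad$, equivalently $\pval \models \neg\Kbad$, is preserved.

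Combining the two, at the moment of termination (the return on \cref{algo:IM:return}) we have $\pval \models \Kgood$ and $\pval \models \neg\Kbad$, hence $\pval \models \Kgood \land \neg\Kbad = \K$, which is the claim. I would also note that the set operations on $\States$ and $\Snew$, and the $\Succ$ computation on \cref{algo:IM:i++}, do not touch $\Kgood$ or $\Kbad$, so they are irrelevant to the invariant.

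This lemma is essentially bookkeeping, so I do not expect any serious obstacle; the only mild subtlety is making explicit that $\projectP{\C}$ is a parameter constraint (so that $\pval \models \projectP{\C}$ is meaningful and the conjunction/negation behave as expected), which is immediate from the definition of projection onto $\Param$. The substantive content — that every $\pval'$ satisfying $\K$ yields the same trace set — is deferred to the remaining intermediate results and the proof of \cref{theorem:TPS:correctness}; here it suffices to certify that the reference valuation itself is not accidentally excluded.
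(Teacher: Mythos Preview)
Your proposal is correct and follows essentially the same approach as the paper: both argue that every constraint conjoined to $\Kgood$ is $\pval$-compatible and every constraint disjoined to $\Kbad$ is $\pval$-incompatible, whence $\pval \models \Kgood \land \neg\Kbad$. Your invariant formulation is simply a more explicit rendering of the paper's terse ``by construction'' argument.
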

\begin{proof}
	By construction, all constraints added to $\Kgood$ are $\pval$-compatible, hence their intersection is $\pval$-compatible.
	By construction, all constraints added to $\Kbad$ are $\pval$-incompatible, hence their union is $\pval$-incompatible; hence the negation of their union is $\pval$-compatible.
	This gives that $\pval \models \Kgood \land \neg \Kbad$, thus $\pval \models \K$.
\end{proof}

\begin{lemma}\label{lemma:TPS:correctness}
	Let $\A$ be a PTA, let $\pval$ be a parameter valuation.
	Assume $\TPS(\A, \pval)$ terminates with constraint $\K$.
	Then for all $\pval' \models \K$, we~have
        $\Traces(\valuate{\A}{\pval'}) = \Traces(\valuate{\A}{\pval})$.
\end{lemma}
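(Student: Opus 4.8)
The plan is to show both inclusions of trace sets by going through the symbolic runs of $\A$ and using the fact that $\TPS$, upon termination, has explored a fixpoint of the symbolic state space. Fix $\pval' \models \K$, so in particular $\pval' \models \Kgood$ and $\pval' \models \neg\Kbad$. The key structural observation, which I would state and prove first, is that when $\TPS(\A,\pval)$ terminates at \cref{algo:IM:fixpoint}, the set $\States$ together with the states discarded at \cref{algo:IM:discard} contains (representatives of) all symbolic states of $\A$; moreover, the symbolic run tree of $\A$ is partitioned into the $\pval$-compatible states (whose parameter projections all appear as conjuncts of $\Kgood$) and, below each first $\pval$-incompatible state encountered, a subtree whose root's parameter projection appears as a disjunct of $\Kbad$. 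This uses \cref{lemma:k-stricter}: once a symbolic state is $\pval$-incompatible, i.e.\ $\pval \not\models \projectP{\C}$, every descendant $(\loc',\C')$ has $\projectP{\C'} \subseteq \projectP{\C}$, hence is also $\pval$-incompatible, so discarding its successors loses no $\pval$-compatible state.

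For the inclusion $\Traces(\valuate{\A}{\pval}) \subseteq \Traces(\valuate{\A}{\pval'})$: take a run of $\valuate{\A}{\pval}$ reaching $(\loc,\clockval)$; by \cref{prop:run-nonparam-param} there is an equivalent symbolic run in $\A$ reaching some $(\loc,\C)$ with $\pval \models \projectP{\C}$. Since this symbolic run consists entirely of $\pval$-compatible states (a $\pval$-incompatible state cannot be followed, along a symbolic run starting from $\sinit$, by a $\pval$-compatible one, again by \cref{lemma:k-stricter}), every state along it was explored by $\TPS$ and its projection added to $\Kgood$; in particular $\projectP{\C}$ is one of the conjuncts of $\Kgood$, so $\pval' \models \Kgood$ gives $\pval' \models \projectP{\C}$. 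Then \cref{prop:run-equivalence} yields an equivalent run in $\valuate{\A}{\pval'}$, which visits the same sequence of locations and actions, hence the same trace. For the reverse inclusion $\Traces(\valuate{\A}{\pval'}) \subseteq \Traces(\valuate{\A}{\pval})$: take a run of $\valuate{\A}{\pval'}$; by \cref{prop:run-nonparam-param} there is an equivalent symbolic run in $\A$ reaching $(\loc,\C)$ with $\pval' \models \projectP{\C}$. I claim this symbolic run is $\pval$-compatible at every state. If not, let $(\loc_0,\C_0)$ be the first $\pval$-incompatible state along it; $\TPS$ explored $(\loc_0,\C_0)$ and added $\projectP{\C_0}$ to $\Kbad$, so $\pval' \models \neg\Kbad$ gives $\pval' \not\models \projectP{\C_0}$; but the state $(\loc,\C)$ reached later satisfies $\projectP{\C} \subseteq \projectP{\C_0}$ by \cref{lemma:k-stricter}, contradicting $\pval' \models \projectP{\C}$. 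Hence the run is $\pval$-compatible throughout, $\pval \models \projectP{\C}$, and \cref{prop:run-equivalence} gives an equivalent run in $\valuate{\A}{\pval}$ with the same trace.

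One subtlety to handle carefully concerns \emph{maximal} runs: traces are associated with maximal runs, so I must check that equivalence of runs (same edge sequence) respects maximality. A symbolic run and a $\pval$-concrete run that are equivalent have the same edge sequence; if the concrete run is finite and maximal (blocking), its last configuration has no enabled outgoing transition, and I need the corresponding situation for $\pval'$. This is where I expect the main obstacle: an \emph{equivalent} run in $\valuate{\A}{\pval'}$ given by \cref{prop:run-equivalence} matches the given run edge-for-edge, but I must argue that it cannot be \emph{extended} further, i.e.\ that no extra edge is enabled at its end under $\pval'$ — equivalently, that blocking is a property of the symbolic run and transfers across all compatible valuations. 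I would argue this by applying the same compatibility/fixpoint analysis to the (finite) symbolic run and each of its potential one-step extensions: an extension exists as a symbolic run iff the corresponding $\Succ$ is non-empty, and then its parameter projection is $\pval$-compatible iff $\pval'$-compatible by the $\Kgood$/$\Kbad$ dichotomy above; combined with \cref{prop:run-equivalence} applied to the extended symbolic run, this shows the extension is enabled under $\pval$ iff under $\pval'$. This closes the argument in both directions, and since the construction is symmetric in $\pval$ and $\pval'$ modulo the roles of $\Kgood$ and $\Kbad$ (both of which $\pval$ satisfies by \cref{lemma:TPS:pval-models-K} and both of which $\pval'$ satisfies by hypothesis), we conclude $\Traces(\valuate{\A}{\pval'}) = \Traces(\valuate{\A}{\pval})$.
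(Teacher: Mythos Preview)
Your core argument mirrors the paper's exactly: lift a concrete run to an equivalent symbolic run via \cref{prop:run-nonparam-param}, use the $\Kgood$/$\Kbad$ partition together with \cref{lemma:k-stricter} to show each symbolic state along it is $\pval$-compatible iff $\pval'$-compatible, and descend back via \cref{prop:run-equivalence}. The paper's proof stops there; you correctly go further and flag that traces come from \emph{maximal} runs, so maximality must also transfer between $\pval$ and~$\pval'$.

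Your proposed fix, however, does not close that gap. You establish that for each edge~$e$, \emph{some} concrete realisation of $e_0\cdots e_n e$ exists under~$\pval$ iff one exists under~$\pval'$; but ``some realisation of $e_0\cdots e_n$ is blocking'' is not the negation of this --- different concrete runs with the same edge sequence may end in different clock valuations, some extendable and some not. Concretely, take the one-clock PTA with locations $\loc_0,\loc_1,\loc_2$, an edge from $\loc_0$ to $\loc_1$ with guard $x\leq p$ and an edge from $\loc_1$ to $\loc_2$ with guard $x=1$ (no resets, trivial invariants), and reference valuation $\pval(p)=1$. Every symbolic state has parameter projection $p\geq 0$, so $\TPS$ returns $K\equiv(p\geq 0)$; yet for $\pval'(p)=2$ the run that takes the first edge after delay $1.5$ is blocking in~$\loc_1$, producing a maximal trace of length one that is absent under~$\pval$. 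So your claim that ``blocking is a property of the symbolic run and transfers across all compatible valuations'' fails here --- and, to be fair, the paper's own proof, which does not address maximality at all, is exposed to the same objection.
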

\begin{proof}
	Let $\pval' \models \K$.
	\begin{itemize}[align=left]
	 \item[($\subseteq$)] Let $\run'$ be a run of $\valuate{\A}{\pval'}$, reaching a symbolic state $(\loc, \clockval')$.
		From \cref{prop:run-nonparam-param}, there exists an equivalent run in~$\A$ reaching a symbolic state $(\loc, \C')$, with $\pval' \models \projectP{\C'}$.
		
		We will now prove by \emph{reductio ad absurdum} that $\pval \models \projectP{\C'}$.
		Assume 
		$\pval \not\models \projectP{\C'}$.
		Hence $(\loc, \C')$ is either a $\pval$-incompatible symbolic state met in $\TPS(\A, \pval)$, or the successor of some $\pval$-incompatible symbolic state met in $\TPS(\A, \pval)$.
		\begin{enumerate}
			\item Assume $(\loc, \C')$ is a $\pval$-incompatible symbolic state met in $\TPS(\A, \pval)$.
				By construction, $\projectP{\C'}$ has been added to $\Kbad$ (\cref{algo:IM:Kbad} in \cref{algo:IM}),
					hence $\projectP{\C'} \subseteq \Kbad $
					hence $\neg \Kbad \cap \projectP{\C'} = \emptyset$
					hence $(\Kgood \land \neg \Kbad) \cap \projectP{\C'} = \emptyset$
					hence $\K \cap \projectP{\C'} = \emptyset$.
					This contradicts that $\pval' \models \K$.
			\item Assume $(\loc, \C')$ is a $\pval$-incompatible symbolic state not met in $\TPS(\A, \pval)$, \ie{} it belongs to some path starting from a $\pval$-incompatible symbolic state $(\loc'', \C'')$ met in $\TPS(\A, \pval)$.
			From \cref{lemma:k-stricter}, $\projectP{\C'} \subseteq \projectP{\C''}$, and hence $\projectP{\C'} \subseteq \projectP{\C''} \subseteq \Kbad $; then we apply the same reasoning as above to prove that $\K \cap \projectP{\C'} = \emptyset$, which contradicts that $\pval' \models \K$.
		\end{enumerate}
		Hence $\pval \models \projectP{\C'}$.
		Now, from \cref{prop:run-equivalence}, there exists an equivalent run in $\valuate{\A}{\pval}$, which gives that $\Traces(\valuate{\A}{\pval'}) \subseteq \Traces(\valuate{\A}{\pval})$.
		
	 \item[($\supseteq$)] Let $\run$ be a run of $\valuate{\A}{\pval}$, reaching a symbolic state $(\loc, \clockval)$.
		From \cref{prop:run-nonparam-param}, there exists an equivalent run in~$\A$ reaching a symbolic state $(\loc, \C)$, with $\pval \models \projectP{\C}$.
		From the fixpoint condition of \cref{algo:IM}, all $\pval$-compatible symbolic states of~$\A$ have been explored in~$\TPS(\A, \pval)$, hence $(\loc, \C) \in \States$, where $\States$ is the set of symbolic states explored just before termination of~$\TPS(\A, \pval)$.
		By construction, $\K \subseteq \projectP{\C}$; since $\pval' \models \K$ then $\pval' \models \projectP{\C}$.
		Hence, from \cref{prop:run-equivalence}, there exists an equivalent run in $\valuate{\A}{\pval'}$, which gives that $\Traces(\valuate{\A}{\pval'}) \supseteq \Traces(\valuate{\A}{\pval})$.
    \qedhere
	\end{itemize}
\end{proof}
\noindent
\Cref{theorem:TPS:correctness} immediately follows
from \cref{lemma:TPS:pval-models-K,lemma:TPS:correctness}.

%
%

\subsection{Completeness of \TPS}

We now state the completeness of $\TPS$ for \emph{deterministic}~PTA.

\begin{restatable}[completeness of $\TPS$]{theorem}{TheoremCompletenessTPS}
\label{theorem:TPS:completeness}
    Let $\A$ be a deterministic PTA, let $\pval$ be a parameter valuation.
	Assume $\TPS(\A, \pval)$ terminates with constraint $\K$.
	Then $\pval' \models \K$ if, and only~if,
        $\Traces(\valuate{\A}{\pval'}) = \Traces(\valuate{\A}{\pval})$.
\end{restatable}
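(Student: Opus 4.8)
The plan is to observe first that one direction comes for free: by \cref{theorem:TPS:correctness}, any $\pval'\models\K$ yields $\Traces(\valuate{\A}{\pval'})=\Traces(\valuate{\A}{\pval})$, and taking $\pval'=\pval$ together with \cref{lemma:TPS:pval-models-K} shows this equality is not vacuous. So the real content is the converse: assuming $\Traces(\valuate{\A}{\pval'})=\Traces(\valuate{\A}{\pval})$, I~must show $\pval'\models\K=\Kgood\land\neg\Kbad$, and this is exactly where determinism of~$\A$ enters.

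The single structural fact I~would isolate is the following consequence of determinism: in a deterministic PTA the sequence of edges taken along a run is completely determined by its untimed trace (at each location a given action labels at most one edge), and since each $\Succ_e$ is a partial function, any two symbolic runs of~$\A$ starting from $\sinit$ with the same untimed trace end in one and the same symbolic state. I~would prove this by a one-line induction on the length of the prefix.

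Then I~handle the two conjuncts of~$\K$ separately, in both cases going back and forth between concrete and symbolic runs via \cref{prop:run-equivalence,prop:run-nonparam-param}. For $\Kgood$: each conjunct is $\projectP{\C}$ for some $\pval$-compatible symbolic state $(\loc,\C)$ met by $\TPS$; pick a symbolic run $\run$ of~$\A$ through $(\loc,\C)$, obtain an equivalent run $\run_\pval$ in $\valuate{\A}{\pval}$ by \cref{prop:run-equivalence} (using $\pval\models\projectP{\C}$), let $\tau$ be the untimed trace of $\run_\pval$, so $\tau\in\Traces(\valuate{\A}{\pval})=\Traces(\valuate{\A}{\pval'})$; a run $\run_{\pval'}$ of $\valuate{\A}{\pval'}$ with trace $\tau$ then, by the determinism fact, uses exactly the edge sequence of~$\run$, so lifting its prefix up to the copy of $(\loc,\cdot)$ through \cref{prop:run-nonparam-param} produces a symbolic run of~$\A$ with the same edge prefix, which therefore reaches $(\loc,\C)$ again, and yields $\pval'\models\projectP{\C}$. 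As this holds for every conjunct, $\pval'\models\Kgood$. For $\neg\Kbad$: suppose towards a contradiction that $\pval'\models\projectP{\C}$ for some $\pval$-incompatible state $(\loc,\C)$ met by $\TPS$ (each such state lies on a symbolic run of~$\A$, being a successor of a $\pval$-compatible state previously explored); by the same chain of transfers and the determinism fact, $\tau$ would be realizable in $\valuate{\A}{\pval}$ as well and would lift to a symbolic run reaching $(\loc,\C)$, forcing $\pval\models\projectP{\C}$ and contradicting $\pval$-incompatibility. Hence $\pval'\models\neg\Kbad$, and combined with the previous step $\pval'\models\K$.

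I~expect the main obstacle to be the determinism bookkeeping: carefully matching up the edge sequences of $\run$, of $\run_{\pval'}$ (which is initially only known to carry the right locations and actions), and of the symbolic run produced by \cref{prop:run-nonparam-param}, so that the target symbolic state is genuinely $(\loc,\C)$ and not merely some state at location~$\loc$. A~secondary, minor nuisance is that $\Traces$ collects traces of \emph{maximal} runs whereas \cref{prop:run-equivalence,prop:run-nonparam-param} speak of runs that merely \emph{reach} a state; this gap is bridged by passing to the appropriate prefix of a maximal run.
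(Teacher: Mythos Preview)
Your proof is correct, but the paper takes a slightly different (and shorter) route. Instead of verifying $\pval'\models\Kgood$ and $\pval'\models\neg\Kbad$ separately by transferring individual runs, the paper argues in one stroke that $\TPS(\A,\pval)=\TPS(\A,\pval')$: since $\A$ is deterministic, equality of trace sets forces equality of the sets of edge-sequences realized under~$\pval$ and~$\pval'$; by \cref{prop:run-equivalence} this means that a symbolic state explored by the algorithm is $\pval$-compatible iff it is $\pval'$-compatible, so the two runs of the algorithm classify states identically, build the same $\Kgood$ and $\Kbad$, and output the same~$\K$. Then $\pval'\models\K$ follows immediately from \cref{lemma:TPS:pval-models-K} applied to $\TPS(\A,\pval')$. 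Your approach unpacks the same determinism-plus-\cref{prop:run-equivalence,prop:run-nonparam-param} machinery at the level of each conjunct of~$\K$; this is more explicit about where determinism is used (in particular your observation that $\Succ_e$ is a partial function, so symbolic runs are determined by their edge sequence, is exactly the missing link the paper leaves implicit), at the cost of repeating essentially the same transfer argument twice. The paper's packaging is slicker; yours makes the mechanics more transparent and also flags the maximality-versus-prefix issue that the paper glosses over.
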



    \begin{proof}
    \Cref{theorem:TPS:correctness} entails that
    $\Traces(\valuate{\A}{\pval'}) = \Traces(\valuate{\A}{\pval})$
    whenever $\pval' \models \K$. We~prove the other implication.
  Let $\pval'$ be a parameter valuation such that $\Traces(\valuate{\A}{\pval'}) = \Traces(\valuate{\A}{\pval})$.
			The result comes from the fact that, in a deterministic (P)TA, the equality of trace sets implies the equivalence of runs.
			Hence we can prove a stronger result, that is $\TPS(\A, \pval) = \TPS(\A, \pval')$.
			Indeed, $\TPS(\A, \pval')$ proceeds by exploring symbolic states similarly to $\TPS(\A, \pval)$.
            From \cref{prop:run-equivalence},
            the $\pval$-incompatible and $\pval$-compatible symbolic states met in $\TPS(\A, \pval')$ will be the same as in $\TPS(\A, \pval)$, and hence the constraints $\Kgood$ and $\Kbad$ will be the same too.
			Hence $\TPS(\A, \pval) = \TPS(\A, \pval')$, which trivially gives that $\pval' \models \TPS(\A, \pval)$.
	\end{proof}



%
%
\begin{remark}\label{remark:incompleteness}
	The incompleteness of $\TPS$ for nondeterministic PTA is easily seen:
	Consider the PTA~$\A$ in \cref{figure:example:nondet}.
	Clearly, the upper transition from~$\loc_0$ to~$\loc_1$ can only be taken if $\param \leq 1$, and the lower transition if $\param > 1$.
	Consider the valuation $\pval$ assigning 0 to~$\param$.
	The~(unique) trace in $\valuate{\A}{\pval}$ is $(\loc_0, a, \loc_1)$.
	
	Running $\TPS(\A, \pval)$, we get two symbolic states corresponding to~$\loc_1$:
	\begin{itemize}
		\item From the upper transition, we get $(\loc_1, x \geq 1 \land p \leq 1)$: this symbolic state is $\pval$-compatible; $\Kgood$ is thus updated to the projection of this symbolic state onto~$\Param$, \ie{} $\Kgood = p \leq 1$.
		\item From the lower transition, we get $(\loc_1, x \geq 1 \land p > 1)$: this symbolic state is $\pval$-incompatible, and therefore $\Kbad$ is updated to $p > 1$.
	\end{itemize}
	Eventually, $\Kgood \land \neg \Kbad$ is returned, that is $p \leq 1 \land \neg (p > 1)$ which gives $p \leq 1$.
	However, the trace $(\loc_0, a, \loc_1)$ is in fact possible for \emph{any} parameter valuation $p \geq 0$, and therefore the result output by $\TPS(\A, \pval)$ is not complete.
\end{remark}

\begin{figure}[tb]
	\centering
	\small
		\begin{tikzpicture}[->, >=stealth', auto, thin,scale=1.4]

			\node[rond] (l0) at(0,0) {$\loc_0$};
			\node[rond] (l1) at (3,0) {$\loc_1$};
			
			\path
				(l0) edge[bend left] node[above] {\begin{tabular}{c}
						$\clock = 1 \land \clock \geq \param$\\
						$a$ \\
					\end{tabular}} (l1)
				(l0) edge[bend right] node[below] {\begin{tabular}{c}
						$\clock = 1 \land \clock < \param$\\
						$a$ \\
					\end{tabular}} (l1)
				;
		\end{tikzpicture}
	\caption{Non-deterministic PTA for which $\TPS$ is not complete}
	\label{figure:example:nondet}
\end{figure}
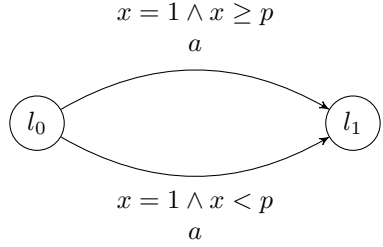

%
%



\section{Decidability Results for Subclasses of PTA}\label{section:particular}

In this section, we first prove the finiteness of the parametric zone graph of 1-clock
PTA over both discrete and rational time (\cref{ss:1clock}). We then study the
(un)decidability of the language and trace preservation emptiness problems for
deterministic 1-clock PTA (\cref{ss:dec-1clock}), L/U-PTA (\cref{ss:L/U}) and
deterministic 1-parameter L-PTA and U-PTA (\cref{ss:L-U-PTA}).

\subsection{1-Clock PTA}\label{ss:1clock}

In this section, we restrict the number of clocks of a PTA, without any restriction on the number of parameters.
In fact, we even slightly extend the definition of PTA, by allowing parametric linear terms in guards and invariants.

\begin{definition}[1-clock PTA]
\label{definition:1clock-PTA}
	An \emph{extended 1-clock PTA} (1cPTA for~short) is a PTA with only
	one clock and possibly several parameters, and allowing guards and
	invariants of the form $\clock \compOp \sum_i \alpha_i \param_i + c$,
	with $\param_i \in \Param$ and $\alpha_i, c \in \grandz$.
\end{definition}

We show below that the parametric zone graph for 1cPTA is finite.
In~\cite{AHV93}, it is shown that the set of parameters for which there exists a run that can reach a given location can be computed for PTA over discrete time with only one parametric clock and arbitrarily many non-parametric clocks.
Here, we lift the assumption of discrete time, we allow more general guards and invariants, and the finiteness of the parametric zone graph allows to synthesize valuations for more complex properties than pure reachability; however, we only consider a single (parametric) clock.
Adding non-parametric clocks in this setting (perhaps reusing a construction used in~\cite{BBLS15}) is the subject of future work.

\begin{definition}
\label{definition:1clock-C}
	Given a 1cPTA~$\A$, a \emph{1-clock symbolic constraint} is a constraint over~$\Clock \cup \Param$ of the form
          $\bigwedge_{i} (\lt_i \compOp \clock) \land \bigwedge_{j} (\lt^1_j \compOp \lt^2_j)$,
	where $i,j \in \grandn$, $\clock$ is the unique clock of~$\A$, and $\lt_i, \lt^1_j, \lt^2_j$ are parametric linear terms (\ie{} of the form $\sum_i \alpha_i \param_i + c$) either appearing in guards and invariants of~$\A$, or equal to~0, and such that $\lt^1_j, \lt^2_j$ are all different from each other.
	We~denote by $\oneClockSC(\A)$ the set of 1-clock symbolic constraints of~$\A$.
\end{definition}


\begin{restatable}{lemma}{LemmaCSC}
\label{lemma:shape-constraints}
	Let $\A$ be a 1cPTA.
	Let $(\loc, \C)$ be a reachable symbolic state of~$\A$.
	Then $\C \in \oneClockSC(\A)$.
\end{restatable}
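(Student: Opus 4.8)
The statement is a structural invariant about reachable symbolic states of a 1cPTA: every reachable constraint has the shape described in $\oneClockSC(\A)$. I would prove this by induction on the length of the symbolic run reaching $(\loc,\C)$, using the explicit formula $\C' = \timelapse{\big(\reset{(\C \land \guard)}{\resets}\big)} \cap \invariant(\loc')$ for the $\Succ$ operator and showing that each of the three operations (intersection with guard/invariant, reset, time-elapse) preserves the class $\oneClockSC(\A)$, up to adding only linear terms that already occur syntactically in $\A$ or are equal to~$0$.

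\textbf{Base case.} The initial symbolic state is $\sinit = (\locinit, \timelapse{(\clock = 0)} \land \invariant(\locinit))$. Since there is a single clock, $\timelapse{(\clock=0)}$ is just $\clock \geq 0$, i.e.\ $(0 \compOp \clock)$, and $\invariant(\locinit)$ is a conjunction of atomic constraints $\clock \compOp \lt$ with $\lt$ a parametric linear term appearing in~$\A$ (plus possibly parameter-only constraints $\lt^1 \compOp \lt^2$, which fit the second family of conjuncts, after discarding duplicates among the $\lt^1_j,\lt^2_j$). So $\sinit$'s constraint lies in $\oneClockSC(\A)$.

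\textbf{Inductive step.} Assume $\C \in \oneClockSC(\A)$ and consider a transition $e = (\loc,\guard,\action,\resets,\loc')$. I would track the three operations in turn. (1)~\emph{Conjunction with $\guard$ and with $\invariant(\loc')$:} both are conjunctions of atomic constraints $\clock \compOp \lt$ (with $\lt$ appearing in~$\A$) together with parameter-only atomic constraints; conjoining them with $\C$ keeps the form $\bigwedge_i(\lt_i \compOp \clock) \land \bigwedge_j (\lt^1_j \compOp \lt^2_j)$, merely enlarging the index sets with linear terms already present in~$\A$. (2)~\emph{Reset:} since $\A$ has one clock, $\resets$ is either $\emptyset$ or $\{\clock\}$. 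If empty, nothing changes. If $\resets=\{\clock\}$, then $\reset{\Phi}{\{\clock\}}$ replaces all clock-bounds on~$\clock$ by $\clock = 0$: the constraints $\lt_i \compOp \clock$ disappear (they must be projected away via $\domain$-style elimination, yielding only parameter constraints among the $\lt_i$'s, which are of the form $\lt^1 \compOp \lt^2$ with both terms appearing in~$\A$ or equal to~0), and a single conjunct $0 \compOp \clock$ (actually $\clock = 0$, i.e.\ $(0 \leq \clock)$ combined with the dual) is added. (3)~\emph{Time-elapse:} with a single clock, $\timelapse{\Phi}$ relaxes all upper bounds on~$\clock$ and keeps lower bounds on~$\clock$ and all parameter relations; so conjuncts $\lt_i \leq \clock$ (or $<$) survive, conjuncts $\lt_i \geq \clock$ (or $=$, $>$) are dropped, and the parameter-only conjuncts are untouched. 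Each step produces a constraint whose only new linear terms come from guards/invariants of~$\A$ or are~$0$; after removing duplicates among the $\lt^1_j,\lt^2_j$, the result is again in $\oneClockSC(\A)$, establishing $\C' \in \oneClockSC(\A)$.

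\textbf{Main obstacle.} The delicate point is the reset case: when the clock is reset, the old bounds $\lt_i \compOp \clock$ must be eliminated, and doing so correctly requires checking that the induced relations among the $\lt_i$'s are themselves of the form $\lt^1 \compOp \lt^2$ with $\lt^1,\lt^2$ appearing in~$\A$ (or $0$) — i.e.\ that eliminating the single clock variable from a conjunction of constraints of the form $\lt_i \compOp \clock$ produces only pairwise comparisons $\lt_i \compOp \lt_{i'}$ and no genuinely new linear terms. This is true because eliminating one variable that appears with coefficient~$\pm 1$ everywhere (Fourier–Motzkin on a single variable) only combines pairs of bounds additively on the clock side, which cancels the clock and leaves a comparison between the two linear terms. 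I would also note explicitly that the set $\oneClockSC(\A)$ is \emph{finite up to equivalence}, since there are finitely many linear terms in~$\A$, hence finitely many conjuncts of each of the two kinds and finitely many consistent conjunctions — this finiteness is exactly what later yields finiteness of the parametric zone graph, though the lemma as stated only needs the shape claim.
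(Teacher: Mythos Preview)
Your proposal is correct and follows essentially the same approach as the paper: induction on the length of symbolic runs, with the inductive step decomposed into guard intersection, reset (handled via Fourier--Motzkin elimination of the single clock, yielding only pairwise comparisons $\lt_i \compOp \lt_{i'}$), time-elapse (dropping upper bounds on~$\clock$), and intersection with the target invariant. The paper's proof is organized in exactly this way, and it likewise singles out the reset step as the place where one must check that eliminating~$\clock$ introduces no new linear terms.
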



\begin{proof}
	By induction on the length of the runs.
	\begin{description}
		\item[Base case] A run of length 0 consists of the sole initial symbolic state.
			According to the semantics of PTA, this symbolic state is $(\locinit, \Cinit)$, where $\Cinit$ is $\timelapse{(\Clock = 0)} \land \invariant(\locinit)$, \ie{} $\clock \geq 0 \land \invariant(\locinit)$.
			From \cref{definition:1clock-PTA}, $\invariant(\locinit)$ is of the form $\bigwedge_{i}\lt_i \compOp \clock$, with $\lt_i$ parametric linear terms of~$\A$, hence $\invariant(\locinit) \in \oneClockSC(\A)$.
			Furthermore, $\clock \geq 0$ obviously belongs to $\oneClockSC(\A)$.
			Hence the initial constraint belongs to $\oneClockSC(\A)$.
			
		\item[Induction step]
			Consider a run of length $n$ reaching the symbolic state $(\loc, \C)$, and assume $\C$ is of the form
			$$\bigwedge_{i}(\lt_i \compOp \clock) \land \bigwedge_{j} (\lt^1_j \compOp \lt^2_j)\text{.}$$
			Let $(\loc', \C')$ be a successor of $(\loc, \C)$ through the $\Succ$ operation, for some edge $(\loc,\guard,\action,\resets,\loc')$.
			Recall that $\C' = \timelapse{\big(\reset{(\C \land \guard)}{\resets}\big )} \cap \invariant(\loc')$.
			Let us the consider the different operations sequentially.
			
			\begin{description}
				\item[Guard]
					From \cref{definition:1clock-PTA}, a guard is of the form $\clock \compOp \sum_i \alpha_i \param_i + c$, with $\param_i \in \Param$ and $\alpha_i \in \grandz$; hence $\guard \in \oneClockSC(\A)$.
					Since $\C  \in \oneClockSC(\A)$ by induction hypothesis, then $\C \land \guard \in \oneClockSC(\A)$.
				
				\item [Reset] Then, $\reset{(\C \land \guard)}{\resets}$ is equivalent to removing $\clock$ in $\C \land \guard$ (using variable elimination technique such as Fourier-Motzkin) and adding a fresh equality $\clock = 0$.
				The elimination of~$\clock$ will leave the set of parametric inequalities (\ie{} $\bigwedge_{j} \lt^1_j \compOp \lt^2_j$) unchanged.
                As for the inequalities containing $\clock$ (\ie{} $\bigwedge_{i}\lt_i \compOp \clock$), the elimination of $\clock$ will lead to the disappearance of some of the $\lt_i$, as well as the creation of new inequalities of the form $\lt_i \compOp \lt_{i'}$, which will be added to the set of parametric inequalities (see, \eg{}~\cite{Schrijver86}).
				Finally, adding $\clock = 0$ (which belongs to  $\oneClockSC(\A)$) makes $\reset{(\C \land \guard)}{\resets}$ remain in $\oneClockSC(\A)$.
				
				\item[Time elapsing] The time elapsing will remove some upper bounds on~$\clock$, which leads to the disappearance of some of the inequalities, and hence makes $\timelapse{\big(\reset{(\C \land \guard)}{\resets}\big )}$ still belong to $\oneClockSC(\A)$.
				
				\item[Addition of the target invariant] The target invariant $\invariant(\loc')$ adds new inequalities, all belonging to $\oneClockSC(\A)$, hence $\timelapse{\big(\reset{(\C \land \guard)}{\resets}\big )} \cap \invariant(\loc') \in \oneClockSC(\A)$.
			\end{description}
			Hence, $\C' \in \oneClockSC(\A)$.
      \qedhere
	\end{description}
\end{proof}

\begin{theorem}
\label{theorem:finiteness-zone-graph}
	The parametric zone graph of a 1cPTA is finite.
\end{theorem}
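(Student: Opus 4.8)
The plan is to combine \cref{lemma:shape-constraints} with an elementary counting argument. Since every reachable symbolic state is a pair $(\loc, \C)$ with $\loc$ in the finite set $\Loc$ and, by \cref{lemma:shape-constraints}, $\C \in \oneClockSC(\A)$, it suffices to show that $\oneClockSC(\A)$ contains only finitely many constraints up to logical equivalence.

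First I would note that only finitely many parametric linear terms $\sum_i \alpha_i \param_i + c$ occur in the guards and invariants of~$\A$: indeed $\A$ has finitely many locations and finitely many edges, each invariant and each guard is a finite conjunction of atomic constraints, and each atomic constraint mentions at most one such term. Let $m$ be the number of these terms together with the term~$0$. Then the set of atomic constraints that may appear in an element of $\oneClockSC(\A)$ --- namely those of the form $\lt \compOp \clock$ and those of the form $\lt^1 \compOp \lt^2$, with $\lt, \lt^1, \lt^2$ among these $m$ terms and $\compOp \in \{<, \leq, =, \geq, >\}$ --- has size at most $N := 5m + 5m^2$. By \cref{definition:1clock-C}, every element of $\oneClockSC(\A)$ is a conjunction of atomic constraints drawn from this finite set; identifying a conjunction with the set of its conjuncts (reordering or repeating conjuncts does not change the denoted constraint), there are therefore at most $2^{N}$ pairwise non-equivalent constraints in $\oneClockSC(\A)$.

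Consequently, by \cref{lemma:shape-constraints}, the parametric zone graph of~$\A$ has at most $|\Loc| \cdot 2^{N}$ nodes, up to equivalence of the constraint component of each symbolic state. Moreover, $\A$ has finitely many edges, and for each edge~$e$ the operation $\Succ_e$ yields at most one successor, so every node has at most $|\steps|$ successors; hence the parametric zone graph is finite, with finitely many nodes and finitely many edges.

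I do not expect a genuine obstacle: the real content is already in \cref{lemma:shape-constraints}. The only subtlety worth a line of care is to carry out the count modulo logical equivalence --- equivalently, to treat a constraint as the finite set of its atomic conjuncts --- so that the unbounded index sets appearing in \cref{definition:1clock-C} do not spuriously create infinitely many distinct constraints.
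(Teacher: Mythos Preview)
Your proposal is correct and follows essentially the same approach as the paper: invoke \cref{lemma:shape-constraints} and then observe that $\oneClockSC(\A)$ is finite because only finitely many linear terms occur in~$\A$. The paper's proof is a two-sentence version of what you wrote; your explicit count and your remark about working modulo logical equivalence of conjunctions are reasonable elaborations (the paper defers a sharper count to the subsequent \cref{proposition:complexity-zone-graph}, where it argues that for each term~$\lt$ at most one inequality $\clock \compOp \lt$ survives, avoiding your factor~$5$).
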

\begin{proof}
	From \cref{lemma:shape-constraints}, each symbolic state of a 1cPTA~$\A$ belongs to $\oneClockSC(\A)$.
	Due to the finite number of linear terms in the guards and invariants in~$\A$ and the finite number of locations of~$\A$, there is a finite number of possible symbolic states reachable in~$\A$.
%
%
\end{proof}

Let us compute below an upper bound on the size of this symbolic graph.
In~the following, $|\LT|$ denotes the number of different parametric linear terms (\ie{} the number of guards and invariants) used in~$\A$.

\begin{restatable}{proposition}{PropositionZG}
\label{proposition:complexity-zone-graph}
	The parametric zone graph of a 1cPTA is in $|\Loc| \times  2^{|\LT|(|\LT|+1)}$.
\end{restatable}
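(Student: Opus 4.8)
The plan is to bound the number of distinct symbolic states by counting, for each location, how many distinct constraints in $\oneClockSC(\A)$ can actually arise. By \cref{lemma:shape-constraints}, every reachable symbolic state $(\loc,\C)$ has $\C\in\oneClockSC(\A)$, so it suffices to multiply $|\Loc|$ by an upper bound on $|\oneClockSC(\A)|$ (or, more precisely, on the number of such constraints reachable in $\A$).

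First I would fix the finite set $\LT$ of parametric linear terms: these are the $|\LT|$ terms appearing in guards and invariants of $\A$, together with the term $0$ — but to keep the arithmetic clean one absorbs the constant $0$ into the count, treating $|\LT|$ as the number of relevant linear terms (including $0$ if needed). A constraint in $\oneClockSC(\A)$ is, up to logical equivalence, determined by two pieces of data: (i) for each linear term $\lt_i\in\LT$, which single-clock atomic constraint relating $\lt_i$ and the unique clock $\clock$ is imposed (of the form $\lt_i\compOp\clock$), and (ii) for each \emph{ordered} pair $(\lt^1_j,\lt^2_j)$ of distinct linear terms, which parametric atomic constraint $\lt^1_j\compOp\lt^2_j$ is imposed. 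For (i), each of the $|\LT|$ terms contributes a bounded choice; for (ii), there are $|\LT|(|\LT|-1)$ ordered pairs of distinct terms, but grouping $\{\lt,\lt'\}$ together and recording the full relative position (which is one of finitely many options: $<,\le,=,\ge,>$, or ``unconstrained'') one gets at most a constant number of possibilities per pair. Bounding each of the $|\LT|$ clock-relative positions by $2$ and each of the $|\LT|^2$ (a convenient over-count of $\binom{|\LT|}{2}$ plus the clock terms) pairwise relations by $2$ as well, the number of Boolean combinations is at most $2^{|\LT|}\cdot 2^{|\LT|^2}=2^{|\LT|(|\LT|+1)}$. Multiplying by the $|\Loc|$ locations yields the claimed bound $|\Loc|\times 2^{|\LT|(|\LT|+1)}$.

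The main obstacle — and the place where the argument must be done carefully rather than waved through — is justifying that each atomic fact (a clock-relative position $\lt_i\compOp\clock$, or a term-relative position $\lt^1_j\compOp\lt^2_j$) really ranges over a set small enough that the product telescopes to exactly $2^{|\LT|(|\LT|+1)}$ and not some larger expression. The subtlety is twofold: the comparison operators $\compOp$ range over five symbols, not two, so naively one would get $5$ or $6$ to a power; and for a given ordered pair there may be both a lower and an upper constraint simultaneously. The resolution is that a convex constraint over $\Clock\cup\Param$ of this restricted shape is, for each fixed valuation of the parameters, equivalent to specifying for each term $\lt_i$ whether $\clock<\lt_i$, $\clock=\lt_i$ or $\clock>\lt_i$ is \emph{forced}, but since $\clock$ is a single clock the whole constraint is an interval in $\clock$ whose endpoints are among the $\lt_i$'s; hence only the two endpoint terms and the endpoint types matter, and the pairwise relations among the $\lt_j$'s are what pin down which term is the min and which the max. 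Carrying this reduction out and recounting shows that the effective number of choices per ``slot'' can be taken to be $2$, absorbing operator strictness and open/closed distinctions into a single exponent, which is exactly what makes the bound $2^{|\LT|(|\LT|+1)}$ rather than something with base $5$ or $6$. Once this counting lemma is in place, the proof is a one-line application of \cref{lemma:shape-constraints} together with the observation that the location component contributes the factor $|\Loc|$.
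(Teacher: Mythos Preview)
Your approach is essentially the paper's: invoke \cref{lemma:shape-constraints} to restrict attention to constraints in $\oneClockSC(\A)$, then bound $|\oneClockSC(\A)|$ by $2^{|\LT|}\cdot 2^{|\LT|^2}$ via one atomic conjunct per clock--term slot ($|\LT|$ slots) and per term--term slot ($|\LT|^2$ slots), and finally multiply by~$|\Loc|$. The paper's proof actually glosses over the operator-count subtlety you flag---it simply asserts that, per linear term~$\lt_i$, a 1-clock symbolic constraint contains at most one inequality $\clock\compOp\lt_i$, and then counts subsets of these $|\LT|+|\LT|^2$ ``inequality slots'' as $2^{|\LT|+|\LT|^2}$ without ever arguing why each slot contributes a binary rather than a five- or six-way choice---so your interval digression, while it does not really nail down base~$2$ either (it yields a polynomial bound on the clock part, not $2^{|\LT|}$), is already more scrupulous than the published argument.
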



\begin{proof}
	First, note that, given a parametric linear term~$\lt_i$, an inequality $\clock \compOp \lt_i$ cannot be conjuncted with other $\clock \compOp' \lt_i$, where $\mathord{\compOp} \neq \mathord{\compOp'}$ (unless $\mathord{\compOp} = \mathord\geq$ and $\mathord{\compOp'} = \mathord\leq$ or the converse, in which case the conjunction is equivalent to a single equality).
	Hence, given $\lt_i$, a 1-clock symbolic constraint contains only one inequality of the form $\clock \compOp \lt_i$.
	The same reasoning applies to parametric inequalities $\lt^1_j \compOp \lt^2_j$.
	
	There are $|\LT|$ different linear terms in~$\A$, and hence $|\LT|$ different inequalities of the form $\clock \compOp \lt_i$ to be used in a 1-clock symbolic constraint.
	Following the same reasoning, there are $|\LT|^2$ different inequalities of the form $\lt^1_j \compOp \lt^2_j$.
	
	Hence the set $\oneClockSC(\A)$ contains $2^{|\LT|} \times 2^{|\LT|^2} = 2^{|\LT|(|\LT|+1)}$ elements.
	
	These constraints can be met for each of the $|\Loc|$ locations.
	This gives that the zone graph of~$\A$ contains at most $|\Loc| \times  2^{|\LT|(|\LT|+1)}$ symbolic states.
\end{proof}

\subsection{Decidability and Synthesis for Deterministic 1-clock PTA}\label{ss:dec-1clock}

We show here that the language- and trace-preservation problems are
decidable for deterministic 1cPTA. These results rely on the correctness and
completeness of \cref{algo:IM} and on the finiteness of the parametric
zone graph of 1cPTA.

\begin{theorem}[trace-preservation synthesis]
\label{theorem:trace-preservation-synthesis}
	Let $\A$ be a deterministic 1cPTA and $\pval$ be a parameter valuation.
	The set of parameters for which the trace set is the same as in $\valuate{\A}{\pval}$ is computable in time proportional to $|\Loc| \times  2^{|\LT|(|\LT|+1)}$.
\end{theorem}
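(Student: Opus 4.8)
The plan is to simply run $\TPS(\A,\pval)$ (\cref{algo:IM}) and return its output; correctness and the complexity bound then follow from results already in hand: soundness of $\TPS$ (\cref{theorem:TPS:correctness}), completeness of $\TPS$ for deterministic PTA (\cref{theorem:TPS:completeness}), and finiteness of the parametric zone graph of a 1cPTA (\cref{theorem:finiteness-zone-graph}, with the explicit bound of \cref{proposition:complexity-zone-graph}).

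First I would argue termination. $\TPS$ performs a breadth-first exploration of the parametric zone graph: the set $\States$ grows monotonically across iterations of the \textbf{while} loop, and $\Snew$ is the latest layer, obtained from the previous one via $\Succ$ (minus the $\pval$-incompatible states discarded on \cref{algo:IM:discard}). By \cref{lemma:shape-constraints}, every reachable symbolic state of~$\A$ carries a constraint lying in the finite set $\oneClockSC(\A)$, and by \cref{theorem:finiteness-zone-graph,proposition:complexity-zone-graph} there are at most $|\Loc| \times 2^{|\LT|(|\LT|+1)}$ distinct symbolic states overall. Hence after at most that many iterations the fixpoint test $\Snew \subseteq \States$ on \cref{algo:IM:fixpoint} must succeed — this test being effective, since inclusion and equality of the associated (parametric) polyhedra over $\Clock \cup \Param$ are decidable — and the algorithm returns $\Kgood \land \neg\Kbad$.

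Then I would invoke correctness: since $\A$ is deterministic and $\TPS(\A,\pval)$ terminates, \cref{theorem:TPS:completeness} yields that the returned constraint $\K$ satisfies $\pval'\models\K$ if, and only if, $\Traces(\valuate{\A}{\pval'}) = \Traces(\valuate{\A}{\pval})$; that is, $\K$ is exactly the synthesized set. For the complexity, each of the at most $|\Loc| \times 2^{|\LT|(|\LT|+1)}$ symbolic states is handled once: computing its $\Succ$, its projection $\projectP{\C}$ onto the parameters, the $\pval$-compatibility check on \cref{algo:IM:check}, and the updates of $\Kgood$ and $\Kbad$ are all manipulations of constraints over a bounded number of variables (one clock plus the parameters), hence polynomial in the size of~$\A$ and~$\pval$. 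The number of symbolic states therefore dominates, giving the announced running time proportional to $|\Loc| \times 2^{|\LT|(|\LT|+1)}$.

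The only non-routine point — and the real crux of the argument — is guaranteeing termination, i.e.\ that the breadth-first search cannot keep generating genuinely new symbolic states indefinitely; this is precisely what \cref{lemma:shape-constraints} and \cref{theorem:finiteness-zone-graph} provide, so the proof is short. The secondary point worth stating explicitly is that, for 1cPTA, the fixpoint test and the final constraint operations ($\land$, $\neg$, projection, emptiness) are effective, which holds because all constraints involved are polyhedra over $\Clock\cup\Param$ with a single clock.
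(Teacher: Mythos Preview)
Your proposal is correct and follows essentially the same approach as the paper: invoke finiteness of the parametric zone graph (\cref{theorem:finiteness-zone-graph}) to get termination of $\TPS$, then apply \cref{theorem:TPS:correctness,theorem:TPS:completeness} for correctness and completeness, and read off the complexity from the bound in \cref{proposition:complexity-zone-graph}. Your version is more detailed about effectiveness of the fixpoint test and the per-state operations, but the paper's proof is structurally identical and simply more terse.
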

\begin{proof}
	Since $\A$ is a 1cPTA, then its parametric zone graph is finite from \cref{theorem:finiteness-zone-graph}.
	Hence $\TPS(\A, \pval)$ terminates.
	Furthermore, since $\A$ is deterministic, from \cref{theorem:TPS:correctness,theorem:TPS:completeness}, $\TPS(\A, \pval)$ returns all parameter valuations $\pval'$ such that $\Traces(\valuate{\A}{\pval'}) = \Traces(\valuate{\A}{\pval})$.
	
	Concerning the complexity, in the worst case, all symbolic states of~$\A$ are $\pval$-compatible, and $\TPS(\A, \pval)$ needs to explore the entire parametric zone graph, which is of size $|\Loc| \times  2^{|\LT|(|\LT|+1)}$.
\end{proof}

%

\begin{theorem}[language-preservation synthesis]\label{theorem:language-preservation-synthesis}
	Let $\A$ be a deterministic 1cPTA and $\pval$ be a parameter valuation.
	The set of parameters for which the language is the same as in $\valuate{\A}{\pval}$ is computable in $|\Loc| \times  2^{|\LT|(|\LT|+1)}$.
\end{theorem}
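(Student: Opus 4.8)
The plan is to reduce this statement to \cref{theorem:trace-preservation-synthesis}, using the fact that for a \emph{deterministic} PTA, language preservation and trace preservation coincide. So the real work is a small observation about determinism, after which the result follows for free.

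First I would introduce the ``forgetful'' map $\Pi$ sending an untimed trace $(\loc_i, a_i)_i$ to the untimed word $(a_i)_i$; note that $\Pi$ depends only on the set of locations and the alphabet of~$\A$, not on any parameter valuation. I claim that for every valuation~$v$, the restriction of~$\Pi$ to $\UTraces(\valuate{\A}{v})$ is a bijection onto $\ULang(\valuate{\A}{v})$. Surjectivity is immediate from the definitions of $\ULang$ and $\UTraces$. For injectivity, the key point is determinism: from a location~$\loc$ and an action~$a$, \cref{def:PTA}'s determinism assumption gives at most one outgoing edge, so the sequence of edges — hence the sequence of locations — traversed by any run of $\valuate{\A}{v}$ is entirely determined by the sequence of actions it performs, and this determination does \emph{not} depend on~$v$ (the valuation only decides whether that fixed edge sequence is feasible, and whether a finite run along it is blocking). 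Consequently two maximal runs of $\valuate{\A}{v}$ carrying the same untimed word have the same untimed trace.

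From this I would derive that $\ULang(\valuate{\A}{\pval}) = \ULang(\valuate{\A}{\pval'})$ if, and only if, $\UTraces(\valuate{\A}{\pval}) = \UTraces(\valuate{\A}{\pval'})$. The ``if'' direction is trivial, since $\Pi$ is the same map for both valuations. For ``only if'', take $\tau \in \UTraces(\valuate{\A}{\pval})$; then $\Pi(\tau) \in \ULang(\valuate{\A}{\pval}) = \ULang(\valuate{\A}{\pval'})$, so there is a maximal run of $\valuate{\A}{\pval'}$ with word $\Pi(\tau)$; by the determinism argument above, its sequence of locations is the one forced by $\Pi(\tau)$, which is exactly the location sequence of~$\tau$, hence its trace is~$\tau$, i.e.\ $\tau \in \UTraces(\valuate{\A}{\pval'})$; symmetry concludes.

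Finally, since a 1cPTA has a finite parametric zone graph (\cref{theorem:finiteness-zone-graph}) of size at most $|\Loc| \times 2^{|\LT|(|\LT|+1)}$ (\cref{proposition:complexity-zone-graph}), and since $\A$ is deterministic, \cref{theorem:trace-preservation-synthesis} ensures that $\TPS(\A, \pval)$ terminates within that time bound and returns exactly the set of~$\pval'$ with $\Traces(\valuate{\A}{\pval'}) = \Traces(\valuate{\A}{\pval})$; by the equivalence just established this is precisely the set of~$\pval'$ preserving the untimed language, which is the claimed result. I expect the only delicate step to be the determinism argument of the second and third paragraphs — in particular checking that it behaves correctly for finite, blocking maximal runs (that a finite word witnessed by a blocking run in $\valuate{\A}{\pval}$ is again witnessed by a blocking run in $\valuate{\A}{\pval'}$ whenever the languages agree) — but this is essentially the same observation already invoked in the proof of \cref{theorem:TPS:completeness}.
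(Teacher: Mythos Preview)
Your proposal is correct and follows essentially the same approach as the paper: reduce language preservation to trace preservation via determinism, then invoke \cref{theorem:trace-preservation-synthesis}. The paper's proof simply asserts the equivalence in one sentence (``Since $\A$ is deterministic, the set of parameter valuations $\pval'$ such that $\Lang(\valuate{\A}{\pval'}) = \Lang(\valuate{\A}{\pval})$ is the same as the set of parameter valuations $\pval'$ such that $\Traces(\valuate{\A}{\pval'}) = \Traces(\valuate{\A}{\pval})$''), whereas you spell out the bijection~$\Pi$ and the blocking-run case explicitly, but the underlying argument is identical.
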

\begin{proof}
	Since $\A$ is deterministic, the set of parameter valuations $\pval'$ such that $\Lang(\valuate{\A}{\pval'}) = \Lang(\valuate{\A}{\pval})$ is the same as the set of parameter valuations $\pval'$ such that $\Traces(\valuate{\A}{\pval'}) = \Traces(\valuate{\A}{\pval})$.
	Hence one can directly apply $\TPS(\A, \pval)$ to compute the parameter valuations with the same language as~$\valuate{\A}{\pval}$.
\end{proof}

As direct corollaries of these results, the language- and trace-preservation problems are decidable for deterministic 1cPTA, and so are their \emph{continuous} (robust) counterparts.\ea{28/12/2018:  j'ai ajouté que la version robuste est aussi décidable ; puisqu'il suffit de tester la robustesse sur le polyèdre ; besoin d'explication ?}




\begin{remark}\label{remark:incomplete:1cPTA}
	$\TPS$ is not complete for non-deterministic 1cPTA: in fact, the PTA in \cref{figure:example:nondet} is a 1cPTA, and therefore \cref{remark:incompleteness} applies here too.
\end{remark}

\subsection{Undecidability for L/U-PTA}\label{ss:L/U}

We showed so far that the language- and trace-preservation problems
are undecidable for general PTA (\cref{section:undecidable}) and decidable for
(deterministic) 1-clock PTA (\cref{ss:dec-1clock}). These results match
the EF-emptiness problem, also undecidable for general PTA~\cite{AHV93} and
decidable for 1-clock PTA. 
We now show that the situation is different for L/U-PTA (PTA in~which each
parameter is always either used as a lower bound or always as an upper
bound~\cite{HRSV02}): while EF-emptiness is decidable for
L/U-PTA~\cite{HRSV02,BlT09}, we show  that the language-
and trace-preservation problems are not.


	\begin{figure}[tb]
	\centering
	\small
		\begin{tikzpicture}[->, >=stealth', auto, thin,scale=1.4]

			\node[rond] (l0) at(0,0) {$\loc_0$};
			\node[rond] (l1) at (2,0) {$\loc_1$};
			\node[rond] (l2) at (4,0) {$\loc_2$};
			\node[invariant] at (l0.north) {$\clock_1 \leq \param_u$};
			\node[invariant] at (l1.north) {$\clock_1 \leq \param_u$};
			
			\path
				(l0) edge[] node {\begin{tabular}{c}
						$\clock_1 \geq \param_l $\\
						$a$ \\
						$\clock_2 := 0$ \\
					\end{tabular}} (l1)
				(l1) edge[loop below,looseness=6] 
                                node[below] 
                                {\begin{tabular}{c}
						$\clock_2 > 0$ \\
						$b$ \\
					\end{tabular}} (l1)
				(l1) edge[] node[above] {\begin{tabular}{c}
						$\clock_2 = 0$\\
						$a$ \\
					\end{tabular}} (l2)
				;
		\end{tikzpicture}
	\caption{PTA gadget ensuring $\param_l = \param_u$}
	\label{figure:lu:l=u}
	\end{figure}
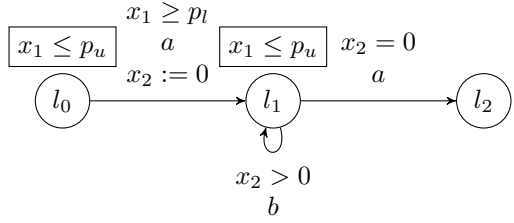

\paragraph{Constraining parameter equality.}
We~first show how to encode equality of a lower-bound parameter and an
upper-bound parameter in a L/U-PTA, using language preservation.
Consider the PTA gadget depicted in \Cref{figure:lu:l=u}.
Assume a parameter valuation $\pval$ such that $\param_l = \param_u$.
Note that since $\param_l = \param_u$, no time can elapse in $\loc_1$, and the $b$ transition can never be taken.
In fact, we have that the language of this gadget is $aa$ iff $\param_l
= \param_u$.

Now, one can rewrite the two-counter machine encoding of \Cref{ss:language-preservation} using an L/U-PTA which, together with the previous gadget, gives the following undecidability result.

\begin{restatable}{theorem}{TheoremUndecLgLUPTA}
\label{thm-language:L/U-PTA}
The language-preservation problem is undecidable for L/U-PTA with at least one
lower-bound and at least one upper-bound parameter.
\end{restatable}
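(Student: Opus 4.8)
The plan is to reduce from the halting problem for deterministic two-counter machines, reusing the reduction behind \cref{thm-language} but casting its core as an L/U-PTA and forcing the lower- and upper-bound parameters to coincide by plugging in the gadget of \cref{figure:lu:l=u}. Recall that the proof of \cref{thm-language} produces, from a deterministic two-counter machine $\calM$, a single-parameter PTA $\A_0$ (the automaton of \cref{figure:language-preservation}) with the following properties, where $\valuate{\A_0}{q}$ denotes the instance obtained by setting the parameter $\param$ to the nonnegative rational~$q$: $\Lang(\valuate{\A_0}{0}) = \{a^\omega\}$; there is some $q>0$ with $\Lang(\valuate{\A_0}{q}) = \{a^\omega\}$ if, and only~if, $\calM$ halts; and when $\calM$ does not halt, $\valuate{\A_0}{q}$ eventually deadlocks for every $q>0$ (by \cref{lemma:reduction}), so $\Lang(\valuate{\A_0}{q})$ then consists only of finite words over the single letter~$a$. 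Note that $\A_0$ compares clocks only with $\param$ and with $\param-1$, and that its two parametric tests $\param=0$ and $\param>0$ are already realized there as guards of the form $t=0\et t\geq\param$ and $t=0\et t<\param$.

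First I would turn $\A_0$ into an L/U-PTA $\A_1$ over two parameters, $\param_l$ used only as a lower bound and $\param_u$ used only as an upper bound. In every guard and invariant of $\A_0$: replace each atom $\clock<\param$ or $\clock\leq\param$ by the same atom with $\param_u$ in place of $\param$; replace each atom $\clock\geq\param$ or $\clock>\param$ by the same atom with $\param_l$ in place of $\param$; replace each equality $\clock=\param$ by $\clock\geq\param_l\et\clock\leq\param_u$; and proceed identically for the comparisons with $\param-1$, using $\param_l-1$ in the lower-bound conjuncts and $\param_u-1$ in the upper-bound conjuncts. Likewise, the test $t=0\et t\geq\param$ becomes $t=0\et t\geq\param_l$ and the test $t=0\et t<\param$ becomes $t=0\et t<\param_u$. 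Each replacement is logically equivalent to the original atom as soon as $\param_l$ and $\param_u$ receive a common value, so for every nonnegative rational $q$ the timed automaton $\valuate{\A_1}{(q,q)}$ has, up to logical equivalence of its constraints, the same guards and invariants as $\valuate{\A_0}{q}$, hence the same language. Moreover $\A_1$ is by construction an L/U-PTA with exactly one lower-bound and one upper-bound parameter, and its only action is~$a$.

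Then I would assemble the final automaton $\A$ from a fresh initial location with trivial invariant, together with disjoint copies of $\A_1$ and of the gadget $G$ of \cref{figure:lu:l=u}: from the fresh location, two $a$-labelled edges, guarded by $\KTrue$ and resetting every clock, lead respectively to the initial location of $\A_1$ and to the location $\loc_0$ of $G$. Since every maximal run of $\A$ first emits $a$ and then proceeds entirely inside $\A_1$ or inside $G$ with all clocks reset, $\Lang(\valuate{\A}{\pval}) = a\cdot\big(\Lang(\valuate{\A_1}{\pval})\cup\Lang(\valuate{G}{\pval})\big)$ for every valuation $\pval$; and a straightforward case analysis on the sign of $\param_l-\param_u$ shows that $\Lang(\valuate{G}{\pval})$ is $\{aa\}$ when $\param_l=\param_u$, $\{aa,ab^\omega\}$ when $\param_l<\param_u$, and $\{\varepsilon\}$ when $\param_l>\param_u$. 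Choose the reference valuation $\pval_0$ with $\pval_0(\param_l)=\pval_0(\param_u)=0$, so that $\Lang(\valuate{\A}{\pval_0}) = a\cdot(\{a^\omega\}\cup\{aa\}) = \{a^\omega,aaa\}$. If $\calM$ halts, pick $q>0$ with $\Lang(\valuate{\A_0}{q})=\{a^\omega\}$ and set $\pval'(\param_l)=\pval'(\param_u)=q$: then $\pval'\neq\pval_0$, $\Lang(\valuate{G}{\pval'})=\{aa\}$, $\Lang(\valuate{\A_1}{\pval'})=\{a^\omega\}$, and hence $\Lang(\valuate{\A}{\pval'})=\{a^\omega,aaa\}=\Lang(\valuate{\A}{\pval_0})$. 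Conversely, assume some $\pval'\neq\pval_0$ preserves the language. Then $\pval'(\param_l)=\pval'(\param_u)$ must hold: otherwise $\Lang(\valuate{G}{\pval'})$ adds to $\Lang(\valuate{\A}{\pval'})$ the word $aab^\omega$ (when $\pval'(\param_l)<\pval'(\param_u)$) or the word $a$ (when $\pval'(\param_l)>\pval'(\param_u)$), neither belonging to $\{a^\omega,aaa\}$ and neither removable from the union since $\A_1$ produces words over~$a$ only. So $\pval'(\param_l)=\pval'(\param_u)=q$ with $q\neq 0$, \ie{} $q>0$; then $\Lang(\valuate{\A_1}{\pval'})=\Lang(\valuate{\A_0}{q})$ and $\{aaa\}\cup a\cdot\Lang(\valuate{\A_0}{q})=\{a^\omega,aaa\}$ forces $a^\omega\in\Lang(\valuate{\A_0}{q})$, which forces $\calM$ to halt (by \cref{lemma:reduction}). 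Since $\A$ is an L/U-PTA with one lower-bound and one upper-bound parameter, undecidability follows.

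The point I expect to be most delicate is the L/U transformation of the second step: one must check that splitting every comparison with $\param$ and with $\param-1$ into a lower-bound conjunct on $\param_l$ and an upper-bound conjunct on $\param_u$ — inside the clock guards, inside the location invariants, and inside the two parametric tests alike — never causes a parameter to appear with the wrong polarity, so that $\A$ is genuinely an L/U-PTA, while the instance still collapses exactly to $\A_0$ on the diagonal $\param_l=\param_u$. Everything else is bookkeeping; in particular, confining the letter $b$ to the gadget $G$ is exactly what makes any departure from $\param_l=\param_u$ unavoidably visible in the combined language, which is the heart of the reduction.
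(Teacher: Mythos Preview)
Your proof is correct and follows essentially the same strategy as the paper: split $\param$ into a lower-bound copy~$\param_l$ and an upper-bound copy~$\param_u$, attach the gadget of \cref{figure:lu:l=u} so that any departure from $\param_l=\param_u$ becomes visible via the letter~$b$, take $\param_l=\param_u=0$ as reference, and invoke \cref{thm-language} on the diagonal. The only difference is compositional: the paper plugs the gadget \emph{sequentially} before the encoding (fusing $\loc_2$ with $\sinit$ and resetting all clocks, so the reference language is simply $\{a^\omega\}$), whereas you place gadget and encoding \emph{in parallel} via a fresh nondeterministic initial $a$-choice (reference language $\{a^\omega,aaa\}$); both variants work for the same reason, and your parallel version arguably separates the two contributions to the language a bit more cleanly.
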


First, the PTA gadget depicted in \cref{figure:lu:l=u} can be characterized in the following lemma.

\begin{lemma}\label{lemma:lu:l=u}
	In the PTA gadget depicted in \Cref{figure:lu:l=u}, $\loc_2$ is reachable and $b$ can never occur iff $\param_l = \param_u$.
\end{lemma}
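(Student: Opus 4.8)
The plan is to prove both implications directly, resting on one structural observation: the only edge labelled~$b$ is the self-loop on~$\loc_1$, and---that self-loop aside---the only edge entering~$\loc_1$ is the edge from~$\loc_0$, guarded by $\clock_1 \geq \param_l$ and resetting~$\clock_2$. Since the invariant of~$\loc_0$ is $\clock_1 \leq \param_u$ and both parameters are nonnegative, whenever this edge fires the value of~$\clock_1$ on entry to~$\loc_1$ lies in the window $[\param_l, \param_u]$ while $\clock_2 = 0$. The whole argument then turns on the fact that this window is a single point exactly when $\param_l = \param_u$.

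For the ``if'' direction, assume $\param_l = \param_u =: p$. First I would exhibit an explicit run witnessing reachability of~$\loc_2$: let $p$ time units elapse in~$\loc_0$ (the invariant $\clock_1 \leq p$ holds throughout, tightly at the end), fire the edge to~$\loc_1$ (its guard $\clock_1 \geq p$ holds at $\clock_1 = p$), then immediately fire the edge to~$\loc_2$ (its guard $\clock_2 = 0$ holds since $\clock_2$ was just reset). Second, I would argue that $b$ can never occur: by the observation above, every run enters~$\loc_1$ with $\clock_1 = p$ (the window $[\param_l,\param_u]$ being the singleton $\{p\}$) and $\clock_2 = 0$; since the invariant of~$\loc_1$ is $\clock_1 \leq p$ and all clocks grow at rate~$1$, no positive delay is possible in~$\loc_1$, so $\clock_2$ stays~$0$ for the entire stay, the guard $\clock_2 > 0$ is never met, and the $b$-edge is never taken. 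Because the $b$-edge is itself the only way back to~$\loc_1$, there is no later visit to reconsider.

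For the ``only if'' direction I would argue the contrapositive, splitting on the sign of $\param_l - \param_u$. If $\param_l > \param_u$, then no configuration in~$\loc_1$ is reachable at all: firing the edge from~$\loc_0$ needs $\clock_1 \geq \param_l$ whereas the invariant of~$\loc_0$ forces $\clock_1 \leq \param_u < \param_l$; hence $\loc_2$ is unreachable and the conjunction fails already on its first clause. If $\param_l < \param_u$, then I would exhibit a run in which $b$ occurs: let $\param_l$ time units elapse in~$\loc_0$ (legal since $\param_l \leq \param_u$), fire the edge to~$\loc_1$, then let some $\varepsilon$ with $0 < \varepsilon \leq \param_u - \param_l$ elapse in~$\loc_1$ (the invariant $\clock_1 \leq \param_u$ still holds, as $\clock_1 = \param_l + \varepsilon \leq \param_u$), at which point $\clock_2 = \varepsilon > 0$ enables the $b$-self-loop; so the conjunction fails on its second clause.

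I do not expect a genuine obstacle here; the proof is short. The only points requiring care are (i)~to pin the entry value of~$\clock_1$ into the window $[\param_l,\param_u]$ using the $\loc_0$-invariant $\clock_1 \leq \param_u$, not merely the guard, and (ii)~to remember that ``$b$ can never occur'' quantifies over \emph{all} runs, so the ``if'' direction must bound every visit to~$\loc_1$---which is immediate here because the $b$-edge is the unique back-edge to~$\loc_1$ and it is dead whenever $\param_l = \param_u$.
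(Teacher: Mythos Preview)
Your proof is correct and follows essentially the same approach as the paper's own proof: both directions rest on the observation that time can elapse in~$\loc_1$ (hence $b$ can fire) exactly when $\param_l < \param_u$, and that the $\loc_0\to\loc_1$ edge is enabled exactly when $\param_l \leq \param_u$. The paper argues the forward direction directly (from ``$\loc_2$ reachable'' deduce $\param_l\leq\param_u$, from ``$b$ never occurs'' deduce $\param_l\geq\param_u$), while you take the contrapositive and give explicit witness runs; the content is the same, yours is just spelled out in more detail.
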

\begin{proof}
	\begin{itemize}[align=left]
		\item[($\Rightarrow$)]
			Assume $\loc_2$ is reachable; hence, from the guards and invariants, we necessarily have $\param_l \leq \param_u$.
			Furthermore, $b$ can occur iff it is possible to stay a non-null duration in~$\loc_1$ iff $\param_l < \param_u$.
			Hence, $b$ cannot occur implies $\param_l \geq \param_u$.
			
		\item[($\Leftarrow$)]
			Assume $\param_l = \param_u$.
			Then no time can elapse in $\loc_1$, and hence $b$ cannot occur.
			Furthermore, $\loc_2$ is obviously reachable for any such parameter valuation.
      \qedhere
	\end{itemize}
\end{proof}

We can now prove \cref{thm-language:L/U-PTA}.


\begin{proof}
	The proof is based on the reduction from the halting problem of a two-counter machine.
	The construction encodes the two-counter machine using an L/U-PTA with 2 parameters.
	
	First, let us rewrite the two-counter machine encoding of \Cref{ss:language-preservation} using L/U-PTA as follows.
	We split the parameter $\param$ used in the PTA~$\A$ in the proof of \cref{thm-language} into two parameters $\param_l$ and $\param_u$.
	Any occurrence of $\param$ as an upper-bound (resp.\ lower-bound) in a constraint is replaced with $\param_u$ (resp.\ $\param_l$).
	Equalities of the form $\param = \clock + c$ are replaced with $\param_l \leq \clock + c \land \param_u \geq \clock + c$.
	
	Then, we plug the gadget in \Cref{figure:lu:l=u} before the initial location of our modified encoding of the proof of \cref{thm-language}; more precisely, we fuse $\loc_2$ in \cref{figure:lu:l=u} with $\sinit$ in \cref{figure:language-preservation}, and we reset all clocks in the transition from $\loc_1$ to~$\loc_2$.
	This gives a new PTA, say~$\ALU$.
	
	Let $\pval$ be the reference parameter valuation such that $\param_l = \param_u = 0$.
	For~$\pval$, the language of the gadget of \Cref{figure:lu:l=u} is $aa$.
	Recall that in the proof of \cref{thm-language}, the language of $p=0$ is $a^\omega$, and hence the language of our modified PTA $\ALU$ for $\pval$ is $aaa^\omega = a^\omega$.
	
	Suppose the two-counter machine does not halt, and consider a parameter valuation $\pval' \neq \pval$.
	If $\param_l \neq \param_u$ in $\pval'$, then from \cref{lemma:lu:l=u}, the language of the gadget for~$\pval'$ is either a single deadlocked $a$ (if $\param_l > \param_u$), or $ab^\omega | aa$ (if $\param_l < \param_u$); in both cases, the language of~$\valuate{\ALU}{\pval'}$ differs from the language of~$\valuate{\ALU}{\pval}$ (that is $a^\omega$).
	If $\param_l = \param_u$, then we fall in the situation of \cref{thm-language}: that is, there is no way for $\pval'$ to accept the same language as~$\pval$.
	Hence there exists no parameter valuation $\pval' \neq \pval$ such that the language is the same as for~$\pval$.
	
	Conversely, suppose the two-counter machine halts, and consider a parameter valuation $\pval' \neq \pval$.
	Again, if $\param_l \neq \param_u$ in $\pval'$,
	then the language necessarily differs from~$\pval$.
	If $\param_l = \param_u$, then we fall again in the situation of \Cref{ss:language-preservation}: for some $\pval' \neq \pval$ such that $\param_l = \param_u$ and $\param_l$ is large enough to encode the two counters maximum value, then the language is the same as for~$\pval$.
	Hence there exists a parameter valuation $\pval' \neq \pval$ such that the language is the same as for~$\pval$.
	
	As a consequence, the two-counter machine halts iff there exists a parameter valuation $\pval' \neq \pval$ such that the language is the same as for~$\pval$.     
\end{proof}

This reasoning can be reused to prove the undecidability for L/U-PTA of the
other problems considered in \cref{section:undecidable}. 
It~follows:
\begin{theorem}\label{theorem:all-undecidable:LUPTA}
	\begin{enumerate}
		\item The trace-preservation problem is undecidable for L/U-PTA with at least one lower-bound and at least one upper-bound parameter.
		\item The robust language- and trace-preservation problems are undecidable for L/U-PTA with at least one lower-bound and at least one upper-bound parameter.
	\end{enumerate}
\end{theorem}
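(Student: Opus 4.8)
The plan is to combine the L/U construction underlying \cref{thm-language:L/U-PTA} with the two devices already used in \cref{section:undecidable}: the location-collapsing reductions of \cref{sec-trace} (which turn language preservation into trace preservation), and the finer-granularity encoding of \cref{theorem:robust-language-preservation:undecidable:bPTA} (which turns one-shot reductions into robust ones). Recall that the L/U construction takes the two-counter-machine automaton of \cref{thm-language}, splits every occurrence of the parameter~$\param$ into a lower-bound copy~$\param_l$ and an upper-bound copy~$\param_u$ (rewriting every equality $\clock = \param + c$ as $\clock \geq \param_l + c \wedge \clock \leq \param_u + c$), and prepends the equality gadget of \cref{figure:lu:l=u}; the reference valuation throughout is~$\pval$ with $\param_l = \param_u = 0$, for which the gadget produces exactly~$aa$ (\cref{lemma:lu:l=u}).

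For the trace-preservation problem, I would start from the L/U-PTA~$\ALU$ built in the proof of \cref{thm-language:L/U-PTA} and first perform the minor surgeries used in \cref{sec-trace}: add a fresh initial location so that every run starts with a positive delay, and add a fresh clock forcing a positive delay between consecutive firings of each self-loop (in particular the $b$-self-loop of \cref{figure:lu:l=u} and the self-loop on~$s_\infty$), so that $\ALU$ has no zero-delay cycle and no sequence of more than some fixed~$k$ zero-delay transitions; none of this affects \cref{lemma:lu:l=u} nor the language at any valuation. Then I would apply \cref{proposition:PTA-collapse-locations}: the clocks it introduces are non-parametric and the diagonal constraints it introduces contain no parameter, so the resulting one-location PTA~$\ALU'$ is still an L/U-PTA with one lower-bound and one upper-bound parameter, and has the same untimed language as~$\ALU$ at every valuation. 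Since $\ALU'$ has a single location, its untimed trace set is in bijection with its untimed language, so trace preservation for~$\ALU'$ coincides with language preservation for~$\ALU'$, hence with language preservation for~$\ALU$, which is undecidable by \cref{thm-language:L/U-PTA}. Alternatively, one can use the bounded-location reduction of \cref{ss:bounded-locations}, re-deriving its gadgets with~$\param$ split into $\param_l,\param_u$ and keeping the equality gadget, thereby avoiding diagonal constraints.

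For the robust versions, I would start not from the encoding of \cref{thm-language} but from the finer one of \cref{theorem:robust-language-preservation:undecidable:bPTA}, where $c_i$ is stored as $x_i = 1 - p\cdot c_i$ and an extra counter is incremented every other step to force non-halting simulations to terminate, and rewrite it as an L/U-PTA in the same way (each $x_i = 1 + p$ becoming $x_i \geq 1 + \param_l \wedge x_i \leq 1 + \param_u$), prepending the equality gadget. If~$\calM$ halts, let~$P$ be the maximal counter value along its halting run and take~$\pval'$ with $\param_l = \param_u = 1/P$ (a smaller value if $P \leq 1$): every valuation on the segment between~$\pval$ and~$\pval'$ has $\param_l = \param_u = \mu$ with $0 \leq \mu \leq 1/P$, and each of them yields language $\{a^\omega\}$ --- for $\mu = 0$ through the $s_\infty$ branch, and for $0 < \mu \leq 1/P$ because the simulation is then faithful and reaches~\shalt{}. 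Conversely, any $\pval' \neq \pval$ along whose segment to~$\pval$ the language is preserved must satisfy $\param_l = \param_u$ --- otherwise, by \cref{lemma:lu:l=u}, the language of $\valuate{\ALU}{\pval'}$ already contains a~$b$ (when $\param_l < \param_u$) or is blocked in the gadget (when $\param_l > \param_u$) --- and then the finer encoding forces~$\calM$ to halt. This yields undecidability of the robust language-preservation problem for L/U-PTA; the robust trace-preservation problem then follows by additionally applying the one-location reduction (the L/U-PTA just built has no zero-delay cycle), or the bounded-location reduction, exactly as in \cref{theorem:robust-trace-preservation:undecidable:bPTA}.

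The step I expect to require the most care is the bookkeeping needed to keep the automaton L/U everywhere. The split of~$\param$ is transparent only on the diagonal $\param_l = \param_u$ enforced by the gadget, so every equality and every parameter-only side-condition occurring in the original encodings (such as $p \in (0,1)$, $p = 0$, and $p > 0$) must be re-expressed using $\param_l$ only as a lower bound and $\param_u$ only as an upper bound --- typically by introducing fresh clocks whose values are suitably constrained and requiring $\clock \geq \param_l$ resp.\ $\clock \leq \param_u$ --- and one must re-verify that the re-expressed automaton behaves as intended both at~$\pval$ and along the whole segment of valuations. This is routine but error-prone, and is where the bulk of the write-up goes.
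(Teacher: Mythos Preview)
Your proposal is correct and takes essentially the same approach as the paper, which only offers the one-line sketch ``this reasoning can be reused to prove the undecidability for L/U-PTA of the other problems considered in \cref{section:undecidable}''; you have simply unfolded that sketch in the natural way (split $\param$ into $\param_l,\param_u$, prepend the equality gadget, then reuse the location-collapsing of \cref{sec-trace} for traces and the fine encoding of \cref{theorem:robust-language-preservation:undecidable:bPTA} for robustness). One small slip: in the robust part you assert that ``the L/U-PTA just built has no zero-delay cycle'', but the $b$-self-loop of the equality gadget \emph{is} a zero-delay cycle whenever $\param_l<\param_u$, so you must apply here the same fresh-clock surgery you already performed in part~1 before invoking \cref{proposition:PTA-collapse-locations}.
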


\subsection{A Decidability Result for 1-Parameter L-PTA and U-PTA}\label{ss:L-U-PTA}






\looseness=-1
In~\cite{BlT09}, a bound is exhibited for both L-PTA and U-PTA (\ie{} PTA with only lower-bound, resp.\ upper-bound, parameters) such that
either all parameter valuations beyond this threshold have an accepting run, 
or none of them has. 
This provides an algorithm for synthesizing
all integer parameter valuations for which there exists an accepting run,
by considering this bound, and then enumerate all (integer) valuations below this
bound.


Unfortunately, such a bound for U-PTA (and L-PTA) does not exist for the language.
Consider the U-PTA in \cref{fig:UPTA:lg-diff}. Then,  given ${\param \in \grandn}$, the accepted language is $a^{\leq \param} b^\omega$.
Hence, it differs for all integer values of~$\param$.
For L-PTA, the situation is similar: the language of the L-PTA in \cref{fig:LPTA:lg-diff} is $a^{\param}a^* b^\omega \cup a^\omega$, \ie{} at least $\param$ times $a$ followed (if the number of $a$ is finite) by an infinite number of~$b$.

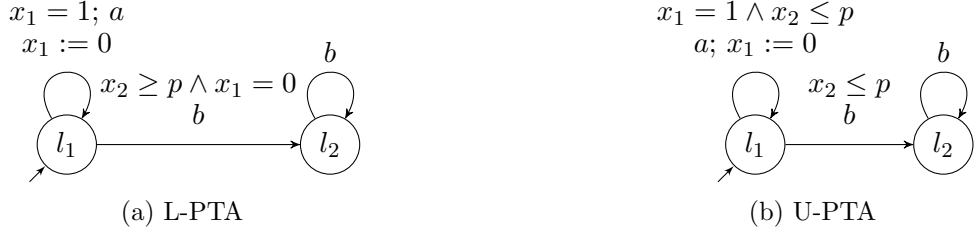
\begin{figure}
	\begin{subfigure}[b]{0.45\textwidth}
	\centering
		\begin{tikzpicture}[->, >=stealth', auto, node distance=3.5cm, thin]
			\node[rond] (l1) {$\loc_1$};
                        \draw[latex'-] (l1.-135) -- +(-135:3mm);
			\node[rond, right of=l1] (l2) {$\loc_2$};
			
			\path
				(l1) edge[loop
                          above,looseness=6,out=120,in=60] node
                            {\begin{tabular}{c}$ \clock_1 = 1 $; $a$\\[-1pt]$ \clock_1 := 0$\end{tabular}} (l1)
				(l1) edge[] node {\begin{tabular}{c}$ \clock_2
                                \geq \param \land \clock_1 = 0$\\[-1pt] $b$\end{tabular}} (l2)
				(l2) edge[loop above,looseness=6,out=120,in=60] node {$b$} (l2)
			;
		\end{tikzpicture}
		\caption{L-PTA}
		\label{fig:LPTA:lg-diff}
		
	\end{subfigure}%
	\hfill
	\begin{subfigure}[b]{0.45\textwidth}
	\centering
		\begin{tikzpicture}[->, >=stealth', auto, node distance=2.5cm, thin]
			\node[rond] (l1) {$\loc_1$};
                        \draw[latex'-] (l1.-135) -- +(-135:3mm);
			\node[rond, right of=l1] (l2) {$\loc_2$};
			
			\path
				(l1) edge[loop
                          above,looseness=6,out=120,in=60] node
                            {\begin{tabular}{c}$ \clock_1 = 1 \land \clock_2
                                \leq \param$\\[-1pt]$a$; $ \clock_1 := 0$\end{tabular}} (l1)
				(l1) edge[] node {\begin{tabular}{c}$ \clock_2 \leq \param$\\[-1pt]$b$\end{tabular}} (l2)
				(l2) edge[loop above,looseness=6,out=120,in=60] node {$b$} (l2)
			;
		\end{tikzpicture}
		
		\caption{U-PTA}
		\label{fig:UPTA:lg-diff}
	\end{subfigure}

	\caption{An L-PTA and a U-PTA for which the language differs for all $\param \in \grandn$}
	\label{fig:LUPTA:lg-diff}
\end{figure}

We now show that the trace-preservation problem is decidable for
deterministic L-PTA and U-PTA with a single integer parameter and arbitrarily
many clocks: given a reference integer parameter valuation~$\pval$, it
suffices to check $\pval+1$ and $\pval-1$ to decide whether another parameter valuation yields the same trace set as~$\pval$.

\newcommand{\enonceTheoremDecLUoneip}{
   The trace-preservation problem is \PSPACE-complete for deterministic U-PTA and deterministic L-PTA with a single integer-valued parameter.
}
\begin{theorem}\label{theorem:trace-preservation:l/u}
	\enonceTheoremDecLUoneip{}
\end{theorem}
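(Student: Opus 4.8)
The plan is to prove the two bounds separately, and for the upper bound to reduce the existential question over valuations to a fixed, small number of trace-equivalence tests between \emph{concrete} timed automata.

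\textbf{Lower bound.} I would reduce (the complement of) location reachability in timed automata---which is \PSPACE-complete, and remains so for deterministic TA since a deterministic space-bounded machine can be simulated by a deterministic TA---to the trace-preservation problem. Given a deterministic TA~$\mathcal B$ with target location~$\ell_f$, build a deterministic U-PTA~$\A$ by adding a fresh clock~$y$, reset on every edge entering~$\ell_f$, a single fresh edge out of~$\ell_f$ on a fresh letter~$c$ guarded by $y \geq 1 \wedge y \leq \param$, and a sink with a self-loop on yet another fresh letter. With the reference valuation $\pval$ mapping $\param$ to~$0$, the $c$-edge is never enabled, so $\param$ is irrelevant to the reachable behaviour: if $\ell_f$ is unreachable this stays true for all~$\pval'$ and trace preservation trivially holds, whereas if $\ell_f$ is reachable then every $\pval'$ with $\pval'(\param)\geq 1$ adds a $c$-labelled trace, so no $\pval'\neq\pval$ preserves the trace set. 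Hence reachability reduces to the complement of trace preservation, giving \PSPACE-hardness; for L-PTA a dual gadget is used, with two fresh edges out of~$\ell_f$ (guarded by $y\leq 10\wedge y\geq\param-1$ and $y\leq 10\wedge y\geq\param$, with an invariant $y\leq 10$ on~$\ell_f$) chosen so that the reference value $\pval(\param)=11$ becomes trace-isolated exactly when $\ell_f$ is reachable. (By determinism, ``trace'' can be replaced by ``untimed language'' throughout, as already observed in \cref{theorem:language-preservation-synthesis}.)

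\textbf{Upper bound.} The core is a monotonicity-type property of trace sets along the parameter axis: for a deterministic U-PTA~$\A$ with a single parameter, the map $k\mapsto \Traces(\valuate{\A}{[\param\mapsto k]})$ on $k\in\grandn$ has \emph{interval level sets}, i.e.\ if this trace set is the same at $j$ and at some $k>j$ then it is constant on $\{j,\dots,k\}$; the L-PTA case is symmetric (enlarging a lower bound only strengthens guards). Granting this, a valuation $\pval'\neq\pval$ yielding the same trace set exists iff the trace set at $\pval(\param)$ already equals the one at $\pval(\param)+1$ or (when $\pval(\param)\geq 1$) the one at $\pval(\param)-1$; it therefore suffices to instantiate~$\A$ at these two or three integer values and run a trace-equivalence test. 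Note that, in contrast with the reachability threshold of~\cite{BlT09}, there need not be finitely many distinct trace sets (see \cref{fig:UPTA:lg-diff,fig:LPTA:lg-diff}), so it really is the interval structure, not finiteness, that is used. Each test is between two concrete deterministic timed automata sharing the structure of~$\A$; it is decidable and can be carried out in \PSPACE{} by a nondeterministic polynomial-space search for a distinguishing trace: one follows a (possibly partial, lasso-shaped) run in one automaton while maintaining, in polynomial space, the zone reached along the forced action sequence in the other, using lasso detection for infinite traces and a polynomial time-lock check for maximality of finite runs. Together with the lower bound this yields \PSPACE-completeness.

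\textbf{The main obstacle} is the interval-level-set lemma, specifically its interaction with maximality. Feasibility of a fixed trace (an action sequence, which by determinism fixes the location sequence) is monotone in~$\param$ for a U-PTA; but a finite maximal run may cease to be maximal as $\param$ grows---a previously time-locked state becomes able to fire an edge---so individual finite traces may leave and re-enter the trace set, and plain monotonicity of trace sets is false. The argument I would give is that any ``hole'' at a value~$m$ in the presence region of a finite trace~$\tau$ stems from an edge, out of the last location of~$\tau$, whose guard constant is~$m$; at~$m$ (and above) the trace that instead \emph{takes} this edge becomes available while being absent below~$m$, so the trace \emph{set} is pushed to a genuinely new value and never returns---which is exactly the interval-level-set property. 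Making this precise over the parametric zone graph, uniformly for all holes that can occur, is the delicate step; the \PSPACE{} accounting for the trace-equivalence tests is routine by comparison.
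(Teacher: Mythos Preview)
Your overall strategy coincides with the paper's: reduce the existential question to testing trace equivalence at~$\pval(\param)\pm 1$ via a monotonicity-type argument for U-PTA (dually for L-PTA), obtain the \PSPACE{} upper bound by a nondeterministic polynomial-space search on the two instantiated timed automata, and get hardness by planting a parametrically guarded edge at the target location of a reachability instance. The paper's hardness gadgets are simpler than yours: a single self-loop at the target with guard $\param\leq x\leq 0$ for the L-PTA case (reference $\pval(\param)=0$) and guard $1\leq x\leq\param$ for the U-PTA case (again reference~$0$). Your two-edge L-PTA construction with reference~$11$ is unnecessarily intricate, and you should verify it respects the paper's strong determinism (at most one edge per location--action pair).

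Where your write-up is genuinely sharper than the paper's is the point you call ``the main obstacle''. The paper asserts that if the trace sets at~$\pval$ and~$\pval+1$ differ then the latter \emph{strictly contains} the former, and that this persists for all larger values. That containment claim is false: take $\ell_0\xrightarrow{a}\ell_1$ with guard $x=1$ and $\ell_1\xrightarrow{b}\ell_2$ with guard $x=2\wedge x\leq\param$; then $\Traces$ at~$1$ is~$\{a\}$ (deadlock in~$\ell_1$) while at~$2$ it is~$\{ab\}$, and neither contains the other. What both you and the paper actually need is exactly your interval-level-set lemma, and neither of you proves it. Your sketch (``a hole at~$m$ comes from an edge out of the last location; the trace that takes this edge is new at~$m$ and absent below'') does not go through as written: the extension trace $\tau\cdot(\ell_\tau,a)$ may already be fireable below~$m$ via a \emph{different} clock valuation in~$\ell_\tau$, so it need not distinguish the trace sets. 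A correct argument must show that once the set of fireable finite prefixes is constant on~$[j,k]$ (which \emph{does} follow from $\Traces$ being equal at~$j$ and~$k$, by monotonicity of prefixes in a U-PTA), the maximality status of every finite trace is also constant on~$[j,k]$; this requires reasoning about how the reachable zone after~$\tau$ and the deadlock region at~$\ell_\tau$ interact as~$\param$ varies, and is where the real work lies. The paper does not supply this either; its authors even annotate the step as ``perfectible''.
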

\begin{proof}
	Let $\A$ be a deterministic U-PTA with a single integer-valued parameter~$\param$ (the reasoning is dual for L-PTA).
	Let $\pval$ be a valuation of~$\param$.
	Construct the trace set of $\valuate{\A}{\pval}$.
	Consider the valuation $\pval+1$ (\ie{} the smallest integer valuation larger than $\pval$).
	It is known that increasing a parameter in a U-PTA can only \emph{add} behaviors.
	Suppose $\valuate{\A}{(\pval+1)}$ adds a behavior, \ie{} enables a transition that was not enabled in $\valuate{\A}{\pval}$.
    Since $\A$ is deterministic, then necessarily $\valuate{\A}{(\pval+1)}$ contains a transition $\loc_1 \Fleche{\action} \loc_2$ that did not exist in $\valuate{\A}{\pval}$.\ea{point perfectible; peut-être vaudrait-il mieux raisonner formellement par induction sur la longueur  des runs ?}\dl{moui, c'est compréhensible mais pas très joli effectivement}
	Hence the trace set of $\valuate{\A}{(\pval+1)}$ strictly contains the trace set of $\valuate{\A}{\pval}$, and the trace set of any valuation greater or equal to $\pval+1$ will again strictly contain the trace set of $\valuate{\A}{\pval}$.
	Hence, deciding whether there exists a valuation greater than $\pval$ for which the trace set is the same as $\valuate{\A}{\pval}$ is equivalent to checking whether the trace set of $\valuate{\A}{(\pval+1)}$ is the same as the trace set of $\valuate{\A}{\pval}$.
	
The proof for $\pval-1$ is symmetric.	
	Hence it is decidable whether there exists a valuation different from~$\pval$ for which the trace set is the same as $\valuate{\A}{\pval}$.

    Now, for the \PSPACE upper bound, we actually prove that testing the inclusion of the untimed language of timed automaton $\A_1$ in the untimed language of a \emph{deterministic} timed automaton $\A_2$ can be done in \PSPACE. Trace set inclusion can then be checked with untimed language inclusion.
    The proof is very similar to that of~\cite{AD94} for \emph{timed language inclusion}.
    
    \begin{lemma}\label{lemma:untimed-language-preservation-det-TA}
		The untimed language inclusion problem for deterministic timed automata is in \PSPACE.
    \end{lemma}
    \begin{proof}
    We build a non-deterministic Turing machine that guesses a path in the product of the two automata. We store on the tape the current state, \ie{} current location and region, as well as the next state and the action leading to it when they are non-deterministically guessed. We also need a counter for the maximum number of steps allowed for the path. When a new state and action are guessed, the machine verifies that the transition is indeed possible in both automata. If it is not possible in $\A_1$ then the machine rejects. If it is possible for $\A_1$ but not $\A_2$, or if the location is accepting in $\A_1$ but not in $\A_2$, we have found a witness for non-inclusion and the machine accepts. If it is possible for both automata, the machine overwrites the current state with the new state, increments the counter and proceeds to guessing a new successor, unless the counter has reached its maximum value, which is the product of the number of states of the region automata of $\A_1$ and $\A_2$. In this last case, the machine also rejects. Since, if an untimed word is accepted by $\A_1$ and not by $\A_2$, there must also be one such word with length less than the maximal value of the counter, it is clear that the machine accepts if and only if the untimed language of $\A_1$ is not included in that of $\A_2$. Finally, storing both states and actions can be done in polynomial space. As for the value of the counter, since its maximum value is exponential in the size of the problem, we need only a polynomial number of bits to store it in binary. So the procedure works in \NPSPACE, and by Savitch's Theorem~\cite{Sav70}, in \PSPACE. \dl{on pourrait détailler un peu moins}\ea{non, moi ça me va comme ça !}
    \end{proof}

    Finally, \PSPACE hardness is obtained by remarking that we can reduce
    reachability in timed automata to the trace preservation problem:
    Consider a deterministic timed automaton $\A$ (without parameter) and one of its location $\ell$.
    Add a parameter
    $p$, a fresh clock $x$, and a self-loop on $\ell$, with guard $p\leq x\leq
    0$. Finally for every transition arriving in $\ell$, add a reset of $x$.
    The added transition is therefore possible only for $p=0$.  Then, it is
    clear that (i) the resulting PTA is an L-PTA with a single parameter that
    we can consider to be integer-valued, and (ii) there exists a value for $p$
    different from $0$ with the same trace set as for $p=0$ if and only if
    $\ell$ is not reachable in $\A$. 
    
    To get a U-PTA instead of an L-PTA, we
    can use guard $1\leq x \leq p$. Again, we use the reference value $p=0$ and
    the transition is this time not possible for $p=0$ but it is for all other integer
    parameter values, so the result follows.
\end{proof}


Since we have a direct correspondence between trace sets and languages in
deterministic automata, we~get:
\begin{theorem}
\label{theorem:language-preservation:l/u}
   The language-preservation problem is \PSPACE-complete for deterministic U-PTA and deterministic L-PTA with a single integer-valued parameter.
\end{theorem}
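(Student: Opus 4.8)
The plan is to reduce the statement to the trace-preservation case, namely \cref{theorem:trace-preservation:l/u}, by observing that for a \emph{deterministic} PTA the untimed language and the trace set carry exactly the same information. Concretely, I would first note that, in a deterministic PTA~$\A$, the structure of~$\A$ (locations, and edges seen as an action-labelled graph) induces a partial transition function $\delta$ with $\loc' = \delta(\loc,\action)$ whenever $(\loc,\guard,\action,\resets,\loc')$ is an edge; this $\delta$ does not depend on any valuation, since valuating a parameter only changes the numeric constants in guards and invariants. Given a run of $\valuate{\A}{\pval}$ with action sequence $(\action_i)_i$ visiting locations $(\loc_i)_i$, determinism forces $\loc_0=\locinit$ and $\loc_{i+1}=\delta(\loc_i,\action_i)$, so the location sequence --- hence the whole untimed trace --- is uniquely recoverable from the untimed word. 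Thus ``forgetting the locations'' is a bijection from $\Traces(\valuate{\A}{\pval})$ onto $\Lang(\valuate{\A}{\pval})$, whose inverse ($w=(\action_i)_i \mapsto (\loc_i,\action_i)_i$, with $\loc_i$ as above) is given by a formula that is the \emph{same for every valuation}.

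From this I would conclude that, for any two valuations $\pval,\pval'$, one has $\Lang(\valuate{\A}{\pval})=\Lang(\valuate{\A}{\pval'})$ if, and only if, $\Traces(\valuate{\A}{\pval})=\Traces(\valuate{\A}{\pval'})$: the forward direction applies the valuation-independent inverse map to both languages, and the backward direction just projects the trace sets onto actions. Consequently, for deterministic U-PTA (resp.\ L-PTA) with a single integer parameter, the language-preservation problem and the trace-preservation problem are literally the same decision problem, so the \PSPACE{} upper bound is inherited directly from \cref{theorem:trace-preservation:l/u} (indeed \cref{lemma:untimed-language-preservation-det-TA} already works at the level of untimed languages).

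For \PSPACE-hardness I would reuse the reduction from location reachability in plain (deterministic, parameter-free) timed automata built in the proof of \cref{theorem:trace-preservation:l/u}: the PTA produced there is a deterministic L-PTA (resp.\ U-PTA) with a single integer-valued parameter, so by the equivalence above ``there is a valuation different from the reference one with the same trace set'' coincides with ``there is one with the same untimed language'', and \PSPACE-hardness of reachability for timed automata transfers. The only point that needs a little care is the first step --- checking that the ``language $\leftrightarrow$ trace set'' correspondence is genuinely valuation-independent, in particular that it behaves correctly on maximal (possibly blocking) runs; but since this correspondence only uses the structural map $\delta$ and the action sequence of a run, never the clock values, it goes through uniformly, and everything else is immediate from \cref{theorem:trace-preservation:l/u}.
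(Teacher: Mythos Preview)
Your proposal is correct and follows essentially the same approach as the paper: the paper's proof is a single sentence noting the ``direct correspondence between trace sets and languages in deterministic automata'' and then invokes \cref{theorem:trace-preservation:l/u}. Your write-up simply spells out that correspondence (via the structural transition function~$\delta$, its valuation-independence, and its behaviour on maximal runs) and makes explicit that both the upper bound and the hardness reduction transfer verbatim from the trace case.
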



\cref{theorem:language-preservation:l/u} cannot be lifted to the language for
non-deterministic L- and U-PTA. Consider the U-PTA in
\cref{fig:UPTA:counterex:language}: 
for $\param = 1$, the language is $a
b^\omega$. For $\param = 2$, the language is $a b^\omega | a$, which is
different from $\param = 1$. But then for $\param \geq 3$, the language is
again $a b^\omega$. Hence testing only $\pval + 1 = 2$ is not enough, and the
decidability in this case remains open.

Also note that these non-deterministic PTAs are \emph{deterministic} with the usual definition of determinism (that two outgoing guards can have the same action label provided the guards are mutually disjoint---which is the case here with $x=1$ and $x=2$), which justifies our stronger definition of non-determinism.\label{newtext:nondeterminism}

\begin{figure}
	\begin{subfigure}{0.45\textwidth}
	\centering
		\begin{tikzpicture}[->, >=stealth', auto, node distance=2.5cm, thin]
			\node[rond, initial] (l1) {$\loc_1$};
			\node[rond] (l2) at +(30:2.3cm) {$\loc_2$};
			\node[rond] (l3) at +(-30:2.3cm) {$\loc_3$};
			\node[rond, right of=l3,node distance=2.8cm] (l4) {$\loc_4$};
			
			\path
				(l1) edge[bend left] node[above left=-3mm] {\begin{tabular}{c}$ \clock = 1 \land \clock \leq \param $\\$a$\end{tabular}} (l2)
				(l2) edge[loop right,looseness=6,out=-30,in=30] node {$b$} (l2)
				(l1) edge[bend right] node[below left] {\begin{tabular}{c}$ \clock = 2 \land \clock \leq \param $\\$a$\end{tabular}} (l3)
				(l3) edge[] node[below=2mm] {\begin{tabular}{c}$ \clock = 3 \land \clock \leq \param $\\$b$\end{tabular}} (l4)
				(l4) edge[loop right,looseness=6,out=-30,in=30] node {$b$} (l4)
			;
		\end{tikzpicture}
		
		\caption{Language}
		\label{fig:UPTA:counterex:language}
	\end{subfigure}
	\hfill
	\begin{subfigure}{0.45\textwidth}
	\centering
		\begin{tikzpicture}[->, >=stealth', auto, node distance=2.5cm, thin]
			\node[rond, initial] (l1) {$\loc_1$};
			\node[rond, right of=l1] (l2) {$\loc_2$};
			\node[rond, right of=l2] (l3) {$\loc_3$};
			
			\path
				(l1) edge[bend right, below] node {\begin{tabular}{c}$ \clock = 2 \land \clock \leq \param $\\[-1mm]$a$\end{tabular}} (l2)
				(l1) edge[bend left, above] node {\begin{tabular}{c}$ \clock = 1 \land \clock \leq \param $\\$a$\end{tabular}} (l2)
				(l2) edge[bend right, below] node {\begin{tabular}{c}$ \clock = 3 \land \clock \leq \param  $\\$b$\end{tabular}} (l3)
				(l2) edge[bend left, above] node {\begin{tabular}{c}$ \clock = 1 $\\$b$\end{tabular}} (l3)
				(l3) edge[loop above,,looseness=6,out=120,in=60] node {$b$} (l3)
			;
                        \path[use as bounding box] (0,-2.3);
		\end{tikzpicture}
		
		\caption{Traces}
		\label{fig:UPTA:counterex:traces}
	\end{subfigure}

	\caption{Counterexamples for the method to decide the trace preservation emptiness in U-PTA}
	\label{fig:UPTA:counterex}
\end{figure}
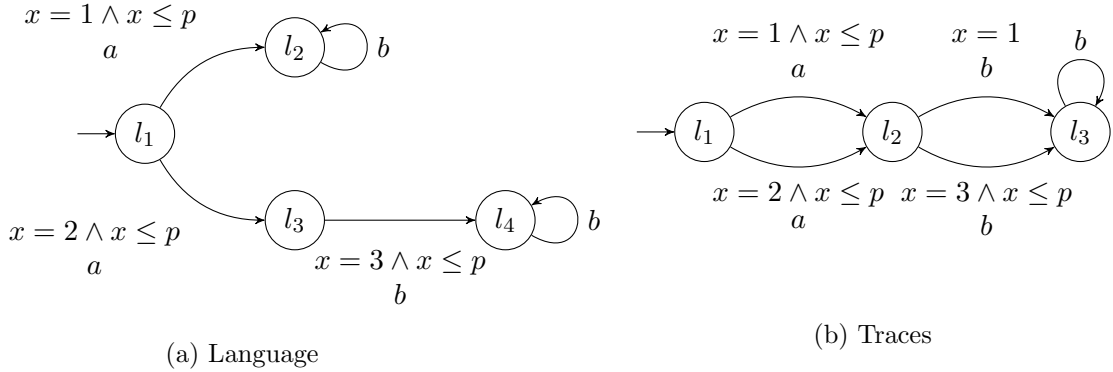

Similarly, \cref{theorem:trace-preservation:l/u} cannot be lifted to the trace preservation in non-deterministic PTA, as witness in the U-PTA in \cref{fig:UPTA:counterex:traces}.

%


\section{Conclusion and Perspectives}\label{section:conclusion}

In this paper, we studied the decidability of the language- and
trace-preservation problems in parametric timed automata. We summarize
in \cref{table:summary} our (un)\nobreak\hskip0pt\relax decidability results for PTA and its
subclasses with arbitrarily many clocks; a red italicized cell denotes undecidability while a green plain cell denotes decidability.
(1ip-dL\&U-PTA stand for deterministic L-PTA, resp.\ U-PTA, with one integer-valued parameter;
	L\&U-PTA stand for L-PTA and U-PTA with rational-parameters or more than one clock or without our determinism assumption;
	bPTA stand for PTA with bounded parameters;
	d1cPTA stands for deterministic 1-clock PTA.)
We also showed
that both problems are decidable for deterministic PTA with a single
clock.  ``N/A'' indicates a problem that is not relevant for this
class (robust versions of our problems are not so relevant for
integer-valued parameters).



\newcommand\crefabbr[1]{%
\begingroup
	\crefname{theorem}{\text{Th.}}{\text{Th.}}
	\crefname{corollary}{\text{Cor.}}{\text{Cor.}}
	\cref{#1}
\endgroup%
}

\newcommand{\colCellDec}{\cellcolor{green!20}}
\newcommand{\colCellDecNous}{\cellcolor{green}}
\newcommand{\colCellUndec}{\cellcolor{red!20}\em}
\newcommand{\colCellUndecNous}{\cellcolor{red}\em}
\newcommand{\colCellOpen}{\cellcolor{yellow!20}open}
\newcommand{\colCellNA}{\cellcolor{black!20}N/A}
\newcommand{\cellHeader}[1]{\cellcolor{blue!40}\textbf{#1}}

\begin{table}[b]
	\centering
	\scriptsize

	\setlength{\tabcolsep}{.2em}
	
	\begin{tabular}{| c | c | c | c | c | c | c | c | c |}
		\hline
		\cellHeader{Preservation} & \cellHeader{1ip-dL\&U-PTA} & \cellHeader{L\&U-PTA} & \cellHeader{bL/U-PTA} & \cellHeader{L/U-PTA} & \cellHeader{d1cPTA} & \cellHeader{bPTA} & \cellHeader{PTA} \\
		
		\hline
		
		\cellHeader{Language} & \colCellDecNous{}\crefabbr{theorem:language-preservation:l/u} & \colCellOpen{} & \colCellUndecNous{}\crefabbr{theorem:all-undecidable:LUPTA} & \colCellUndecNous{}\crefabbr{thm-language:L/U-PTA} & \colCellDecNous{}\crefabbr{theorem:language-preservation-synthesis} & \colCellUndecNous{}\crefabbr{theorem:robust-language-preservation:undecidable:bPTA} & \colCellUndecNous{}\crefabbr{thm-language} \\
		
		\hline
		
		\cellHeader{Trace} & \colCellDecNous{}\crefabbr{theorem:trace-preservation:l/u} & \colCellOpen{} & \colCellUndecNous{}\crefabbr{theorem:all-undecidable:LUPTA} & \colCellUndecNous{}\crefabbr{theorem:all-undecidable:LUPTA} & \colCellDecNous{}\crefabbr{theorem:trace-preservation-synthesis} & \colCellUndecNous{}\crefabbr{theorem:robust-trace-preservation:undecidable:bPTA} & \colCellUndecNous{}\crefabbr{theorem:trace-preservation-undec} \\
		
		\hline
		
		\cellHeader{Robust language} & \colCellNA{} & \colCellOpen{} & \colCellUndecNous{}\crefabbr{theorem:all-undecidable:LUPTA} & \colCellUndecNous{}\crefabbr{theorem:all-undecidable:LUPTA} & \colCellDecNous{}\crefabbr{theorem:language-preservation-synthesis} & \colCellUndecNous{}\crefabbr{theorem:robust-language-preservation:undecidable:bPTA} & \colCellUndecNous{}\crefabbr{theorem:robust-language-preservation:undecidable:bPTA} \\
		
		\hline
		
		\cellHeader{Robust trace} & \colCellNA{} & \colCellOpen{} & \colCellUndecNous{}\crefabbr{theorem:all-undecidable:LUPTA} & \colCellUndecNous{}\crefabbr{theorem:all-undecidable:LUPTA} & \colCellDecNous{}\crefabbr{theorem:trace-preservation-synthesis} &  \colCellUndecNous{}\crefabbr{theorem:robust-trace-preservation:undecidable:bPTA} & \colCellUndecNous{}\crefabbr{theorem:robust-trace-preservation:undecidable:bPTA}\\
		\hline
	\end{tabular}

	\caption{Decidability of preservation emptiness problems for subclasses of PTA}
    \label{table:summary}
\end{table}

\paragraph{Future Works.}
First, we used an \emph{ad-hoc} encoding of a 2-counter machine for
our  undecidability proofs, using four parametric clocks.
In~contrast, a new encoding of a 2-counter machine using PTA was proposed recently in \cite{BBLS15} to show the undecidability of the EF-emptiness problem for integer-valued parameters, and that makes use of only three parametric clocks.
It remains open whether the (non-robust) problems considered in our manuscript could be proved undecidable with as few as three parametric clocks in the case of integer-valued parameters.
In addition, it remains to be proved whether the trace preservation problem is undecidable for a bounded number of clocks and without diagonal constraints.\ea{rajouté ça (24/1/19)}

Concerning the decidability for a single clock, it would be
interesting to study whether this result can be adapted to PTA with
one parametric clock and arbitrarily many non-parametric clocks,
following the corner-point abstraction recently used in the
construction of~\cite{BBLS15} to show the decidability of the
EF-emptiness problem.

\smallskip

Language-preservation problems have been considered
in~\cite{Sankur-MFCS11,AHJR12} in the setting of guard enlargement
(for timed automata and time Petri nets): this is a weaker setting, in
which the single parameter~$\epsilon$ can only be used under the forms
$x\geq a-\epsilon$ and $x\leq b+\epsilon$. This makes the robust
version of the language-preservation problem decidable. In a similar
flavor, time-abstract simulation of shrunk timed
automata~\cite{fsttcs2011-SBM} also shares commonalities with the
problem we studied in the present paper. Identifying larger classes of PTA
with decidable language-preservation problems in the light of these
results is a relevant direction for future research.

Finally, we showed in~\cite{AL18} that some of the results presented in this paper extend to the smaller class of parametric event-recording automata~\cite{ALin17}, \ie{}, the language preservation problem remains undecidable in that setting.
It remains however to prove whether the trace preservation problem is decidable for this subclass.

\section*{Acknowledgment}
\noindent
We are grateful to 
Olivier~H.~Roux 
and to the anonymous reviewers for useful comments.


	\newcommand{\CCIS}{Communications in Computer and Information Science}
	\newcommand{\ENTCS}{Electronic Notes in Theoretical Computer Science}
	\newcommand{\FAC}{Formal Aspects of Computing}
	\newcommand{\FI}{Fundamenta Informaticae}
	\newcommand{\FMSD}{Formal Methods in System Design}
	\newcommand{\IJFCS}{International Journal of Foundations of Computer Science}
	\newcommand{\IJSSE}{International Journal of Secure Software Engineering}
	\newcommand{\IPL}{Information Processing Letters}
	\newcommand{\JAIR}{Journal of Artificial Intelligence Research}
	\newcommand{\JLAP}{Journal of Logic and Algebraic Programming}
	\newcommand{\JLAMP}{Journal of Logical and Algebraic Methods in Programming} 
	\newcommand{\JLC}{Journal of Logic and Computation}
	\newcommand{\LMCS}{Logical Methods in Computer Science}
	\newcommand{\LNCS}{Lecture Notes in Computer Science}
	\newcommand{\RESS}{Reliability Engineering \& System Safety}
	\newcommand{\STTT}{International Journal on Software Tools for Technology Transfer}
	\newcommand{\TCS}{Theoretical Computer Science}
	\newcommand{\ToPNoC}{Transactions on Petri Nets and Other Models of Concurrency}
	\newcommand{\TSE}{{IEEE} Transactions on Software Engineering}

\bibliographystyle{alpha}
\bibliography{bibexport}


\ifcomments
  \addtocontents{toc}{\endgroup}
\fi

\ifdefined\WithReply
	\input{letter2.tex}
\fi

\end{document}